\definecolor{darkblue}{rgb}{0.0,0.0,0.2}
\definecolor{gray}{gray}{0.5}
\definecolor{lightred}{rgb}{1,0.6,0.6}
\definecolor{darkgreen}{rgb}{0,0.5,0}
\newcommand{\Comments}{0}
\newcommand{\mynote}[2]{\ifnum\Comments=1\textcolor{#1}{#2}\fi}
\newcommand{\mytodo}[2]{\ifnum\Comments=1
  \todo[linecolor=#1!80!black,backgroundcolor=#1,bordercolor=#1!80!black]{#2}\fi}
\newcommand{\btw}[1]{}
\newcommand\numberthis{\addtocounter{equation}{1}\tag{\theequation}}
\newcommand{\restatehack}[1]{}
\newcommand{\Tr}{\mathrm{Tr}}
\newcommand{\Diag}{\mathrm{Diag}}
\renewcommand{\C}{\mathbb{C}}
\newcommand{\E}{\mathop{\mathbb{E}}}
\newcommand{\N}{\mathbb{N}}
\renewcommand{\P}{\mathcal{P}}
\newcommand{\R}{\mathcal{R}}
\newcommand{\V}{\mathcal{V}}
\newcommand{\X}{\mathcal{X}}
\newcommand{\Y}{\mathcal{Y}}
\newcommand{\simplex}{\Delta_\Y}
\newcommand{\convhull}{\mathrm{conv}}
\newcommand{\dom}{\mathrm{dom}}
\newcommand{\ones}{\mathds{1}}
\def\reals{\mathbb{R}}
\def\extreals{\mathbb{\overline{R}}}
\newcommand{\argmin}{\mathop{\mathrm{argmin}}}
\newcommand{\argmax}{\mathop{\mathrm{argmax}}}
\newcommand{\inprod}[2]{\left\langle #1, #2 \right\rangle}
\newcommand{\toto}{\rightrightarrows}
\newcommand{\elic}{\mathrm{elic}}
\newcommand{\lin}{\mathrm{Lin}}
\newcommand{\extlin}[1]{\mathrm{Lin}(#1\to\extreals)}
\newcommand{\Pos}{\mathrm{Pos}}
\newcommand{\Herm}{\mathrm{Herm}}
\newcommand{\Dens}{\mathrm{Dens}}
\newcommand{\Meas}{\mathrm{Meas}}
\newcommand{\extHerm}{\overline{\mathrm{Herm}}}
\newcommand{\mufix}{{\mu^{\diamond}}}
\newcommand{\mufixy}{{\mu_y^{\diamond}}}
\newcommand{\Sfix}{{S^{\diamond}}}
\newcommand{\Gammafix}{{\Gamma^{\diamond}}}
\newcommand{\Pfix}{{\P^{\diamond}}}
\newcommand{\muspec}{\mu^{\mathrm{spec}}}
\newcommand{\extsub}{\overline{\partial}}
\newcommand{\id}{{\Gamma_{\mathrm{id}}}}
\newcommand{\idfix}{{\Gamma_{\mathrm{id}}^{\diamond}}}
\newcommand{\brier}{s_{\textrm{Brier}}}
\newcommand{\elicq}{{\elic_Q}}
\newcommand{\elicfix}{{\elic_{\Pfix}}}
\newcommand{\nsphere}{{\mathcal{S}^{n-1}}}
\newcommand{\subselect}[3]{\ensuremath{\{#1_{#2}\}\in\extsub #3}}
\newcommand{\vstar}[2]{{#1^*}_{\hspace*{-.45em}#2}}
\renewcommand{\vvstar}[2]{#1_{#2}\vstar{#1}{#2}}
\title{Quantum Information Elicitation}
\author{Rafael Frongillo}
\begin{abstract}
  In the classic scoring rule setting, a principal incentivizes an agent to truthfully report their probabilistic belief about some future outcome.
  This paper addresses the situation when this private belief, rather than a classical probability distribution, is instead a quantum mixed state.
  In the resulting quantum scoring rule setting, the principal chooses both a scoring function and a measurement function, and the agent responds with their reported density matrix.
  Several characterizations of quantum scoring rules are presented, which reveal a familiar structure based on convex analysis.
  Spectral scores, where the measurement function is given by the spectral decomposition of the reported density matrix, have particularly elegant structure and connect to quantum information theory.
  Turning to property elicitation, eigenvectors of the belief are elicitable, whereas eigenvalues and entropy have maximal elicitation complexity.
  The paper concludes with a discussion of other quantum information elicitation settings and connections to the literature.
\end{abstract}
\begin{document}

\begin{titlepage}
\maketitle
\end{titlepage}

\section{Introduction}

In the field of information elicitation, one is interested in scoring mechanisms which incentivize self-minded agents to reveal their private information.
Perhaps the most fundamental scenario is the scoring rule setting, where a principal wishes to incentivize a single agent to reveal their private belief about the probability of a future outcome.
Specifically, given a reported probability distribution $p\in\simplex$, and the realized outcome $y\in\Y$, the principal wishes to design a \emph{proper} scoring rule $S(p,y)$, such that the agent will maximize their expected score by reporting $p$ to be their true belief.
This basic setting forms the foundation of more complex settings, such as property elicitation, where a summary statistic of the belief is sought, and multi-agent settings like prediction markets, wagering mechanisms, and peer prediction.

In many ways, the fields of information elicitation and information theory grew up together.
Just two years after Shannon's groundbreaking 
work, Brier introduced the first proper scoring rule in 1950~\cite{brier1950verification}.
In another two years, I.J.\ Good introduced the log score~\cite{good1952rational}, apparently unaware that its expected score function is none other than (negative) Shannon entropy.
In 1956, in a paper communicated by Shannon, McCarthy made this connection between information theory and information elicitation explicit, already observing the prominent role that convexity plays~\cite{mccarthy1956measures}.

Since the turn of the century, however, another type of information is becoming increasingly relevant in theoretical computer science~\cite{lo1998introduction,nielsen2002quantum,aaronson2013quantum}, machine learning~\cite{aaronson2007learnability,koolen2011learning,qi2013quantum,gao2018experimental}, and game theory~\cite{lassig2002quantum,zhang2012quantum,bostanci2021quantum}: quantum information.
As quantum information theory is a thriving field,
it is natural to ask, what of quantum information elicitation?
That is, what if the private infomation to be elicited is not a classically random distribution, but instead a quantum mixed state?
This situation arises, for example, when applying learning algorithms to data generated from quantum states~\cite{qi2013quantum,gao2018experimental}.
One interesting complication is that, from a physical standpoint, any sample ``drawn'' from a quantum mixed state must be mediated by a choice of measurement.
Roughly speaking, the measurement procedure takes a particular projection of the density matrix onto a classical distribution over some set of outcomes, from which the actual outcome of the measurement is drawn (Fig.~\ref{fig:overview}).
How then can the principal incentivize an agent to reveal their entire mixed state, when given only measurement access?

\begin{figure}[b]
  \definecolor{belief}{rgb}{0,0,.5}
  \definecolor{report}{rgb}{.3,.5,0}
  \definecolor{mixed}{rgb}{.3,.5,.5}
  
  \begin{tikzpicture}
    \node (dude) at (0,0) {\Strichmaxerl[4]};
    \node (score) at (3,0) {$s(\,\textcolor{report}{p'},\textcolor{belief}{y}\,)$};
    \node[draw,
    align=center,
    cloud callout,
    cloud puffs = 10,
    aspect=2,
    callout absolute pointer={(dude.east)+(0,0.3)}] (p) at (1.5,0.8) {$\textcolor{belief}{p}$};
    \node[draw,
    align=center,
    ellipse callout,
    callout absolute pointer={(dude.east)}] (pp) at (1.5,-0.8) {$~~~~~\textcolor{report}{p'}~~~~~$};
    \draw [thick,->,color=belief,
    decorate, decoration={
      snake,
      segment length=4,
      amplitude=0.8,
      post=lineto,
      post length=2pt
    }]  (p.center)+(5pt,-1pt) to [out=0,in=90] ($(score.center)+(0.3,0.2)$);
    \draw [thick,->,color=report] (pp.center)+(5pt,0pt) to [out=30,in=-90] ($(score.center)+(-0.1,-0.2)$);
  \end{tikzpicture}
  \hspace{0.8cm}
  \begin{tikzpicture}
    \node (dude) at (0,0) {\Strichmaxerl[4]};
    \node (inprod) at (3.5,0.4) {$\!\!\inprod{\mu(\textcolor{report}{\rho'})}{\textcolor{belief}{\rho}}\!\!$};
    \node (score) at (3.2,-0.5) {$s(\textcolor{report}{\rho'},\textcolor{mixed}{y})$};

    \node[draw,
    align=center,
    cloud callout,
    cloud puffs = 10,
    aspect=2,
    callout absolute pointer={(dude.east)+(0,0.3)}] (p) at (1.5,0.8) {$\textcolor{belief}{\rho}$};
    \node[draw,
    align=center,
    ellipse callout,
    callout absolute pointer={(dude.east)}] (pp) at (1.5,-0.8) {$~~~~~\textcolor{report}{\rho'}~~~~~$};

    \draw [thick,->,color=belief] (p.center)+(5pt,0pt) to [out=0,in=90] ($(inprod.center)+(0.5,0.2)$);
    \draw [thick,->,color=report] (pp.center)+(5pt,2pt) to [out=40,in=100] ($(inprod.center)+(-0.1,0.2)$);

    \draw[thick,color=mixed,
    decoration={
      brace,
      raise=5pt
    },
    decorate] (inprod.east) -- (inprod.west)
    node [pos=0.5,anchor=north,yshift=0cm] (a) {}; 
    \draw [thick,->,color=mixed,
    decorate, decoration={
      snake,
      segment length=4,
      amplitude=0.8,
      post=lineto,
      post length=2pt
    }]  (a) to [out=-90,in=90] ($(score.center)+(0.3,0.2)$);
    \draw [thick,->,color=report] (pp.center)+(5pt,-1pt) to [out=0,in=-90] ($(score.center)+(-0.1,-0.2)$);
  \end{tikzpicture}
  \caption{Information elicitation in classical and quantum settings.
    In the classical setting (left), the principal chooses a scoring function $s(p',y)$ that depends on the report $p'$ and observed outcome $y$.
    The outcome $y$ is drawn from the underlying distribution, which from the agent's point of view is their belief $p$.
    In the quantum setting (right), the principal again chooses a scoring function $s(\rho',y)$ which depends on the reported quantum mixed state $\rho'$ and outcome $y$, but also specifies a measurement function $\mu(\rho')$.
    The outcome $y$ is drawn from the distribution $\inprod{\mu(\rho')}{\rho}$, which is determined by both the chosen measurement and the underlying quantum mixed state, which from the agent's point of view is their belief $\rho$.}
  \label{fig:overview}
\end{figure}
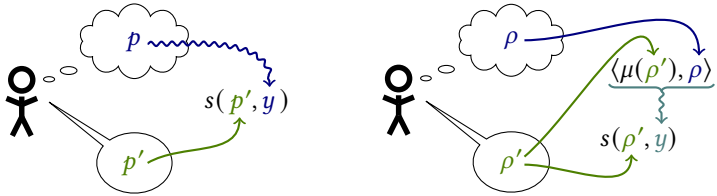

This paper introduces quantum information elicitation, presenting a model and several results to answer these and other questions.
In the basic model considered, a quantum scoring rule (or \emph{quantum score}), the principal chooses both a scoring function $s$ and a measurement function $\mu$, and the agent responds with their reported density matrix $\rho'$ (Fig.~\ref{fig:overview}).
The most general model allows the principal to choose the measurement as a function of the report $\rho'$.
Interestingly, this flexibility is not required to design a truthful quantum score:
if the principal fixes a \emph{tomographically complete} measurement, then the projection from the true mixed state $\rho$ onto a classical distribution described above is of full rank, and one can use a proper scoring rule to elicit the outcome distribution and therefore recover $\rho$.
Despite this straightforward reduction to the classical elicitation setting, several interesting characterization questions remain for particular classes of quantum scores; for example, the most common type of measurement, based on an orthonormal basis, is never tomographically complete.
It is also natural to ask about eliciting \emph{properties} (e.g.,\ summary statistics) of a density matrix, such as its eigenvectors or eigenvalues.

This work leverages the affine score framework of \citet{frongillo2021general} to give several characterization results, all of which echo a familiar structure based on convex analysis (\S~\ref{sec:quant-scor-rules}).
One interesting class of quantum scores are spectral scores, wherein the measurement function is given by the spectral decomposition of the report $\rho'$; spectral scores have particularly elegant structure and connections to quantum information theory (\S~\ref{sec:spectral-scores}).
Turning to property elicitation, we find that one can elicit eigenvectors of a density matrix, but not its eigenvalues or entropy (\S~\ref{sec:properties}).
In fact, the \emph{elicitation complexity} of eigenvalues and entropy, a measure of how hard it is to elicit them, is as large as that of the full density matrix itself (\S~\ref{sec:elic-complex}).
We conclude in \S~\ref{sec:discussion} with a discussion of other quantum information elciatition settings and connections to the literature.

\subsection{Prior work}

The most relevent prior work comes from the literature on the statistical learning of quantum states, e.g., practical tomography via machine learning \cite{granade2012robust,granade2016practical,qi2013quantum,shangnan2021quantum}.
We will discuss several connections to this literature in \S~\ref{sec:discussion}.
In preparing the manuscript, the author because aware of two such works which are especially relevant.
The first, \citet{blume2006accurate}, details a principal--agent setting for the elicitation of a quantum mixed state, and derives some of the same results about spectral scores which we discuss in \S~\ref{sec:spectral-scores}.
A follow-up work, \citet{blume2010optimal}, gives an informal argument for our main characterization (Theorem~\ref{thm:main-char}).
We will discuss the specific relation to these two works in context.

\section{Background}

\subsection{Convex analysis}
\label{sec:convex-analysis}

We will work with convex functions on both $\reals^n$ and $\C^n$; see \citet{rockafellar1997convex} and \citet{zalinescu2002convex} for general background.
Define the extended real line $\extreals := \reals \cup \{\infty,-\infty\}$.
For brevity, for a convex subset $C\subseteq\V$ of a real vector space $\V$, we write $G:C\to\reals$ to denote a (proper) convex function $G:\V\to\extreals$ with $\dom(G) = C$, i.e., such that $G(x) \in \reals$ for $x\in C$ and $G(x) = \infty$ for $x\notin C$.

For real vector spaces $\V_1,\V_2$, let $\lin(\V_1\to\V_2)$ denote the set of linear functions from $\V_1$ to $\V_2$.
Given a point $x\in C$, we say $d_x \in \lin(\V\to\reals)$ is a \emph{subgradient of $G$ at $x$} if for all $x'\in C$ we have $G(x') \geq G(x) + \inprod{d_x}{x'-x}$; here and throughout we denote the application of $d_x$ by an inner product.
The set of subgradients of $G$ at $x$ is denoted $\partial G(x)$.

While $\extreals$ does not form a real vector space, to capture scoring rules such as the log score (\S~\ref{sec:class-scor-rules}), we will need to define the set $\lin(\V \to \extreals)$ of ``extended'' linear functions.
The reader interested in these details and definitions can refer to \S~\ref{sec:app-ext-reals}.
We may define an \emph{extended subgradient} exactly as a subgradient but allowing $d_x \in \lin(\V\to\extreals)$.
We will denote the set of extended subgradients of $G$ at $x$ by $\extsub G(x)$.
In this context, a \emph{selection of subgradients} is a set $\{d_x\in\extsub G(x) \mid x \in C\}$ such that for all $x,z\in C$ we have $\inprod{d_x}{z-x} \in \reals\cup\{-\infty\}$.
We will write this condition more succinctly as \subselect{d}{x}{G}.

\subsection{Classical scoring rules}
\label{sec:class-scor-rules}

Perhaps the most fundamental information elicitation setting is a proper scoring rule.
Here a principal wishes to elicit an agent's probabilistic belief about a future outcome to be observed.

\begin{definition}
  Let $\Y$ be a finite set.
  Given $\P \subseteq \simplex$, a \emph{(classical) scoring rule} is a function $\hat s:\P\times\Y\to \extreals$.
  We write the expected score as $\hat s(q;p) := \E_{Y\sim p} \hat s(q,Y)$.
  We say $\hat s$ is \emph{proper} if we have $\hat s(q;p) \leq \hat s(p;p)$ for all $p,q \in \P$, and \emph{strictly proper} if this inequality is strict whenever $q \neq p$.
  Finally, $\hat s$ is \emph{regular} if for all $p,q\in\P$ we have $\hat s(q;p) \in \reals \cup \{-\infty\}$ and $\hat s(p;p)\in\reals$.
\end{definition}

Let $\ones_y \in\simplex$ denote the point mass on $y$, i.e., with $(\ones_y)_y = 1$ and $(\ones_y)_{y'} = 0$ for $y'\neq y$.
\begin{theorem}[\citet{frongillo2021general,gneiting2007strictly}]
  \label{thm:scoring-rule-char}
  Given $\P \subseteq \simplex$, a regular scoring rule $\hat s:\P\times\Y\to\extreals$ is (strictly) proper if and only if there exists a (strictly) convex function $G:\convhull(\P)\to\reals$ and selection of subgradients \subselect{dG}{p}{G} such that
  \begin{equation}
    \label{eqn:sr}
    \hat s(p,y) = G(p) + \inprod{dG_p}{\ones_y - p}~.
  \end{equation}
\end{theorem}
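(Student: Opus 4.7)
The plan is to prove the two directions of Theorem~\ref{thm:scoring-rule-char} separately, with the forward (sufficiency) direction being a short calculation using the subgradient inequality, and the reverse (necessity) direction using the standard trick of defining $G(p) := \hat s(p;p)$ and exhibiting each score row as an affine minorant.

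For the forward direction, assume $\hat s(p,y) = G(p) + \inprod{dG_p}{\ones_y - p}$ for a convex $G$ and selection $\{dG_p\} \in \extsub G$. First I would compute, for any $p,q \in \P$,
\begin{equation*}
  \hat s(q;p) = \E_{Y\sim p}\bigl[G(q) + \inprod{dG_q}{\ones_Y - q}\bigr] = G(q) + \inprod{dG_q}{p - q},
\end{equation*}
pulling the extended-linear $dG_q$ through the (finite) expectation; the selection condition \subselect{dG}{\cdot}{G} is exactly what guarantees this value lies in $\reals\cup\{-\infty\}$, matching regularity. The (extended) subgradient inequality then gives $G(p) \geq G(q) + \inprod{dG_q}{p-q} = \hat s(q;p)$, and since $\hat s(p;p) = G(p)$, properness follows. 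Strict convexity of $G$ yields strict properness by the same chain with strict inequality for $q\neq p$.

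For the reverse direction, assume $\hat s$ is regular and (strictly) proper. I would define $G:\convhull(\P)\to\reals$ as the pointwise supremum of affine functions
\begin{equation*}
  G(p) \;:=\; \sup_{q\in\P}\; \sum_{y\in\Y} p_y\, \hat s(q,y),
\end{equation*}
so $G$ is automatically convex and lower semicontinuous on $\reals^\Y$. Properness gives $G(p) = \hat s(p;p)$ for $p\in\P$, which is finite by regularity; finiteness on $\convhull(\P)$ follows from the convex-combination bound $G(\sum_i \alpha_i p_i) \leq \sum_i \alpha_i G(p_i)$ together with the fact that $G > -\infty$ because $G \geq \hat s(p;p)$ on $\P$ (any extension issue at relative boundary points is handled by lower semicontinuity). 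Next, for each $q\in\P$ I would define $dG_q \in \lin(\reals^{\Y}\to\extreals)$ on the standard basis by $\inprod{dG_q}{\ones_y} := \hat s(q,y) \in \extreals$ and extend by linearity in the sense of \S~\ref{sec:app-ext-reals}. Properness then reads $\inprod{dG_q}{p} = \hat s(q;p) \leq \hat s(p;p) = G(p)$, which rearranges into the extended subgradient inequality at $q$, so $dG_q \in \extsub G(q)$; regularity gives the selection condition $\inprod{dG_q}{p-q}\in\reals\cup\{-\infty\}$. The identity in \eqref{eqn:sr} is then a one-line check:
\begin{equation*}
  G(q) + \inprod{dG_q}{\ones_y - q} = G(q) + \hat s(q,y) - \hat s(q;q) = \hat s(q,y).
\end{equation*}
Strict properness of $\hat s$ makes the subgradient inequality strict whenever $q\neq p$, which (together with the fact that every point of $\convhull(\P)$ has an expressible subgradient) yields strict convexity of $G$.

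The main obstacle is the careful handling of the extended-real setting: ensuring that $dG_q$ is a well-defined element of $\lin(\reals^\Y \to \extreals)$ when some scores are $-\infty$ (as in the log score), that the selection condition \subselect{dG}{q}{G} is preserved, and that $G$ stays in $\reals$ on all of $\convhull(\P)$ rather than dropping to $-\infty$ at relative boundary points. All of this is the content encapsulated in the extended subgradient machinery from \S~\ref{sec:app-ext-reals}, and the proof's rigor reduces to invoking the correct properties there; modulo that, both implications are essentially a supremum-of-affine-minorants argument.
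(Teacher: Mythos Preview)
The paper does not prove Theorem~\ref{thm:scoring-rule-char}; it is quoted as a background result from \citet{frongillo2021general,gneiting2007strictly} with no proof given, so there is nothing in the paper to compare your argument against. Your outline is the standard Savage/Gneiting--Raftery argument and matches what those references do.

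One genuine technical gap in your necessity direction: the sup-construction $G(p)=\sup_{q\in\P}\hat s(q;p)$ need not be real-valued on $\convhull(\P)\setminus\P$. For instance, take $\Y=\{1,2\}$, $\P=\{\ones_1,\ones_2\}$, and the log score; then $G(\tfrac12,\tfrac12)=\sup\{-\infty,-\infty\}=-\infty$. Lower semicontinuity does not rescue this, since the problem is $-\infty$, not $+\infty$. The theorem is still true in such cases (here $G\equiv 0$ on the segment works, with extended subgradients $(0,-\infty)$ and $(-\infty,0)$ at the endpoints), but your chosen $G$ is the wrong one; you need a separate extension step to produce a finite convex $G$ on $\convhull(\P)$ agreeing with $\hat s(p;p)$ on $\P$. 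Relatedly, strict convexity of $G$ on all of $\convhull(\P)$ does not follow merely from the subgradient inequality being strict for $p,q\in\P$; you only control subgradients at points of $\P$, so an additional argument (or a more careful choice of extension) is needed. These are precisely the subtleties handled in the cited works.
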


The two most common examples of proper scoring rules are Brier score and log score, given as follows,
\begin{equation}
  \label{eq:brier-classical}
  \brier(p,y) = 2p_y - \|p\|^2, \quad \brier(p';p) = 2\inprod{p'}{p} - \inprod{p'}{p'}~,
\end{equation}
\begin{equation}
  \label{eq:log-classical}
  s(p,y) = \log p_y, \quad s(p';p) = \sum_{y\in\Y} p_y \log p'_y~.
\end{equation}

We will also make use of a generalization of scoring rules, \emph{affine scores}~\citep{frongillo2021general}.
The affine score model treats scoring functions $s$ on some type space $\mathcal T\subseteq \V$ for a vector space $\V$, such that a suitably defined ``expected score'' $s(t';t)$ is affine as a function of the true type $t\in\mathcal T$.
The characterization in this more general setting resembles Theorem~\ref{thm:scoring-rule-char}; in particular, similar to eq.~\eqref{eqn:sr}, one can always write the score $s$ as a linear approximation of a convex function $G:\convhull(\mathcal T)\to\reals$ .

\subsection{Complex linear algebra}
\label{sec:linear-algebra}

Following \citet{watrous2018theory}, let $\X$ be a finite-dimensional complex Euclidean space, which we generically take to be $\X = \C^n$ for some $n \in \N$.
The identity operator is denoted $I_\X \in \lin(\X\to\X)$.
We write the conjugate transpose of $X\in \lin(\X\to\X)$ as $X^*$.
The inner product on $\lin(\X\to\X)$ is given by $\inprod{X}{Y} = \Tr(X^* Y)$, where $\Tr$ is the trace operator.

We say $U\in\lin(\X\to\X)$ is \emph{unitary} if $UU^* = U^*U = I_\X$.
We say $X\in \lin(\X\to\X)$ is \emph{Hermitian} if $X = X^*$, and denote by $\Herm(\X)\subseteq \lin(\X\to\X)$ the set of Hermitian operators on $\X$.
We will routinely use the fact that $\Herm(\X)$ is itself a vector space over the reals.
In \S~\ref{sec:extend-herm}, we also define the \emph{extended Hermitian matrices} $\extHerm(\X)$, which can be thought of as a particular subset of extended linear functions $\lin(\Herm(\X)\to\extreals)$.
Let $\Pos(\X)\subseteq\Herm(\X)$ be the set of Hermitian positive-semidefinite elements of $\lin(\X\to\X)$, i.e., those $X\in \lin(\X\to\X)$ which can be written $X=YY^*$ for some $Y\in \lin(\X\to\X)$.

Define $\lambda:\Herm(\X)\to\reals^n$ such that $\lambda(A)$ is the vector of eigenvalues of $A$ in decreasing order.
We will often use the shorthand $\lambda_i(\cdot) := \lambda(\cdot)_i$.
\begin{theorem}[{\cite[Cor.\ 1.4]{watrous2018theory}}]\label{thm:spectral}
  For any $A\in\Herm(\X)$, there exists an orthonormal basis $\{x_1,\ldots,x_n\}$ such that $A = \sum_{i=1}^n \lambda_i(A) \vvstar{x}{i}$.
\end{theorem}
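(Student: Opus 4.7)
The plan is to prove the spectral theorem by induction on the dimension $n$ of $\X$. The base case $n=1$ is immediate: every $A\in\Herm(\C)$ is a real scalar, and the orthonormal basis $\{1\}$ works trivially. For the inductive step, I assume the result holds on all complex Euclidean spaces of dimension less than $n$, and I aim to peel off one eigenvector at a time.

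First I would establish existence of a real eigenvalue with a unit eigenvector. The characteristic polynomial $\det(A - tI_\X)$ has degree $n$, so by the fundamental theorem of algebra it admits a root $\lambda\in\C$, and a corresponding nonzero eigenvector $x$ can be taken to have unit norm. To see $\lambda\in\reals$, I use Hermiticity: $\lambda\inprod{x}{x} = \inprod{x}{Ax} = \inprod{Ax}{x} = \overline{\lambda}\inprod{x}{x}$, so $\lambda=\overline{\lambda}$. Among all such $(\lambda,x)$ pairs, I select the one where $\lambda$ is maximum and call it $(\lambda_1(A),x_1)$; that this maximum is attained follows from continuity of the Rayleigh quotient on the compact unit sphere, together with the observation that for any unit $x$ satisfying $\inprod{x}{Ax}$ maximal, a Lagrange multiplier argument gives $Ax = \inprod{x}{Ax}x$.

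Next I would restrict $A$ to the orthogonal complement $W := \{y\in\X : \inprod{x_1}{y}=0\}$, which is an $(n-1)$-dimensional complex Euclidean space. The key fact is that $A$ maps $W$ into $W$: for $y\in W$, $\inprod{x_1}{Ay} = \inprod{Ax_1}{y} = \lambda_1(A)\inprod{x_1}{y} = 0$. The restriction $A|_W$ is Hermitian on $W$, so by the inductive hypothesis there is an orthonormal basis $\{x_2,\dots,x_n\}$ of $W$ with $A x_i = \mu_i x_i$ for real scalars $\mu_i$, ordered so that $\mu_2\geq\cdots\geq\mu_n$. By the maximality in the choice of $\lambda_1(A)$, we have $\lambda_1(A)\geq \mu_2$, so relabeling $\lambda_i(A) := \mu_i$ for $i\geq 2$ gives the desired decreasing order, and $\{x_1,\dots,x_n\}$ is orthonormal in $\X$.

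Finally, to verify the decomposition $A = \sum_{i=1}^n \lambda_i(A)\,\vvstar{x}{i}$, it suffices to check that both sides agree on the orthonormal basis $\{x_1,\dots,x_n\}$: the right-hand side sends $x_j$ to $\sum_i \lambda_i(A) x_i \inprod{x_i}{x_j} = \lambda_j(A) x_j = A x_j$, and linear maps that agree on a basis are equal. I expect the only subtle step to be the ordering: to avoid repeating the induction with a ``maximum eigenvalue'' requirement, one can equivalently apply the inductive hypothesis as stated (which already returns eigenvalues in decreasing order on $W$), and then argue that $\lambda_1(A)\geq \lambda_2(A)$ via the variational characterization $\lambda_1(A) = \max_{\|x\|=1}\inprod{x}{Ax}$, which dominates the analogous maximum over the subspace $W$.
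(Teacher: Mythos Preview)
Your proof is correct and follows the standard inductive argument for the spectral theorem. Note, however, that the paper does not actually supply a proof of this statement: it is stated as background and attributed to \cite[Cor.~1.4]{watrous2018theory}. So there is no ``paper's proof'' to compare against; you have simply filled in a result the paper takes as given. If anything, your write-up is more than the paper needs, since Theorem~\ref{thm:spectral} is used purely as a black box throughout.
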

One can verify that the $x_i$ are eigenvectors of $A$ with eigenvalue $\lambda_i(A)$.
We refer to the sum in Theorem~\ref{thm:spectral} as a \emph{spectral decomposition} of $A$.
This decomposition is not unique, for two reasons: (1) $A$ may have repeated eigenvalues, and (2) eigenvectors are only determined up to phase, meaning a multiplication by $e^{i\theta}$ for any $\theta \in\reals$.

The following basic properties of the inner product on Hermitian matrices will be used throughout.
The first relates the matrix inner product to the inner product of eigenvalues, showing in particular that once one fixes the eigenvalues, the highest inner product comes from choosing the same eigenbasis.
The second shows even more structure for positive-semidefinite matrices; an important corollary is that those with inner product 0 commute and can be simultaneously diagonalized.
\begin{lemma}[{\cite{neumann1937some,lewis1996convex}}]\label{lem:inner-prod-eigenvalues}
  For all $A,B\in\Herm(\X)$, we have $\inprod{A}{B} \leq \inprod{\lambda(A)}{\lambda(B)}$, with equality if and only if there exists a unitary matrix $U$ such that $A = U \Diag(\lambda(A)) U^*$ and $B = U \Diag(\lambda(B)) U^*$.
\end{lemma}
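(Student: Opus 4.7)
The plan is to reduce the claim to von Neumann's trace inequality via the spectral theorem. By Theorem~\ref{thm:spectral}, write $A = U_A D_A U_A^*$ and $B = U_B D_B U_B^*$ with $D_A := \Diag(\lambda(A))$, $D_B := \Diag(\lambda(B))$, and $U_A, U_B$ unitary. Setting $W := U_A^* U_B$, which is unitary, and using the cyclic property of the trace,
\begin{equation}
\inprod{A}{B} \;=\; \Tr(AB) \;=\; \Tr(D_A W D_B W^*) \;=\; \sum_{i,j} \lambda_i(A)\,\lambda_j(B)\,|W_{ij}|^2.
\end{equation}
The matrix $M$ with entries $M_{ij} := |W_{ij}|^2$ is doubly stochastic, since unitarity of $W$ gives $\sum_j |W_{ij}|^2 = (WW^*)_{ii} = 1$ and symmetrically for the columns.

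The next step is a rearrangement argument. By Birkhoff's theorem, $M$ is a convex combination of permutation matrices, and by the Hardy--Littlewood--P\'olya rearrangement inequality, the quantity $\sum_{i,j} \lambda_i(A)\,\lambda_j(B)\,P_{ij}$ is maximized over permutation matrices $P$ at the identity, because both $\lambda(A)$ and $\lambda(B)$ are sorted in decreasing order. Taking a convex combination yields $\inprod{A}{B} \leq \sum_i \lambda_i(A)\,\lambda_i(B) = \inprod{\lambda(A)}{\lambda(B)}$.

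For the equality characterization, one direction is immediate: if $A = U D_A U^*$ and $B = U D_B U^*$, then $\inprod{A}{B} = \Tr(D_A D_B) = \inprod{\lambda(A)}{\lambda(B)}$. For the converse, equality forces the doubly stochastic matrix $M$ to place all of its mass where $\lambda_i(A) \lambda_j(B)$ equals the maximum; when $\lambda(A)$ and $\lambda(B)$ are strictly sorted this means $M$ is the identity permutation, so $W$ is a diagonal unitary $\Diag(e^{i\theta_1},\ldots,e^{i\theta_n})$. Since such a $W$ commutes with $D_B$, taking $U := U_A$ gives $B = U D_B U^*$ as required. The main obstacle is the degenerate case, where repeated eigenvalues leave $M$ room to be a nontrivial permutation or mixture supported on the block-diagonal of equal-eigenvalue blocks. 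Here I would exploit the non-uniqueness of the spectral decomposition within each eigenspace (noted right after Theorem~\ref{thm:spectral}): any block-unitary that leaves $D_A$ and $D_B$ invariant can be absorbed into $U_A$ or $U_B$, and choosing these absorptions to match up the bases across shared eigenspaces yields a single unitary $U$ that simultaneously diagonalizes $A$ and $B$.
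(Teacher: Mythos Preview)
The paper does not supply its own proof of Lemma~\ref{lem:inner-prod-eigenvalues}; it simply cites the result from \cite{neumann1937some,lewis1996convex}. So there is nothing in the paper to compare your argument against.

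That said, your approach is the standard one: reduce to a bilinear form in the eigenvalues weighted by the doubly stochastic matrix $M_{ij}=|W_{ij}|^2$, then invoke Birkhoff and rearrangement. The inequality and the ``if'' direction of the equality case are clean. The ``only if'' direction in the strictly-sorted case is also fine. The degenerate case, however, is only sketched, and the sketch hides real work. Equality in the rearrangement step does not force $M$ to be block-diagonal with respect to the eigenspaces of $D_A$ alone or of $D_B$ alone; rather, the permutations in the support of $M$ need only preserve the value of $\sum_i \lambda_i(A)\lambda_{\pi(i)}(B)$, and when the block structures of $\lambda(A)$ and $\lambda(B)$ differ this can allow ``staircase'' support patterns (e.g.\ $\lambda(A)=(2,1,1)$, $\lambda(B)=(3,3,0)$, where both transpositions $(12)$ and $(23)$ achieve equality). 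Your phrase ``choosing these absorptions to match up the bases across shared eigenspaces'' does not by itself explain how to factor such a $W$ into a piece commuting with $D_A$ and a piece commuting with $D_B$. This can be done---one clean route is the argument in \cite{lewis1996convex}, which handles exactly this refinement---but it deserves at least a precise statement of what structural fact about $W$ you are using.
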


\begin{lemma}[\cite{hayashi2016quantum}]
  \label{lem:pos-zero-inner-prod}
  Let $A,B\in \Pos(\X)$.
  Then $\inprod{A}{B} \geq 0$ with equality if and only if $AB=0$.
\end{lemma}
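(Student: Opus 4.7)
The plan is to leverage the characterization $\Pos(\X) = \{YY^* : Y \in \lin(\X\to\X)\}$ stated just above the lemma. Writing $A = YY^*$ and $B = ZZ^*$ and using cyclicity of the trace, I would rewrite
\[
\inprod{A}{B} = \Tr(A^* B) = \Tr(YY^* ZZ^*) = \Tr\bigl((Z^*Y)^*(Z^*Y)\bigr) = \|Z^*Y\|_F^2 \geq 0,
\]
where $\|\cdot\|_F$ is the Frobenius norm. This immediately establishes the inequality $\inprod{A}{B} \geq 0$.

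For the equality characterization, the forward direction goes as follows. If $\inprod{A}{B} = 0$, then the computation above forces $Z^*Y = 0$. Taking adjoints gives $Y^*Z = (Z^*Y)^* = 0$ as well, so
\[
AB = YY^*ZZ^* = Y(Y^*Z)Z^* = 0.
\]
The converse is trivial: if $AB = 0$ then $\inprod{A}{B} = \Tr(AB) = 0$.

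I do not expect any real obstacle here; the main things to verify carefully are the cyclicity step $\Tr(YY^*ZZ^*) = \Tr((Z^*Y)^*(Z^*Y))$ and the adjoint identity $(Z^*Y)^* = Y^*Z$, both of which are routine. An alternative route would use Theorem~\ref{thm:spectral} to diagonalize $A = \sum_i \alpha_i \vvstar{u}{i}$ with $\alpha_i \geq 0$, yielding $\inprod{A}{B} = \sum_i \alpha_i \, u_i^* B u_i \geq 0$ since $B \in \Pos(\X)$, with equality iff $Bu_i = 0$ whenever $\alpha_i > 0$, which is again equivalent to $AB = 0$. The $YY^*$ approach is more direct and aligns better with the definition introduced in the preceding paragraph, so I would use it as the main proof.
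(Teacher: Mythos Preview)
Your proof is correct. The paper does not actually supply its own proof of this lemma; it is stated with a citation to \cite{hayashi2016quantum} and used as a black box, so there is nothing to compare against. Your $YY^*$/Frobenius-norm argument is the standard short proof and uses only the characterization of $\Pos(\X)$ given in the paragraph preceding the lemma, so it fits cleanly into the paper's development.
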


\subsection{Quantum states, measurement, and entropy}
\label{sec:quant-stat-meas}

The analog of a classical probability distribution in quantum information theory is a quantum mixed state.
Mixed states are represented by a \emph{density matrix}, which is an element $\rho\in\Pos(\X)$ such that $\Tr(\rho) = 1$.
The set of all density matrices over $\X$ is denoted $\Dens(\X)$.
(The term \emph{mixed state} is in contrast to a \emph{pure state}, which is the case where $\rho$ has rank $1$; a mixed state can therefore be thought of as a convex combination of pure states.)

While classically one thinks of outcomes being drawn directly from a distribution, in the quantum setting, outcomes are drawn indirectly through the choice of a measurement.
In particular, the measurement induces a linear map from the density matrix to a classical distribution, from which the outcome is drawn (Fig.~\ref{fig:overview}).
Formally, a \emph{postive operator-valued measurement (POVM)}, or simply \emph{measurement}, is a labeled collection $\mu = \{\mu_y \in \Pos(\X)\}_{y\in\Y}$ for some finite set of outcomes $\Y$, such that $\sum_{y\in\Y} \mu_y = I_\X$.
If the measurement $\mu$ is applied to a mixed state $\rho\in\Dens(\X)$, the resulting distribution $p\in\simplex$ assigns probability $p_y = \inprod{\mu_y}{\rho}$ to outcome $y\in\Y$.
The condition $\mu_y \in \Pos(\X)$ ensures $p_y \geq 0$, and the condition $\sum_{y\in\Y} \mu_y = I_\X$ ensures $\sum_{y\in\Y} p_y = \inprod{\sum_{y\in\Y} \mu_y}{\rho} = \inprod{I_\X}{\rho} = 1$, thus justifying the claim $p\in\simplex$.
Given $\mu$ and $\rho$, we denote this distribution $p$ more succinctly as $p = \inprod{\mu}{\rho} \in \simplex$.
The set of measurements over $\X$ labeled by $\Y$ is denoted $\Meas_\Y(\X)$, with $\Meas(\X)$ denoting all possible measurements over $\X$ for any labeling.

An important class of measurements are \emph{projection-valued measurements (PVMs)}, which are measurements $\mu \in \Meas_\Y(\X)$ such that $\{\mu_y\}_{y\in\Y}$ is a set of orthogonal projection operators.
A fundamental special case, sometimes called a \emph{von Neumann measurement}, is when one picks an orthonormal basis $\{x_1,\ldots,x_n\}$ of $\X = \C^n$, and takes $\Y = \{1,\ldots,n\}$ and $\mu_y = \vvstar{x}{y}$.
In the literature on quantum information theory and quantum computation, one speaks of measuring in a particular basis---in other words, choosing a projective measurement in this way for a particular orthonormal basis---and some textbooks focus exclusively on this type of measurement~\cite{debrota2021varieties}.

\begin{example}[Basics of quantum states and measurements]
  Suppose $\rho$ were a mixture of two pure states, $\rho = \tfrac 1 3 \matmat{1/\sqrt 2;-1/\sqrt 2}\matmat{1/\sqrt 2 -1/\sqrt 2} + \tfrac 2 3 \matmat{0;1}\matmat{0 1} = \matmat{1/6 \,-1/6;1/6 5/6}$.
  Measuring in the standard basis $\matmat{1;0},\matmat{0;1}$ gives probability
  $\inprod{\matmat{1;0}\matmat{1 0}}{\rho} = \Tr(\matmat{1 0;0 0}\rho) = 1/6$ for outcome $1$ and $5/6$ for outcome $2$.
  Measuring instead in the Hadamard basis $\matmat{1/\sqrt 2; 1/\sqrt 2},\matmat{1/\sqrt 2;-1/\sqrt 2}$ gives probability
  $\inprod{\matmat{1/\sqrt 2;1/\sqrt 2}\matmat{1/\sqrt 2 1/\sqrt 2}}{\rho} = \Tr(\matmat{1/2 1/2;1/2 1/2}\rho) = 2/3$ for outcome $1$ and $1/3$ for outcome $2$.
\end{example}

A standard question in quantum information theory is when one can (approximately) infer the full mixed state $\rho\in\Dens(\X)$ from a sufficient number of independent repeated measurements.
To this end, we say a measurement $\mu \in \Meas_\Y(\X)$ is \emph{tomographically complete} if the linear span of $\{\mu_y\}_{y\in\Y}$ over the reals is $\Herm(\X)$.
Define $\phi:\Herm(\X)\to\reals^\Y$ to be the linear operator $\phi X = \inprod{\mu}{X}$, and observe that $\phi(\Dens(\X)) \subseteq \simplex$.
Then $\mu$ is tomographically complete if and only if the rows of $\phi$ span $\Herm(\X)$, if and only if $\phi$ has full rank.
In particular, if $\mu$ is tomographically complete, then $\phi$ is injective, and one can infer $\rho$ from the distribution over outcomes $\phi \rho = \inprod{\mu}{\rho}$ given by the measurement $\mu$.
Any tomographically complete measurement must have $|\Y| \geq n^2 = \dim(\X)^2$~\cite{debrota2021varieties}.

To quantify the amount of information contained in a quantum mixed state, a variety of entropy functions have been proposed.
Just as Shannon entropy $H(p) = -\sum_{y\in\Y} p_y \log p_y$ is the most commonly used for classical distributions $p\in\simplex$, so is von Neumann entropy for quantum mixed states:
\begin{equation}
  \label{eq:von-neumann-entropy}
  H(\rho) = -\inprod{\log \rho}{\rho}~,
\end{equation}
where $\log \rho = \sum_{i=1}^n \log \lambda_i(\rho) \vvstar x i$ for any spectral decomposition of $\rho$ from Theorem~\ref{thm:spectral}.
Von Neumann entropy is a generalization of Shannon: if $p\in\simplex$, then $H(\Diag(p)) = H(p)$.
Tsallis and R\'enyi entropy are also considered, alongside several others~\cite{hu2006generalized,rastegin2011some}.

\section{Quantum Scoring Rules}
\label{sec:quant-scor-rules}

The quantum information elicitation setting we consider is as follows.
An agent possesses some private belief in the form of a quantum mixed state $\rho$, and will report some potentially different mixed state $\rho'$ to the principal.
The principal commits to a contract $S = (s,\mu)$, called a \emph{quantum scoring rule}, or more succinctly \emph{quantum score}, which specifies a payment $s$ and measurement $\mu$.
\footnote{We will prefer the term ``quantum score'' to avoid confusion with the scoring function $s$, which in constrast to the classical case does not fully specify the contract.}
The measurement $\mu$ may depend on the report $\rho'$, and the payment $s$ may depend on both $\rho'$ and the outcome of the measurement (Fig.~\ref{fig:overview}).
The goal of the principal is to design the contract $S$ so that the agent maximizes their expected score by reporting truthfully, i.e., setting $\rho'=\rho$.

Formally defining a quantum score is somewhat subtle, since the measurement $\mu$ depends on the report $\rho'$, and therefore so does the space of possible outcomes $\Y(\rho')$.
Thus, the domain of the scoring function $s$ itself depends on the report $\rho'$.
To simplify notation, we will assume $\Y(\rho') \subseteq \N$, so that we may write $s:\Dens(\X)\times\N \to \extreals$;
we simply ignore values $s(\rho',y)$ for $y\notin\Y(\rho')$.

\begin{definition}
  A \emph{quantum score} is a pair $S=(s,\mu)$ of a scoring function $s:\Dens(\X)\times\N \to \extreals$ and measurement function $\mu:\Dens(\X)\to\Meas(\X)$.
  We say $S$ is \emph{finite} if $s$ takes values in $\reals$ only.
  The expected score is given by
  \begin{equation}
    \label{eq:quantum-def-expected-score}
    S(\rho';\rho) := \E_{Y \sim \inprod{\mu(\rho')}{\rho}} s(\rho',Y)~.
  \end{equation}
  A quantum score $S$ is \emph{truthful} if for all $\rho,\rho'\in\Dens(\X)$ we have $S(\rho;\rho) \geq S(\rho';\rho)$.
  We say $S$ is \emph{strictly truthful} if the above inequality is strict whenever $\rho'\neq\rho$.
\end{definition}

As we will see, in some sense this formulation of quantum scores is ``over-determined'', since there can be multiple ways to achieve the same expected score function with different choices of scoring and measurement functions.
The following notion of equivalence is therefore useful.
\begin{definition}\label{def:equivalent}
  Two quantum scores $S,S'$ are \emph{equivalent} if we have $S(\rho';\rho)=S'(\rho';\rho)$ for all $\rho,\rho'\in\Dens(\X)$.
  Furthermore, we say a particular class $\mathcal C$ of quantum scores \emph{can express} another class $\mathcal C'$ if for all quantum scores $S'\in\mathcal C'$ there is some $S\in\mathcal C$ which is equivalent to $S'$.
\end{definition}
\subsection{Motivating examples}

At first glance, it may not seem clear whether there exists any truthful quantum score.
Using tools from quantum tomography, however, we can see one construction via reduction to the classical case.
In particular, a fixed tomographically complete measurement suffices: simply score the induced classical distribution over outcomes using any strictly proper scoring rule.

\begin{example}[Fixed-measurement reduction to classical]
  \label{ex:classical-reduction}
  Let $\mufix\in\Meas_\Y(\X)$ be any tomographically complete measurement over a finite $\Y$.
  Then the operator $\phi:\Herm(\X)\to\reals^\Y$, $\phi X = \inprod{\mufix}{X}$ is injective.
  Now take any strictly proper scoring rule $\hat s:\simplex\times\Y\to\extreals$ and define our quantum score $S=(s,\mu)$ where $\mu:\rho\mapsto\mufix$ and
  $s(\rho,y) = \hat s(\phi \rho,y)$.
  Since $\hat s$ is strictly proper, we have
  \begin{equation}
    \label{eq:reduction-classical}
    S(\rho';\rho) =
    \E_{Y\sim \phi \rho} \hat s(\phi \rho',Y)
    \leq
    \E_{Y\sim \phi \rho} \hat s(\phi \rho,Y)
    = S(\rho;\rho)~,
  \end{equation}
  with equality if and only if $\phi \rho = \phi \rho' \iff \rho = \rho'$ by injectivity.
  As concrete instantiations, taking $\hat s$ to be log score yields $s(\rho',y) = \log \inprod{\mufix}{\rho'}_y$, and taking Brier scores yields $s(\rho',y) = 2\inprod{\mufix}{\rho'}_y - \sum_{y\in\Y}\inprod{\mufix}{\rho'}_y^2$, both of which yield strictly truthful quantum scores.
\end{example}  

The fact that the principal may simply use a classical proper scoring rule may seem to render the problem of designing truthful quantum scores trivial.
Indeed, the reader looking only for a straightforward family of truthul quantum scores may now be satisfied.
It is not clear, however, whether \emph{all} truthful quantum scores can be expressed as in Example~\ref{ex:classical-reduction}---in fact, the answer is not quite (Proposition~\ref{prop:projective-expressive} and preceeding discussion).
Moreover, we would like to design truthful quantum scores under various restrictions, such as requiring the measurement to be projective.

As an example, what if we are only allowed a binary-valued measurement?
In fact, we still have the flexibility to recover a quantum version of the classical Brier score.

\begin{example}[Brier via binary measurement]
  \label{ex:brier-binary-povm}
  We can also derive a quantum version of Brier score, using just two outcomes for the measurement.
  Take $\Y = \{0,1\}$ and define $\mu$ by $\mu(\rho')_0 = I_\X - \rho' \in \Pos(\X)$ and $\mu(\rho')_1 = \rho'$.
  Take the scoring function given by $s(\rho',y) = 2y - \inprod{\rho'}{\rho'}$.
  The expected score for $S=(s,\mu)$ is then
  \begin{align*}
    S(\rho';\rho)
    &= \inprod{\mu(\rho')_0}{\rho} s(\rho',0) + \inprod{\mu(\rho')_1}{\rho} s(\rho',1)
    \\
    &= \inprod{\rho'}{\rho} (2 - \inprod{\rho'}{\rho'}) + \inprod{I_\X - \rho'}{\rho} (- \inprod{\rho'}{\rho'})
    \\
    &= 2\inprod{\rho'}{\rho} - \inprod{\rho'}{\rho'}~,
  \end{align*}
  the same form as the classical Brier score (eq.~\eqref{eq:brier-classical}).
  As $2\inprod{\rho'}{\rho} - \inprod{\rho'}{\rho'} = \inprod{\rho}{\rho}-\inprod{\rho'-\rho}{\rho'-\rho}$,  the expected score is uniquely maximized by setting $\rho'=\rho$.
  Hence, $S$ is strictly truthful.
\end{example}

See 
Example~\ref{ex:brier-projective} for a projective version of the same score.
As we will see in Example~\ref{ex:von-neumann}, there is also a natural quantum analog of the log scoring rule.

\subsection{Main characterization}

We now characterize all truthful quantum scores.
To begin, let us examine the form of the expected score.
While it may not be clear from eq.~\eqref{eq:quantum-def-expected-score}, the expected score function for $S=(s,\mu)$ is actually (extended) linear in the belief $\rho$.
\begin{align*}
  S(\rho';\rho)
  &= \E_{Y \sim \inprod{\mu(\rho')}{\rho}} s(\rho',Y)
  \\
  &= \sum_{y\in\Y(\rho')} \inprod{\mu(\rho')_y}{\rho} s(\rho',y)~,
  \\
  &= \inprod{Z_S(\rho')}{\rho} ~,\numberthis\label{eq:coeff-rho-main-char}
\end{align*}
where $Z_S(\rho') = \sum_{y\in\Y(\rho')} \mu(\rho')_y s(\rho',y)$.
Technically, $Z_S(\rho') \in \extHerm(\X)$ since $s$ may take on infinite values.
If $S$ is finite, however, then $Z_S(\rho')\in\Herm(\X)$.

Because of the linear relationship in eq.~\eqref{eq:coeff-rho-main-char}, we may appeal to the affine score framework for a characterization.
We state the result as a theorem, but it follows as a direct corollary of \citet[Theorem 1]{frongillo2021general}.
An informal version of this result appears in \citet[eq. (34)]{blume2010optimal}.

\begin{theorem}\label{thm:main-char}
  A quantum score $S$ on $\X$ is (strictly) truthful if and only if
  there exists some (strictly) convex $F : \Dens(\X)\to\reals$, and some selection of subgradients \subselect{d}{\rho}{F},
  such that
  \begin{equation}
    \label{eq:main-char}
    S(\rho';\rho) = F(\rho') + \inprod{d_{\rho'}}{\rho-\rho'}~.
  \end{equation}
\end{theorem}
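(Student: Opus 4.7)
The plan is to derive Theorem~\ref{thm:main-char} as a direct corollary of the affine score characterization of \citet{frongillo2021general}, with $\Dens(\X)\subseteq\Herm(\X)$ playing the role of the convex type space inside the real vector space $\Herm(\X)$. The essential ingredient is already isolated in eq.~\eqref{eq:coeff-rho-main-char}: the expected score $S(\rho';\rho)=\inprod{Z_S(\rho')}{\rho}$ is (extended) linear in the true belief $\rho$. Once this linearity is noted, quantum scores become a special case of affine scores, and the claimed form mirrors eq.~\eqref{eqn:sr} in the classical case.

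The ``if'' direction is immediate: at $\rho'=\rho$, eq.~\eqref{eq:main-char} gives $F(\rho)=S(\rho;\rho)$, and the subgradient inequality at $\rho'$ evaluated at $\rho$ reads $S(\rho;\rho)=F(\rho)\geq F(\rho')+\inprod{d_{\rho'}}{\rho-\rho'}=S(\rho';\rho)$, which is truthfulness. Strict convexity of $F$ promotes this to a strict inequality whenever $\rho'\neq\rho$, giving strict truthfulness.

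For the ``only if'' direction, assume $S$ is truthful and set $F(\rho):=S(\rho;\rho)$ together with $d_{\rho'}:=Z_S(\rho')\in\extHerm(\X)$. By eq.~\eqref{eq:coeff-rho-main-char},
\[
S(\rho';\rho)=\inprod{d_{\rho'}}{\rho}=F(\rho')+\inprod{d_{\rho'}}{\rho-\rho'},
\]
using $F(\rho')=\inprod{d_{\rho'}}{\rho'}$, which is eq.~\eqref{eq:main-char} as an identity. Truthfulness $F(\rho)\geq S(\rho';\rho)$ then reads $F(\rho)\geq F(\rho')+\inprod{d_{\rho'}}{\rho-\rho'}$, which is exactly the (extended) subgradient inequality, so $d_{\rho'}\in\extsub F(\rho')$; consequently $F$ is convex as the pointwise supremum of the (extended) linear functions $\rho\mapsto\inprod{d_{\rho'}}{\rho}$. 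Strict truthfulness yields the strict subgradient inequality at every $\rho'$, which is equivalent to strict convexity of $F$.

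The main obstacle, and where I would slow down, is the bookkeeping around extended-real-valued scores (log-score-type constructions with $-\infty$ entries): the coefficient $Z_S(\rho')$ must be interpreted in $\extHerm(\X)$ rather than $\Herm(\X)$, and one must verify the selection condition $\subselect{d}{\rho}{F}$---that no $\inprod{d_{\rho'}}{\rho-\rho'}$ equals $+\infty$. This is exactly what the framework of \S~\ref{sec:app-ext-reals} and \S~\ref{sec:extend-herm} is designed to handle, and is where the cited \citet[Theorem 1]{frongillo2021general} does the nontrivial work; the genuinely ``quantum'' content is already fully captured by the linearity observation~\eqref{eq:coeff-rho-main-char}.
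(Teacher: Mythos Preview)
Your proposal is correct and takes essentially the same approach as the paper, which simply states that the theorem ``follows as a direct corollary of \citet[Theorem 1]{frongillo2021general}'' after establishing the linearity in eq.~\eqref{eq:coeff-rho-main-char}. Your write-up in fact unpacks that citation more explicitly than the paper does, including the choice $F(\rho)=S(\rho;\rho)$, $d_{\rho'}=Z_S(\rho')$, and the extended-real bookkeeping, all of which are exactly the content of the cited affine-score theorem.
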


While it may not be clear from Theorem~\ref{thm:main-char}, the flexibility of choosing the measurement to be any POVM means $F$ can be any convex function.
See Corollary~\ref{cor:score-any-F-dF}.
This fact essentially follows from the observation that a quantum score implicitly specifies an arbitrary (extended) Hermitian matrix $Z_S(\rho')\in\extHerm(\X)$ so that $S(\rho';\rho) = \inprod{Z_S(\rho')}{\rho}$.

\subsection{Fixed-measurement quantum scores}
\label{sec:fixed-measurement-scores}

In light of Example~\ref{ex:classical-reduction}, an important special case is when a fixed measurement $\mufix\in\Meas_\Y(\X)$, possibly not tomographically complete, will be performed on the unknown mixed state $\rho$.
This case corresponds to a quantum score given by a constant measument function $\mu:\rho\mapsto \mufix$ and a scoring function $s:\Dens(\X)\times\Y\to\extreals$.

We first observe that the expected score function can only depend on the actual distribution over outcomes.
\begin{lemma}\label{lem:fixed-meas-expected-score}
  Let $S=(s,\mufix)$ be a truthful quantum score with fixed measurement $\mufix\in\Meas_\Y(\X)$.
  Let $F(\rho) := S(\rho;\rho)$.
  Then there exists $f:\simplex\to\reals$ such that $F(\rho) = f(\inprod{\mufix}{\rho})$.
\end{lemma}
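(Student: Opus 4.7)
The plan is to show $F$ is constant on fibers of the map $\rho \mapsto \inprod{\mufix}{\rho}$ and then define $f$ by sending a distribution $p$ in the image of this map to the common value of $F$ on its preimage (and arbitrarily, say $0$, on $\simplex$ outside this image). So the core task reduces to verifying: if $\rho_1,\rho_2\in\Dens(\X)$ satisfy $\inprod{\mufix}{\rho_1} = \inprod{\mufix}{\rho_2}$, then $F(\rho_1) = F(\rho_2)$.

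First, I would unpack eq.~\eqref{eq:quantum-def-expected-score} for a fixed measurement: since $\mu(\rho')=\mufix$ does not depend on $\rho'$,
\begin{equation*}
  S(\rho';\rho) \;=\; \sum_{y\in\Y} s(\rho',y)\,\inprod{\mufix_y}{\rho}~.
\end{equation*}
The key observation is that the right-hand side depends on $\rho$ only through the vector $\inprod{\mufix}{\rho}\in\simplex$. Hence for any $\rho_1,\rho_2$ with $\inprod{\mufix}{\rho_1}=\inprod{\mufix}{\rho_2}$ and any report $\rho'\in\Dens(\X)$,
\begin{equation*}
  S(\rho';\rho_1) \;=\; S(\rho';\rho_2)~.
\end{equation*}

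Next I would apply truthfulness of $S$ in both directions. Using this identity with the substitutions $\rho'=\rho_2$ and then $\rho'=\rho_1$ gives, respectively,
\begin{equation*}
  F(\rho_1) \;=\; S(\rho_1;\rho_1) \;\geq\; S(\rho_2;\rho_1) \;=\; S(\rho_2;\rho_2) \;=\; F(\rho_2)~,
\end{equation*}
\begin{equation*}
  F(\rho_2) \;=\; S(\rho_2;\rho_2) \;\geq\; S(\rho_1;\rho_2) \;=\; S(\rho_1;\rho_1) \;=\; F(\rho_1)~.
\end{equation*}
Together these force $F(\rho_1)=F(\rho_2)$, so $F$ factors through $\rho\mapsto\inprod{\mufix}{\rho}$, yielding the desired $f$.

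There is no real obstacle here; the only subtlety is the domain of $f$. The image $\{\inprod{\mufix}{\rho}:\rho\in\Dens(\X)\}$ may be a proper subset of $\simplex$ when $\mufix$ fails to be tomographically complete, but this is harmless: we simply set $f$ to any real value (e.g., $0$) outside this image, since the statement $F(\rho)=f(\inprod{\mufix}{\rho})$ is quantified only over $\rho\in\Dens(\X)$.
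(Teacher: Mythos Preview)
Your proof is correct and follows essentially the same approach as the paper: both establish that $F$ is constant on fibers of $\rho\mapsto\inprod{\mufix}{\rho}$ by observing that $S(\rho';\rho)$ depends on $\rho$ only through $\inprod{\mufix}{\rho}$, and then applying truthfulness in both directions to obtain the two opposing inequalities. Your presentation is arguably a bit cleaner, and you explicitly address the domain issue for $f$, which the paper leaves implicit.
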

\begin{proof}
  Suppose $\inprod{\mufix}{\rho} = \inprod{\mufix}{\tau} =: p \in \simplex$; we will show $F(\rho) = F(\tau)$.
  By definition of the expected score, we have $S(\rho;\rho) = \E_{Y\sim p} s(\rho,Y) = S(\rho;\tau)$, and similarly $S(\tau;\tau) = \E_{Y\sim p} s(\tau,Y) = S(\rho;\tau)$.
  By truthfulness when $\rho$ is the belief, $S(\rho;\rho) \geq S(\tau;\rho)$.
  By definition of the expected score, and the fact that $\inprod{\mufix}{\rho} = p$, we have $\E_{Y \sim p} s(\rho,Y) \geq \E_{Y \sim p} s(\tau,Y)$.
  Symmetrically, when $\tau$ is the belief, the reverse inequality holds since we also have $\inprod{\mufix}{\tau} = p$.
  We conclude $F(\rho) = \E_{Y \sim p} s(\rho,Y) = \E_{Y \sim p} s(\tau,Y) = F(\tau)$.
\end{proof}

Recall from Example~\ref{ex:classical-reduction} that one way to design a truthful quantum score is to use a fixed tomographically complete measurement and score the classical distribution on measurement outcomes induced by the report, using a classical proper scoring rule.
Using Lemma~\ref{lem:fixed-meas-expected-score}, we can show a converse: a truthful fixed-measurement quantum score can essentially be written in terms of a classical scoring rule.
There is a subtlety, however, which is that when $\mufix$ is not tomographically complete, the quantum score may be able to make multiple subgradient selections for the same induced classical distribution over outcomes.
In other words, we could have $d_{\rho}\neq d_{\rho'}$ below despite $\inprod{\mufix}{\rho} = \inprod{\mufix}{\rho'}$, meaning in this case we could not write eq.~\eqref{eq:fixed-meas-char} as a classical scoring rule.
When $\mufix$ is tomographically complete, however, eq.~\eqref{eq:fixed-meas-char} always gives a classical scoring rule.

\begin{restatable}{theorem}{fixedmeaschar}
  \restatehack{\begin{theorem}}
    \label{thm:fixed-meas-char}\restatehack{\end{theorem}}
  Let $S=(s,\mufix)$ be a quantum score with fixed measurement $\mufix\in\Meas_\Y(\X)$.
  Then $S$ is truthful if and only if it there exists some convex $f : \simplex\to\reals$ and selection of subgradients $\{d_\rho \in \extsub f(\inprod{\mufix}{\rho})\}_{\rho\in\Dens(\X)}$
  such that
  \begin{equation}
    \label{eq:fixed-meas-char}
    s(\rho',y) = f(\inprod{\mufix}{\rho'}) + \inprod{d_{\rho'}}{\ones_y - \inprod{\mufix}{\rho'}}~.
  \end{equation}
  Moreover, $S$ is strictly truthful if and only if $f$ is strictly convex and $\mufix$ is tomographically complete.
\end{restatable}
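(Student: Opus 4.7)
The plan is to reduce to the classical scoring-rule characterization via the linear map $\phi:\Herm(\X)\to\reals^\Y$ defined by $\phi X = \inprod{\mufix}{X}$, which sends $\Dens(\X)$ into $\simplex$. Writing $\phi\rho$ for the induced distribution, a direct expansion using $\sum_y (\phi\rho)_y \ones_y = \phi\rho$ shows that if $s$ takes the claimed form, then the expected score telescopes to
\[
S(\rho';\rho) \;=\; f(\phi\rho') + \inprod{d_{\rho'}}{\phi\rho - \phi\rho'}.
\]
The sufficiency direction is then immediate: $S(\rho;\rho) = f(\phi\rho)$, and the subgradient inequality $d_{\rho'}\in\extsub f(\phi\rho')$ evaluated at $p=\phi\rho$ yields $S(\rho;\rho)\geq S(\rho';\rho)$.

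For necessity, I would apply Theorem~\ref{thm:main-char} to obtain a convex self-score function $F(\rho)=S(\rho;\rho)$ on $\Dens(\X)$, and then Lemma~\ref{lem:fixed-meas-expected-score} to factor $F = f_0\circ\phi$ for some $f_0$ defined on $\phi(\Dens(\X))$; convexity of $f_0$ on this set transfers from that of $F$ via the linearity of $\phi$ (any convex combination in $\phi(\Dens(\X))$ is witnessed by the same combination of preimages). After extending $f_0$ to a convex $f:\simplex\to\reals$, I would take $(d_{\rho'})_y := s(\rho',y)$, viewed as an extended-real vector. A short calculation using $\sum_y(\phi\rho')_y s(\rho',y) = S(\rho';\rho') = f(\phi\rho')$ recovers the claimed formula for $s$. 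Truthfulness, rewritten via the telescoped identity, is then precisely the subgradient inequality $f(\phi\rho)\geq f(\phi\rho')+\inprod{d_{\rho'}}{\phi\rho-\phi\rho'}$ at each $\phi\rho\in\phi(\Dens(\X))$, and the extension is chosen so that this inequality persists on the complement.

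For the strictness equivalence, I would argue both directions by case analysis. If $\mufix$ is tomographically complete and $f$ is strictly convex, then $\phi$ is injective, so $\rho\neq\rho'$ forces $\phi\rho\neq\phi\rho'$, and strict convexity of $f$ forces the subgradient inequality to be strict, giving strict truthfulness. Conversely, failure of tomographic completeness produces distinct $\rho,\rho'\in\Dens(\X)$ with $\phi\rho=\phi\rho'$, and the telescoped identity equates $S(\rho';\rho)=f(\phi\rho')=f(\phi\rho)=S(\rho;\rho)$; given injectivity, any pair in $\phi(\Dens(\X))$ on which $f$ fails strict convexity lifts to a pair tying the expected score.

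The main obstacle I anticipate is the extension step: producing a convex, and in the strict case strictly convex, $f:\simplex\to\reals$ from $f_0$ on the possibly lower-dimensional slice $\phi(\Dens(\X))$ while preserving each $d_{\rho'}$ as a subgradient at $\phi\rho'$. A natural candidate is the upper envelope $f(p) := \sup_{\rho'\in\Dens(\X)} \bigl\{ f_0(\phi\rho') + \inprod{d_{\rho'}}{p-\phi\rho'} \bigr\}$, which is convex by construction, agrees with $f_0$ on the slice by the subgradient inequality already verified, and automatically retains each $d_{\rho'}$ as a subgradient at $\phi\rho'$; the strict case is then obtained by further adding a strictly convex quadratic term transverse to $\phi(\Dens(\X))$.
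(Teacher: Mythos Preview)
Your approach is correct and follows the paper's strategy: factor $F(\rho)=S(\rho;\rho)$ through $\phi$ via Lemma~\ref{lem:fixed-meas-expected-score}, transfer (strict) convexity between $F$ and $f$ (the paper packages this as Lemma~\ref{lem:convex-affine-composition}), and match up subgradients. Where you set $(d_{\rho'})_y := s(\rho',y)$ and verify the subgradient inequality directly from truthfulness, the paper instead cites the chain rule for subgradients of a composition with a linear map; your route is slightly more explicit and immediately recovers the claimed formula for $s$. You are also right to flag the extension of $f$ from $\phi(\Dens(\X))$ to all of $\simplex$: the paper's proof does not address this, and your upper-envelope construction (with a transverse strictly convex perturbation for the strict case) is a reasonable way to fill that gap.
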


With this characterization, we can now see that fixed-measurement quantum scores are essentially fully expressive.
One caveat is that the proof relies on the scores being finite, meaning the scoring function cannot take on (negative) infinite values.
This restriction arises as classical scoring rules can only take on infinite values for predictions at the boundary of the probability simplex~\cite{waggoner2021linear}, yet the set the map $\phi:\rho \mapsto \inprod{\mufix}{\rho}$ cannot map the boundary of $\Dens(\X)$ onto the boundary of the simplex, as $\Dens(\X)$ is not a finitely generated convex set (Lemma~\ref{lem:fixed-meas-boundary}).
In particular, the analog of log score, the log spectral score (Example~\ref{ex:von-neumann}), cannot be expressed as a fixed-measurement score.

\begin{proposition}
  \label{prop:fixed-meas-expressive}
  For any tomographically complete measurement $\mufix\in\Meas_\Y(\X)$, the class of quantum scores with fixed measurement $\mufix$ can express the class of finite quantum scores.
\end{proposition}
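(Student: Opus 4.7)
The plan is to leverage the linearity in $\rho$ of the expected score, together with the tomographic completeness of $\mufix$. Specifically, given any finite quantum score $S=(s,\mu)$, eq.~\eqref{eq:coeff-rho-main-char} writes $S(\rho';\rho) = \inprod{Z_S(\rho')}{\rho}$ where $Z_S(\rho')\in\Herm(\X)$; the finiteness of $S$ is precisely what guarantees $Z_S(\rho')$ lies in $\Herm(\X)$ rather than $\extHerm(\X)$. The goal is thus to build an equivalent $S'=(s',\mufix)$, which, per eq.~\eqref{eq:coeff-rho-main-char} applied to $S'$, reduces to choosing $s'(\rho',\cdot)\in\reals^\Y$ so that $\sum_y \mufix_y\, s'(\rho',y) = Z_S(\rho')$.

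Concretely, I would consider the linear map $\psi:\reals^\Y\to\Herm(\X)$, $\psi(c) = \sum_{y\in\Y} c_y\, \mufix_y$, which is the adjoint of the map $\phi$ from Section~\ref{sec:quant-stat-meas}. Tomographic completeness of $\mufix$ is exactly the statement that the image of $\psi$ is all of $\Herm(\X)$, so $\psi$ is surjective and admits a linear right inverse $\psi^+:\Herm(\X)\to\reals^\Y$. Setting $s'(\rho',y) := (\psi^+(Z_S(\rho')))_y$ and $S' := (s',\mufix)$, the verification is immediate: $\sum_y \mufix_y\, s'(\rho',y) = \psi(\psi^+(Z_S(\rho'))) = Z_S(\rho')$, and so $S'(\rho';\rho) = \inprod{Z_S(\rho')}{\rho} = S(\rho';\rho)$, which is the equivalence demanded by Definition~\ref{def:equivalent}. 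Note that we never invoke truthfulness of $S$; it just comes along for the ride if present, in which case Theorem~\ref{thm:fixed-meas-char} automatically furnishes the convex $f$ and subgradient selection underlying $S'$.

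The construction faces no real obstacle---it is essentially a change of basis for the coefficient matrix $Z_S(\rho')$. The single subtle point, and the reason the finiteness hypothesis is indispensable, is that if $Z_S(\rho')$ were allowed to lie in $\extHerm(\X)$, no real coefficient vector could express it as a combination of the $\mufix_y$. This matches the paper's remark that scores such as the log spectral score of Example~\ref{ex:von-neumann} cannot be reduced to a fixed-measurement form, since their coefficient matrices carry $-\infty$ eigenvalues on the boundary of $\Dens(\X)$.
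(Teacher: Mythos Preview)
Your proof is correct and essentially identical to the paper's: both observe that finiteness gives $Z_S(\rho')\in\Herm(\X)$, then use tomographic completeness to write $Z_S(\rho')=\sum_y \alpha_y(\rho')\,\mufix_y$ and set $s'(\rho',y)=\alpha_y(\rho')$. Your version is slightly more explicit in naming the map $\psi$ and invoking a linear right inverse, but the content is the same.
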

\begin{proof}
  Let $S$ be a finite quantum score.
  Fix $\rho'\in\Dens(\X)$.
  By eq.~\ref{eq:coeff-rho-main-char}, we may write $S(\rho';\rho) = \inprod{Z_S(\rho')}{\rho}$ for some $Z_S(\rho')\in\Herm(\X)$.
  By definition of tomographically complete, we may write $d_s(\rho') = \sum_{y\in\Y} \alpha_y(\rho') \mufixy$ for some coefficients $\alpha_y(\rho')\in\reals$.
  We simply let $S^\diamond = (s,\rho\mapsto\mufix)$ where $s(\rho',y) = \alpha_y(\rho')$.
  Then for all $\rho\in\Dens(\X)$, we have $S^\diamond(\rho';\rho) = \sum_{y\in\Y} \inprod{\mufixy}{\rho} s(\rho',y) = \inprod{\sum_{y\in\Y} \alpha_y(\rho') \mufixy}{\rho} = \inprod{Z_S(\rho')}{\rho} = S(\rho';\rho)$.
\end{proof}

\subsection{Projective quantum scores}
\label{sec:proj-quant-scor}

An important special case of positive operator-valued measurements (POVMs) are \emph{projection-valued measurements (PVMs)}, which are given by a set of orthogonal projection operators.
Let us call a quantum score $(S,\mu)$ \emph{projective} if $\mu(\rho)$ is a PVM for all $\rho\in\Dens(\X)$.

\begin{example}[Projective Brier score]
  \label{ex:brier-projective}
  For all $\rho\in\Dens(\X)$, let $\rho = \sum_{y=1}^n \lambda(\rho)_y \vvstar{x}{y}$ be a spectral decomposition.
  Take $\mu(\rho) \in \Meas_\Y(\X)$ given by $\mu(\rho)_y = x_yx_y^*$, where $\Y = \{1,\ldots,n\}$; note $\mu(\rho)$ is a PVM for all $\rho$.
  Set $s(\rho,y) = 2\lambda(\rho)_y - \inprod{\rho}{\rho}$.
  Then we have $S(\rho';\rho) = \sum_{y\in\Y}\inprod{\mu(\rho')_y}{\rho}(2\lambda(\rho')_y - \inprod{\rho'}{\rho'}) = 2\sum_{y\in\Y}\inprod{\lambda(\rho')_y\mu(\rho')_y}{\rho} - \inprod{\rho'}{\rho'} = 2\inprod{\rho'}{\rho} - \inprod{\rho'}{\rho'}$.
  Interestingly, we could also write $s(\rho,y) = 2\lambda(\rho)_y - \|\lambda(\rho)\|_2^2$, which is exactly Brier score on the distribution $\lambda(\rho) \in \simplex$.
  In fact, this type of spectral reduction to classical scoring rules can be generalized; see \S~\ref{sec:spectral-scores}.
\end{example}

Since PVMs are much more restrictive than POVMs, and have at most $n=\dim(\X)$ elements as opposed to the $n^2$ needed to be tomographically complete, one might expect that projective quantum scores are less expressive.
In fact, they are just as expressive as general quantum scores.

\begin{proposition}\label{prop:projective-expressive}
  Projective quantum scores can express the class of all quantum scores.
\end{proposition}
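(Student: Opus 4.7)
The plan is to exploit the fact that the entire expected score of any quantum score is encoded in a single Hermitian matrix, and that any Hermitian matrix admits a spectral decomposition whose eigenprojectors form a PVM. Concretely, from eq.~\eqref{eq:coeff-rho-main-char} we have
\begin{equation*}
  S(\rho';\rho) \;=\; \inprod{Z_S(\rho')}{\rho}, \qquad Z_S(\rho') \;=\; \sum_{y \in \Y(\rho')} \mu(\rho')_y\, s(\rho',y),
\end{equation*}
where $Z_S(\rho') \in \Herm(\X)$ in the finite case (and lies in $\extHerm(\X)$ in general). So any two quantum scores that agree on the matrix-valued map $\rho' \mapsto Z_S(\rho')$ are equivalent in the sense of Definition~\ref{def:equivalent}, and it suffices to build a projective score $S^\diamond$ with $Z_{S^\diamond}(\rho') = Z_S(\rho')$ for every $\rho'$.

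To construct $S^\diamond = (s^\diamond,\mu^\diamond)$, I would fix an arbitrary $\rho'\in\Dens(\X)$ and apply Theorem~\ref{thm:spectral} to the Hermitian matrix $Z_S(\rho')$, obtaining an orthonormal basis $\{x_1,\ldots,x_n\}$ of $\X$ such that $Z_S(\rho') = \sum_{i=1}^n \lambda_i(Z_S(\rho'))\, \vvstar{x}{i}$. Taking $\Y = \{1,\ldots,n\}$, set
\begin{equation*}
  \mu^\diamond(\rho')_i \;:=\; \vvstar{x}{i}, \qquad s^\diamond(\rho',i) \;:=\; \lambda_i(Z_S(\rho')).
\end{equation*}
Because $\{x_i\}$ is orthonormal, the $\vvstar{x}{i}$ are pairwise orthogonal projections summing to $I_\X$, so $\mu^\diamond(\rho')$ is a PVM; hence $S^\diamond$ is projective by construction.

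The verification is a direct computation: for every $\rho\in\Dens(\X)$,
\begin{equation*}
  S^\diamond(\rho';\rho) \;=\; \sum_{i=1}^n \inprod{\vvstar{x}{i}}{\rho}\,\lambda_i(Z_S(\rho')) \;=\; \Bigl\langle \textstyle\sum_{i=1}^n \lambda_i(Z_S(\rho'))\,\vvstar{x}{i},\,\rho\Bigr\rangle \;=\; \inprod{Z_S(\rho')}{\rho} \;=\; S(\rho';\rho),
\end{equation*}
so $S^\diamond$ is equivalent to $S$.

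The only real obstacle is the case in which $s$ takes infinite values, so that $Z_S(\rho')$ is only extended Hermitian. Here one cannot literally invoke Theorem~\ref{thm:spectral}, and the argument must instead be run in $\extHerm(\X)$, producing an eigenbasis whose associated eigenvalues are allowed to lie in $\extreals$; this is presumably exactly what the machinery of \S~\ref{sec:extend-herm} is set up to support, and the same identity $S^\diamond(\rho';\rho) = \inprod{Z_S(\rho')}{\rho}$ then goes through with the extended-inner-product conventions there. The choice of spectral decomposition is not unique when $Z_S(\rho')$ has repeated eigenvalues, but any choice works, so measurability/consistency of the selection across $\rho'$ is not an issue.
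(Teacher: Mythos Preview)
Your proposal is correct and follows essentially the same approach as the paper: diagonalize $Z_S(\rho')$ and use the resulting orthonormal eigenbasis as a PVM with the eigenvalues as score values. The paper handles the finite and extended cases uniformly by invoking Lemma~\ref{lem:extherm} from the outset to write $Z_S(\rho') \in \extHerm(\X)$ in the diagonal form of eq.~\eqref{eq:extherm-diag}, which is exactly the machinery you pointed to for the infinite-valued case.
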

\begin{proof}
  Let $S$ be a quantum score.
  Fix $\rho'\in\Dens(\X)$.
  Let $\Y = \{1,\ldots,n\}$.
  By Lemma~\ref{lem:extherm}, we have a set of orthonormal vectors $x_1,\ldots,x_n\in\X$, integer $k$, and coefficients $\lambda_1,\ldots,\lambda_n \geq 0$, such that $Z_S(\rho') = \sum_{y=1}^k \lambda_y \vvstar{x}{y} - \infty \sum_{y=k+1}^n \vvstar{x}{y}$.
  Let $\mu'(\rho') \in \Meas_\Y(\X)$ with $\mu'(\rho')_y = \vvstar{x}{y}$.
  Define $s'(\rho',y) = \lambda_y$ where $\lambda_y = -\infty$ for $y > k$.
  Then letting $S' = (s',\mu')$, for all $\rho\in\Dens(\X)$, we have
  \begin{equation*}
    S(\rho';\rho)
    = \inprod{Z_S(\rho')}{\rho}
    = \inprod{\sum_{y\in\Y} s'(\rho',y) \mu'(\rho')_y}{\rho}
    = S'(\rho';\rho)~.
  \end{equation*}
\end{proof}

The proof shows that $Z_S(\rho)$ can be an arbitrary element of $\extHerm(\X)$, giving the following.
\begin{corollary}\label{cor:score-any-F-dF}
  Let $F : \Dens(\X)\to\reals$ be convex, and let \subselect{d}{\rho}{F} be a selection of subgradients.
  Then there exists a truthful quantum score $S$ taking the form of eq.~\eqref{eq:main-char} for this $F$ and $\{d_\rho\}$.
\end{corollary}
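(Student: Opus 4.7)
The plan is to reduce the corollary to Proposition~\ref{prop:projective-expressive} by converting the target affine expected score into the form $\inprod{Z(\rho')}{\rho}$ for a suitable $Z(\rho')\in\extHerm(\X)$, and then invoking the projective construction.

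First, I would fix $\rho'\in\Dens(\X)$ and define the target (extended) Hermitian matrix
\begin{equation*}
  Z(\rho') \;:=\; d_{\rho'} + \bigl(F(\rho') - \inprod{d_{\rho'}}{\rho'}\bigr)\, I_\X,
\end{equation*}
viewing $d_{\rho'}$ as an element of $\extHerm(\X)$ via the identification from \S~\ref{sec:extend-herm}. The constant $F(\rho') - \inprod{d_{\rho'}}{\rho'}$ is well-defined in $\reals$ because $\inprod{d_{\rho'}}{\rho'}\in\reals$ (a subgradient of a real-valued convex function is finite when paired with its base point, and the selection condition \subselect{d}{\rho}{F} rules out $+\infty$ values). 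Using $\inprod{I_\X}{\rho}=\Tr(\rho)=1$ for any density matrix, a one-line calculation gives
\begin{equation*}
  \inprod{Z(\rho')}{\rho} \;=\; \inprod{d_{\rho'}}{\rho} + F(\rho') - \inprod{d_{\rho'}}{\rho'} \;=\; F(\rho') + \inprod{d_{\rho'}}{\rho-\rho'},
\end{equation*}
which is exactly the right-hand side of eq.~\eqref{eq:main-char}.

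Next, I would invoke the construction from the proof of Proposition~\ref{prop:projective-expressive} (which in turn uses Lemma~\ref{lem:extherm}): since $Z(\rho')\in\extHerm(\X)$, we obtain an orthonormal basis $x_1,\ldots,x_n$, an integer $k$, and extended eigenvalues $\lambda_1,\ldots,\lambda_n\in\extreals$ such that defining $\mu(\rho')_y = \vvstar{x}{y}$ and $s(\rho',y) = \lambda_y$ yields a PVM $\mu(\rho')$ and scoring function with $Z_S(\rho') = Z(\rho')$. Assembling these pieces over all $\rho'\in\Dens(\X)$ produces a (projective) quantum score $S=(s,\mu)$.

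Finally, by eq.~\eqref{eq:coeff-rho-main-char} the expected score of $S$ satisfies $S(\rho';\rho) = \inprod{Z_S(\rho')}{\rho} = \inprod{Z(\rho')}{\rho} = F(\rho') + \inprod{d_{\rho'}}{\rho-\rho'}$, which matches eq.~\eqref{eq:main-char}. Truthfulness is then immediate from Theorem~\ref{thm:main-char}, since $F$ is convex and $\{d_\rho\}$ is a valid selection of subgradients. The only genuine subtlety is bookkeeping with $\extHerm(\X)$ when $d_{\rho'}$ has $-\infty$ components; Lemma~\ref{lem:extherm} and the selection condition together ensure that the shift by $(F(\rho')-\inprod{d_{\rho'}}{\rho'})I_\X$ preserves membership in $\extHerm(\X)$ and that inner products with $\rho\in\Dens(\X)$ remain well-defined in $\reals\cup\{-\infty\}$. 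This is the only step that requires care; everything else is direct substitution.
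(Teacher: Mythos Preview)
Your proposal is correct and takes essentially the same approach as the paper, which simply notes that the proof of Proposition~\ref{prop:projective-expressive} shows $Z_S(\rho')$ can be an arbitrary element of $\extHerm(\X)$. You have filled in the natural details: writing the target expected score as $\inprod{Z(\rho')}{\rho}$ via the shift $Z(\rho')=d_{\rho'}+(F(\rho')-\inprod{d_{\rho'}}{\rho'})I_\X$, invoking the projective construction to realize that $Z(\rho')$, and concluding truthfulness from Theorem~\ref{thm:main-char}.
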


\section{Spectral Scores}
\label{sec:spectral-scores}

We now turn to a particularly natural special case of projective quantum scores.
Recall that the eigenvalues of the report $\rho'$ are nonnegative and sum to one; as such, one may interpret the vector of eigenvalues $\lambda$ as a classical probability distribution.
Moreover, the eigenvectors of $\rho$, taken to be orthonormal vectors, can be interpreted as a basis with which to measure, i.e., a PVM.
It is therefore natural to try the following.
Take the report $\rho'$ and interpret it as a classical distribution $\lambda$ (its eigenvalues) together with a PVM (an orthonormal eigenbasis); perform the measurement to produce an outcome, and use a classical scoring rule on $\lambda$ and this outcome.
We call such a quantum score a \emph{spectral score}.

For which proper scoring rules, if any, is the corresponding spectral score truthful?
It is perhaps unclear whether this scheme could work for any proper scoring rule; for example, while given a measurement basis $\mu$, the optimal report for the eigenvalues is $\inprod{\mu}{\rho}$, perhaps the agent could achieve a better expected score by misreporting the basis.

In fact, we will show that any proper scoring rule gives rise to a truthful spectral score.
This of quantum scores gives another nontrivial reduction to the classical setting, and gives a decision-theoretic view on many quantum entropies.

\subsection{Definition}
\label{sec:spectral-definition}

Let us formally define spectral scores.
Throughout this section, let $\Y = \{1,\ldots,n\}$, so that $\X = \C^\Y$.

\begin{definition}
  Let $\muspec : \Dens(\X) \to \Meas_\Y(\X)$ be a measurement given by $\muspec(\rho)_y = x_yx_y^*$ where  $\{x_1,\ldots,x_n\}$ is an arbitrary orthonormal eigenbasis such that $\rho = \sum_{y\in\Y} \lambda(\rho)_y x_yx_y^*$.
  Given a classical scoring rule $\hat s : \simplex\times\Y\to\extreals$, the corresponding \emph{spectral score} $S[\hat s]$ is the quantum score $(s,\muspec)$ with $s(\rho,y) = \hat s(\lambda(\rho),y)$.
\end{definition}

Note that spectral scores only evaluate $\hat s$ on ordered distributions $p$, i.e., those with $p_1 \geq p_2 \geq \cdots \geq p_n$.
As such, without loss of generality, we may take $\hat s$ to be permutation-invariant, meaning for any $p\in\simplex$ and any permutation $\pi:\Y\to\Y$, we have $\hat s(p,y) = \hat s(\pi p, \pi_y)$, where $(\pi p)_y = p_{\pi_y}$.
Also, observe that while we have not fully specified $\muspec$, any choice of eigenbasis will yield the same expected score function $\rho \mapsto S(\rho';\rho)$; see Lemma~\ref{lem:spectral-form}.

In fact, we have already seen an example of a truthful spectral score: Example~\ref{ex:brier-projective} is the Brier spectral score $S[\brier]$.
As another example, consider the following generalization of the log score.
\begin{example}[log spectral score]
  \label{ex:von-neumann}
  Let $S=S[\hat s]$ where $\hat s(\lambda,y) = \log \lambda_y$ is the log score.
  Computing the expected score, we have
  \begin{equation}
    \label{eq:log-example-expected-score}
    S(\rho';\rho) = \sum_{y\in\Y} \inprod{\vvstar{x}{y}}{\rho} \log \lambda(\rho')_y = \inprod{\log \rho'}{\rho}~,
  \end{equation}
  where $\lambda(\rho')$ and $\{x_1,\ldots,x_n\}$ is a spectral decomposition of $\rho'$.
  \begin{equation}
    \label{eq:log-example-divergence}
    S(\rho;\rho) - S(\rho';\rho) = \inprod{\log \rho - \log \rho'}{\rho} = D(\rho\|\rho')~,
  \end{equation}
  von Neumann relative entropy.
  Thus, since $D(\rho\|\rho')$ attains its minimum value of 0 if and only if $\rho = \rho'$, the score is truthful.
  Moreover, while the expected score of the log score $\hat s(\lambda;\lambda)$ is the Shannon entropy of $\lambda$, the expected score of its spectral quantum score $S(\rho;\rho)$ is the von Neumann entropy of $\rho$.
  The truthfulness of the log spectral score was also observed by \citet{blume2006accurate}.
\end{example}

While we have not fully specified $\muspec$, in particular because repeated eigenvalues may yield multiple choices for the orthonormal eigenbasis, all valid choices lead to the same expected score.
\begin{lemma}
  \label{lem:spectral-form}
  Let spectral score $S = S[\hat s]$ for a permutation-invariant classical scoring rule $\hat s$.
  Then for any $\rho\in\Dens(\X)$, and any spectral decomposition $\rho = \sum_{y\in\Y} \lambda(\rho)_y \vvstar{x}{y}$ of $\rho$, we have $Z_S(\rho) = \sum_{y\in\Y} \hat s(\lambda(\rho),y) \vvstar{x}{y}$.
  In particular, we have $S(\rho;\rho) = \hat s(\lambda(\rho);\lambda(\rho))$.
\end{lemma}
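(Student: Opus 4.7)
The plan is to prove the two claims in the order they are stated, with the main work going into showing that $Z_S(\rho)$ does not depend on the particular eigenbasis chosen in the spectral decomposition of $\rho$. The second claim, $S(\rho;\rho) = \hat s(\lambda(\rho);\lambda(\rho))$, will then follow by unfolding inner products.

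First I would recall from eq.~\eqref{eq:coeff-rho-main-char} that for a spectral score $S=(s,\muspec)$ with $\muspec(\rho)_y = \vvstar{x}{y}$ and $s(\rho,y) = \hat s(\lambda(\rho),y)$, we have
\[
Z_S(\rho) = \sum_{y\in\Y} s(\rho,y)\,\muspec(\rho)_y = \sum_{y\in\Y} \hat s(\lambda(\rho),y)\,\vvstar{x}{y}.
\]
The formula on the right-hand side of the lemma is now evident for the particular decomposition at hand; the content of the claim is that any other valid spectral decomposition produces the same Hermitian operator. To establish this, I would partition $\Y$ into blocks $\{I_k\}$ on which $\lambda(\rho)$ is constant, say $\lambda(\rho)_y = \alpha_k$ for $y \in I_k$. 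By Theorem~\ref{thm:spectral} and standard linear algebra, for each block $I_k$ the sum $P_k := \sum_{y\in I_k} \vvstar{x}{y}$ is the orthogonal projection onto the $\alpha_k$-eigenspace of $\rho$, and this projection is \emph{canonically determined by $\rho$}, independent of the choice of orthonormal basis within that eigenspace.

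The key step is to use permutation-invariance of $\hat s$ to show that $\hat s(\lambda(\rho),y)$ is constant on each block $I_k$. Fix $y,y'\in I_k$, and let $\pi$ be the transposition swapping $y$ and $y'$. Since $\lambda(\rho)_y = \lambda(\rho)_{y'}$, we have $\pi\lambda(\rho) = \lambda(\rho)$, so permutation invariance gives $\hat s(\lambda(\rho),y) = \hat s(\pi\lambda(\rho), \pi_y) = \hat s(\lambda(\rho),y')$. Denote this common value $c_k$. Then
\[
Z_S(\rho) = \sum_k c_k \sum_{y\in I_k} \vvstar{x}{y} = \sum_k c_k P_k,
\]
which is manifestly independent of the chosen eigenbasis. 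This is the only step requiring care; the rest is bookkeeping.

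For the second claim, I would compute $S(\rho;\rho) = \inprod{Z_S(\rho)}{\rho}$ directly. Since $\rho x_y = \lambda(\rho)_y x_y$, we get $\inprod{\vvstar{x}{y}}{\rho} = \Tr(\vvstar{x}{y}\rho) = x_y^* \rho x_y = \lambda(\rho)_y$, hence
\[
S(\rho;\rho) = \sum_{y\in\Y} \hat s(\lambda(\rho),y)\,\lambda(\rho)_y = \hat s(\lambda(\rho);\lambda(\rho)),
\]
by the definition of the expected classical score. The main obstacle, as noted, is the well-definedness argument; once permutation invariance is invoked to collapse the ambiguity within eigenspaces, everything else is a direct calculation.
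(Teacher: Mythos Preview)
The paper states this lemma without proof; it is neither proved inline nor included among the omitted proofs in the appendix. Your argument is correct and supplies exactly the justification the paper leaves implicit. The key step---using permutation invariance of $\hat s$ to show that $\hat s(\lambda(\rho),y)$ is constant on each block $I_k$ of equal eigenvalues, so that $Z_S(\rho)$ collapses to a combination of the canonical eigenspace projectors $P_k$---is precisely the right way to resolve the ambiguity in $\muspec$, and the derivation of $S(\rho;\rho)$ from $\inprod{Z_S(\rho)}{\rho}$ is routine once that is in hand.
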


\subsection{Truthfulness}
\label{sec:spectral-truthfulness}

We now give the main result of this section: spectral scores are (strictly) truthful if and only if they are given by a (strictly) proper scoring rule.
The proof relies on elegant results from convex analysis about spectral convex functions of Hermitian matrices.
Specifically, the functions $F:\Herm(\X)\to\extreals$ which are convex and invariant under unitary transformations are exactly the functions of the form $F(\rho) = f(\lambda(\rho))$ where $f:\reals^n\to\extreals$ is convex.

Moreover, the subgradients of $F$ exactly correspond to the measurement $\muspec$, as we now show.
The proof is a careful generalization of \cite[Theorem 3.1]{lewis1996convex} to handle the extended linear case.
Specifically, we need to leverage the fact that in the classical case, subgradients can only have negative infinite entries when the corresponding probability is zero; analogously in the quantum case, negative infinite subgradient eigenvalues occur only when the corresponding density eigenvalue is zero.

\begin{restatable}{lemma}{lemspecsubgrad}
  \restatehack{\begin{lemma}}
    \label{lem:spectral-convex-subgradients}\restatehack{\end{lemma}}
  Let $f:\simplex\to\reals$ be permutation-invariant and convex, and let $F:\Dens(\X)\to\reals$ be given by $F(\rho) = f(\lambda(\rho))$.
  For any $\lambda\in\simplex$ and unitary matrix $U\in\lin(\X\to\X)$, we have
  $d\in\extsub f(\lambda)$ if and only if $U \Diag(d) U^* \in \extsub F(U \Diag(\lambda) U^*)$.
\end{restatable}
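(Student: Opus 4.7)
The plan is to adapt Lewis's classical proof~\cite{lewis1996convex} of the analogous result for spectral convex functions on Hermitian matrices, taking care to accommodate the extended-linear framework. Setting $\rho_0 := U \Diag(\lambda) U^*$ and $X := U \Diag(d) U^*$, a trace computation using cyclicity together with $\Diag(d)^* = \Diag(d)$ yields $\inprod{X}{\rho_0} = \Tr(\Diag(d)\Diag(\lambda)) = \inprod{d}{\lambda}$, while permutation invariance of $f$ and unitary invariance of the spectrum give $F(\rho_0) = f(\lambda(\rho_0)) = f(\lambda)$.

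For the $(\Leftarrow)$ direction, I would substitute $\rho = U \Diag(\lambda') U^*$ for arbitrary $\lambda' \in \simplex$ into the subgradient inequality $F(\rho) \geq F(\rho_0) + \inprod{X}{\rho-\rho_0}$; the same trace identities give $F(\rho) = f(\lambda')$ and $\inprod{X}{\rho-\rho_0} = \inprod{d}{\lambda' - \lambda}$, so $f(\lambda') \geq f(\lambda) + \inprod{d}{\lambda' - \lambda}$ follows immediately. For $(\Rightarrow)$, I exploit the symmetry of $f$: for any permutation $\pi$ and any $\rho \in \Dens(\X)$,
\begin{equation*}
  f(\lambda(\rho)) = f(\pi \lambda(\rho)) \geq f(\lambda) + \inprod{d}{\pi \lambda(\rho) - \lambda}.
\end{equation*}
Maximizing over $\pi$ and applying the rearrangement inequality gives $\max_\pi \inprod{d}{\pi \lambda(\rho)} = \inprod{d^\downarrow}{\lambda(\rho)}$, where $d^\downarrow$ is $d$ sorted in decreasing order. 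Lemma~\ref{lem:inner-prod-eigenvalues} (von Neumann's trace inequality) then provides $\inprod{X}{\rho} \leq \inprod{\lambda(X)}{\lambda(\rho)} = \inprod{d^\downarrow}{\lambda(\rho)}$. Chaining these two bounds with the identities for $F(\rho_0)$ and $\inprod{X}{\rho_0}$ delivers $F(\rho) - F(\rho_0) \geq \inprod{X}{\rho - \rho_0}$, so $X \in \extsub F(\rho_0)$.

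The main obstacle is the extended case in which $d$ has $-\infty$ entries, equivalently $X \in \extHerm(\X)$ has $-\infty$ eigenvalues. As the preamble to the lemma hints, any such entry $d_i = -\infty$ must sit at an index where $\lambda_i = 0$ (respectively the $-\infty$ eigenvectors of $X$ must lie in $\ker(\rho_0)$); otherwise the subgradient inequality would force $f(\lambda') = -\infty$ at some $\lambda' \in \simplex$, contradicting $f : \simplex \to \reals$. This structural constraint keeps every inner product appearing above in $\reals \cup \{-\infty\}$ and never $+\infty$. To carry the argument through rigorously, I would decompose $X$ via the structure lemma for $\extHerm(\X)$ (Lemma~\ref{lem:extherm}) into its finite-eigenvalue part plus a $-\infty$ projection onto $\ker(\rho_0)$, and then verify that the rearrangement and von Neumann inequalities continue to hold in the extended arithmetic, for instance by truncating $-\infty$ to $-M$, applying the finite inequalities, and letting $M \to \infty$.
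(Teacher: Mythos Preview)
Your proposal is correct and follows Lewis's original argument quite faithfully; the paper takes a somewhat different route in the forward direction. Both proofs share the $(\Leftarrow)$ argument verbatim and both rely on the structural observation that $d_i=-\infty$ forces $\lambda_i=0$. For $(\Rightarrow)$, however, you maximize the subgradient inequality over permutations $\pi$ and invoke the rearrangement inequality to obtain $\inprod{d^\downarrow}{\lambda(\rho)}$, then apply the von Neumann trace inequality (Lemma~\ref{lem:inner-prod-eigenvalues}) directly, extending to $\extHerm(\X)$ by a truncation-and-limit argument. The paper instead first argues, via~\cite[Theorem~3.1]{lewis1996convex} restricted to the finite coordinates, that $d$ is already sorted in decreasing order whenever $\lambda$ is; this removes the need for any permutation step. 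It then handles the extended case by an explicit block decomposition: writing $dF_\rho = A - \infty B$, splitting on whether $\inprod{B}{\rho'}>0$, and when $\inprod{B}{\rho'}=0$ working with the leading $k\times k$ block $C$ of $U^*\rho' U$ on the finite-index coordinates. Your approach is closer to the source and arguably cleaner, since the truncation $d^{(M)}$ lets you import both the rearrangement and trace inequalities from the finite case in one stroke; the paper's block decomposition is more concrete but requires the auxiliary sortedness claim and a careful case analysis. Either way, the same chain $F(\rho')\geq f(\lambda)+\inprod{d^\downarrow}{\lambda(\rho')}-\inprod{d}{\lambda}\geq F(\rho_0)+\inprod{X}{\rho'-\rho_0}$ results.
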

\begin{restatable}{lemma}{lemFfconvex}
  \restatehack{\begin{lemma}}
    \label{lem:F-f-strict-convex}\restatehack{\end{lemma}}
  Let $f:\simplex\to\reals$ be permutation-invariant, and let $F:\Dens(\X)\to\reals$ be given by $F(\rho) = f(\lambda(\rho))$.
  Then $F$ is (strictly) convex if and only if $f$ is (strictly) convex.
\end{restatable}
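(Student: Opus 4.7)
I would prove the two directions separately, treating strictness within each.

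For the forward direction---that $F$ being (strictly) convex implies the same for $f$---I would restrict $F$ to the affine embedding $\Diag : \simplex \to \Dens(\X)$. Since $\lambda(\Diag(p))$ is a permutation of $p$, permutation-invariance of $f$ gives $F(\Diag(p)) = f(p)$. Because $\Diag$ is affine and injective, (strict) convexity of $F$ pulls back directly to (strict) convexity of $f$.

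For the reverse direction, the central tool is Ky Fan's majorization inequality, $\lambda(\alpha\rho_1 + (1-\alpha)\rho_2) \prec \alpha\lambda(\rho_1) + (1-\alpha)\lambda(\rho_2)$, combined with the standard fact that permutation-invariant convex functions are Schur-convex (via Birkhoff--von Neumann, which expresses a majorized point as a convex combination of permutations of the dominant point). These chain into
\begin{equation*}
f(\lambda(\rho)) \;\le\; f\bigl(\alpha\lambda(\rho_1) + (1-\alpha)\lambda(\rho_2)\bigr) \;\le\; \alpha f(\lambda(\rho_1)) + (1-\alpha) f(\lambda(\rho_2))~,
\end{equation*}
proving $F$ is convex when $f$ is.

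For strictness in the reverse direction, assume $\rho_1 \ne \rho_2$ and $\alpha \in (0,1)$. If $\lambda(\rho_1) \ne \lambda(\rho_2)$, strict convexity of $f$ immediately makes the second inequality above strict. The delicate case is $\lambda(\rho_1) = \lambda(\rho_2) =: \lambda_0$. Here I would show $\lambda(\rho) \ne \lambda_0$, so $\lambda(\rho) \prec \lambda_0$ strictly, and then invoke strict Schur-convexity, which follows from strict convexity plus permutation-invariance once one notes that $\lambda_0$ must have at least two distinct entries (else $\rho_1 = \rho_2 = I_\X/n$, contradicting $\rho_1 \ne \rho_2$). To prove $\lambda(\rho) \ne \lambda_0$, suppose for contradiction that $\lambda(\rho) = \lambda_0$; then $\inprod{\rho}{\rho} = \|\lambda_0\|^2 = \inprod{\rho_1}{\rho_1}$, and expanding $\inprod{\rho}{\rho}$ bilinearly using $\inprod{\rho_1}{\rho_1} = \inprod{\rho_2}{\rho_2} = \|\lambda_0\|^2$ forces $\inprod{\rho_1}{\rho_2} = \|\lambda_0\|^2 = \inprod{\lambda(\rho_1)}{\lambda(\rho_2)}$. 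This is exactly the equality case of Lemma~\ref{lem:inner-prod-eigenvalues}, yielding simultaneous diagonalization of $\rho_1, \rho_2$ with matching eigenvalues, whence $\rho_1 = \rho_2$---a contradiction.

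The main obstacle is precisely this degenerate subcase of distinct but isospectral $\rho_1, \rho_2$: strictness cannot be obtained by pushing only on the scalar inequalities over eigenvalues, so one must exploit the spectral geometry of Hermitian matrices via the equality case of Lemma~\ref{lem:inner-prod-eigenvalues} to pry the two matrices apart.
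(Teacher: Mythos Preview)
Your proof is correct and takes a genuinely different route from the paper. For the reverse (convexity) direction, the paper leans on Lemma~\ref{lem:spectral-convex-subgradients}---the subgradient correspondence between $f$ and $F$---and for strictness it uses the characterization of strict convexity via strict subgradient inequalities: it chains $F(\rho') - F(\rho) \geq \inprod{d}{\lambda(\rho')-\lambda(\rho)} \geq \inprod{U\Diag(d)U^*}{\rho'-\rho}$ and, in the isospectral case, obtains strictness by applying Lemma~\ref{lem:inner-prod-eigenvalues} to the \emph{subgradient matrix} $U\Diag(d)U^*$ against $\rho'$. You instead go through Ky Fan majorization and Schur-convexity, bypassing the subgradient machinery entirely, and in the isospectral case apply Lemma~\ref{lem:inner-prod-eigenvalues} directly to $\rho_1$ and $\rho_2$ via a clean Frobenius-norm computation. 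Your argument is more elementary and self-contained from first principles; the paper's is more economical in context, since Lemma~\ref{lem:spectral-convex-subgradients} has already been established and does the heavy lifting. (One minor remark: your parenthetical about $\lambda_0$ needing two distinct entries is not actually needed for the strict Schur-convexity step---if $\lambda(\rho)$ is sorted and differs from the sorted $\lambda_0$, it is not a permutation of $\lambda_0$, and the Birkhoff decomposition must then involve at least two distinct permuted copies of $\lambda_0$, so strict convexity alone gives the strict inequality.)
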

\begin{theorem}
  \label{thm:spectral-main-char}
  Let $S$ be the spectral quantum score given by a permutation-invariant classical scoring rule $\hat s$.
  Then $S$ is (strictly) truthful if and only if $\hat s$ is (strictly) proper.
\end{theorem}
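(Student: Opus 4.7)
The plan is to combine Lemma \ref{lem:spectral-form} with the two characterization results, Theorem \ref{thm:scoring-rule-char} on the classical side and Theorem \ref{thm:main-char} on the quantum side, using Lemma \ref{lem:spectral-convex-subgradients} to bridge classical and quantum subgradients. Setting $f(p) := \hat s(p;p)$, Lemma \ref{lem:spectral-form} gives $F(\rho) := S(\rho;\rho) = f(\lambda(\rho))$ and $Z_S(\rho') = \sum_y \hat s(\lambda(\rho'),y)\,x_y x_y^*$ for any orthonormal eigenbasis $\{x_y\}$ realizing $\rho' = \sum_y \lambda(\rho')_y x_y x_y^*$.

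For the ``if'' direction, suppose $\hat s$ is (strictly) proper. Theorem \ref{thm:scoring-rule-char} then supplies (strict) convexity of $f$ and a selection $df_\lambda \in \extsub f(\lambda)$ with $\hat s(\lambda,y) = f(\lambda) + \inprod{df_\lambda}{\ones_y - \lambda}$. Substituting into the above formula for $Z_S(\rho')$ and using $\sum_y x_y x_y^* = I_\X$ together with $\sum_y \lambda(\rho')_y x_y x_y^* = \rho'$ yields
\[
  Z_S(\rho') = \bigl(f(\lambda(\rho')) - \inprod{df_{\lambda(\rho')}}{\lambda(\rho')}\bigr) I_\X + D_{\rho'}, \qquad D_{\rho'} := \sum_y (df_{\lambda(\rho')})_y\, x_y x_y^*.
\]
Taking the inner product with $\rho$ and using $\inprod{I_\X}{\rho} = \Tr(\rho) = 1$ together with the identity $\inprod{D_{\rho'}}{\rho'} = \inprod{df_{\lambda(\rho')}}{\lambda(\rho')}$ collapses the expression to $S(\rho';\rho) = F(\rho') + \inprod{D_{\rho'}}{\rho - \rho'}$. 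Lemma \ref{lem:F-f-strict-convex} makes $F$ (strictly) convex and Lemma \ref{lem:spectral-convex-subgradients}, applied with the unitary whose columns are the $x_y$, certifies that $D_{\rho'} \in \extsub F(\rho')$, so Theorem \ref{thm:main-char} concludes that $S$ is (strictly) truthful.

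For the ``only if'' direction, the trick is to restrict $S$ to diagonal density matrices. For arbitrary $p, q \in \simplex$, set $\rho = \Diag(p)$ and $\rho' = \Diag(q)$. A brief reindexing in Lemma \ref{lem:spectral-form} by the permutation that sorts $q$, together with the permutation-invariance of $\hat s$, shows $Z_S(\Diag(q)) = \Diag(\hat s(q,\cdot))$ regardless of whether $q$ is already sorted. Hence $S(\rho';\rho) = \sum_y p_y\, \hat s(q,y) = \hat s(q;p)$ and $S(\rho;\rho) = \hat s(p;p)$, so (strict) truthfulness of $S$ on diagonal matrices directly yields (strict) properness of $\hat s$, using that $p \neq q$ implies $\Diag(p) \neq \Diag(q)$.

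I expect the subtlest point to be the forward-direction bookkeeping: one must verify that the scalar multiple of $I_\X$ and the matrix $D_{\rho'}$ recombine into the exact form demanded by Theorem \ref{thm:main-char}, and that Lemma \ref{lem:spectral-convex-subgradients} really lifts the classical subgradient $df_{\lambda(\rho')}$ to a quantum subgradient at $\rho'$ for the particular eigenbasis chosen. The remainder is a routine chaining of Theorem \ref{thm:scoring-rule-char}, Lemma \ref{lem:F-f-strict-convex}, and Theorem \ref{thm:main-char}.
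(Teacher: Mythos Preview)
Your proposal is correct and follows essentially the same approach as the paper: both directions restrict to diagonal matrices for ``only if'' and, for ``if'', combine Theorem~\ref{thm:scoring-rule-char}, Lemma~\ref{lem:spectral-form}, Lemma~\ref{lem:F-f-strict-convex}, and Lemma~\ref{lem:spectral-convex-subgradients} to put $S(\rho';\rho)$ into the form required by Theorem~\ref{thm:main-char}. Your treatment is in fact slightly more careful than the paper's in the ``only if'' direction, where you explicitly invoke permutation-invariance to justify $Z_S(\Diag(q)) = \Diag(\hat s(q,\cdot))$ for unsorted $q$; the paper's computation glosses over this sorting issue.
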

\begin{proof}
  If $S$ is (strictly) truthful, it remains so when restricting to diagonal density matrices.
  For any $\lambda,\tau\in\simplex$, we have
  \begin{align*}
    S(\Diag(\lambda);\Diag(\tau))
    = \inprod{\hat s(\Diag(\lambda))}{\Diag(\tau)}
    = \inprod{\Diag(\hat s(\lambda,\cdot))}{\Diag(\tau)}
    = \hat s(\lambda;\tau)~.
  \end{align*}
  We conclude that $\hat s$ must be (strictly) proper.

  For the converse, suppose $\hat s$ is (strictly) proper.
  By Theorem~\ref{thm:scoring-rule-char}, we may write $\hat s(\lambda,y) = f(\lambda) + df_\lambda (\ones_y - \lambda)$, where $f:\simplex\to\reals$, $f:\lambda\mapsto \hat s(\lambda;\lambda)$ is (strictly) convex, and $df_\lambda \in\extsub f(\lambda)$ for all $\lambda\in\simplex$.
  Define $F(\rho) := S(\rho;\rho)$.
  By Lemma~\ref{lem:spectral-form}, we have $F(\rho) = \hat s(\lambda(\rho);\lambda(\rho)) = f(\lambda(\rho))$.
  By Lemma~\ref{lem:F-f-strict-convex}, $F$ is (strictly) convex.
  For any $\rho\in\Dens(\X)$, let $\{x_1,\ldots,x_n\}$ be its orthonormal eigenbasis chosen by $\muspec(\rho)$, i.e., so that $\muspec(\rho)_y = \vvstar{x}{y}$.
  Take $dF_\rho = U(\rho)^* \Diag(df_{\lambda(\rho)}) U(\rho) \in \extHerm(\X)$,
  where $U(\rho)$ is the unitary matrix with columns $x_1,\ldots,x_n$.
  Since $\rho = U(\rho)^* \Diag(\lambda(\rho)) U(\rho)$, Lemma~\ref{lem:spectral-convex-subgradients} gives $dF_\rho \in \extsub F(\rho)$.

  Finally, fix any $\rho,\rho'\in\Dens(\X)$, and let $U=U(\rho')$.
  By the proof of Lemma~\ref{lem:spectral-convex-subgradients}, we have $\inprod{dF_{\rho'}}{\rho'} = \inprod{df_{\lambda(\rho')}}{\lambda(\rho')}$.
  Thus,
  \begin{align*}
    S(\rho';\rho)
    &= \inprod{Z_S(\rho')}{\rho}
    \\
    &= \inprod{U^* \Diag(\hat s(\lambda(\rho'),\cdot)) U}{\rho}
    \\
    &= \inprod{U^* \Bigl(\Diag(f(\lambda(\rho'))\ones + df_{\lambda(\rho')} - \inprod{df_{\lambda(\rho')}}{\lambda(\rho')}\ones)\Bigr) U}{\rho}
    \\
    &= f(\lambda(\rho')) + \inprod{U^* \Diag(df_\lambda) U}{\rho} - \inprod{df_{\lambda(\rho')}}{\lambda(\rho')}
    \\
    &= F(\rho') + \inprod{dF_{\rho'}}{\rho} - \inprod{dF_{\rho'}}{\rho'}
      ~.
  \end{align*}
  By Theorem~\ref{thm:main-char}, the spectral score $S$ is (strictly) truthful.
\end{proof}

Most entropy functions in quantum information theory are functions of the eigenvalues alone, and typically strictly concave~\cite{hu2006generalized}.
All such entropy functions therefore give rise to strictly truthful spectral score, via the classical analog of the entropy function.

\citet{blume2006accurate} give a similar result to Theorem~\ref{thm:spectral-main-char} for a variation of spectral scores where the scoring function $s$ must be a proper scoring rule, but the measument $\mu$ need not be $\muspec$.
They find (their Lemma 1) that such a quantum score is truthful if and only if it is spectral, i.e., if and only if $\mu = \muspec$.
They further show that the only spectral score such that $s(\rho,y)$ depends only on $\lambda(\rho)_y$ is the log spectral score, following analogous results in the classical case~\cite{bernardo1979expected,aczel1980mixed,parry2012proper}.

\subsection{Characterization in terms of unitary invariance}
\label{sec:spectral-unitary-invariance}

Spectral scores are clearly a subset of all quantum scores.
The following characterizes this subset: spectral scores are exactly those which are unitary-invariant.
we say $S$ is \emph{unitary-invariant} if for all $\rho,\rho'\in\Dens(\X)$ and all unitary $U\in\Pos(\X)$ we have $S(\rho';\rho) = S(U\rho' U^*; U\rho U^*)$.

\begin{theorem}
  A finite truthful quantum score is equivalent to a spectral score if and only if it is unitary-invariant.
\end{theorem}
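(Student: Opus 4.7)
The plan is to prove both implications by analyzing the ``coefficient matrix'' $Z_S(\rho') \in \Herm(\X)$ (Hermitian because $S$ is finite) introduced in eq.~\eqref{eq:coeff-rho-main-char}, for which $S(\rho';\rho) = \inprod{Z_S(\rho')}{\rho}$. Note that equivalent quantum scores share the same expected-score function, hence the same $Z_S$ modulo scalar multiples of $I$ (since $\Tr\rho=1$). For the forward direction, suppose $S$ is equivalent to a spectral score $S[\hat s]$. By Lemma~\ref{lem:spectral-form}, writing any spectral decomposition $\rho' = U\Diag(\lambda(\rho'))U^*$, we have $Z_{S[\hat s]}(\rho') = U\Diag(\hat s(\lambda(\rho'),\cdot))U^*$. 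Conjugating $\rho'$ by a unitary $V$ preserves its eigenvalues and sends the eigenbasis $U$ to $VU$, so $Z_{S[\hat s]}(V\rho'V^*) = V\,Z_{S[\hat s]}(\rho')\,V^*$. A direct computation then gives $S[\hat s](V\rho'V^*;V\rho V^*) = \inprod{VZ_{S[\hat s]}(\rho')V^*}{V\rho V^*} = \inprod{Z_{S[\hat s]}(\rho')}{\rho} = S[\hat s](\rho';\rho)$.

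For the backward direction, assume $S$ is finite, truthful, and unitary-invariant. Invariance translates to $\inprod{U^*Z_S(U\rho'U^*)U}{\rho} = \inprod{Z_S(\rho')}{\rho}$ for all $\rho\in\Dens(\X)$; since density matrices span the affine hyperplane of trace-one Hermitian matrices, this forces $U^*Z_S(U\rho'U^*)U - Z_S(\rho') \in \reals I$, and by shifting $Z_S$ by scalars we may assume the equivariance $Z_S(U\rho'U^*) = U Z_S(\rho')U^*$ holds exactly. The key step is to show $Z_S(\rho)$ shares an eigenbasis with $\rho$. Specializing to $\rho=\Diag(\lambda)$, any diagonal unitary $U=\Diag(e^{i\theta_1},\ldots,e^{i\theta_n})$ fixes $\rho$, so equivariance gives $Z_S(\Diag(\lambda)) = U Z_S(\Diag(\lambda)) U^*$; comparing entries $(i,j)$ yields $[Z_S(\Diag(\lambda))]_{ij}(1 - e^{i(\theta_i-\theta_j)}) = 0$ for all $\theta$, forcing all off-diagonal entries to vanish. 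For general $\rho$, writing $\rho = U\Diag(\lambda(\rho))U^*$ and applying equivariance yields $Z_S(\rho) = U\,Z_S(\Diag(\lambda(\rho)))\,U^*$, a diagonalization in the eigenbasis of $\rho$.

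It remains to build the corresponding classical scoring rule. Define $\hat s(\lambda, y) := [Z_S(\Diag(\lambda))]_{yy}$ for ordered $\lambda\in\simplex$, extended to all of $\simplex$ by permutation invariance (well-defined because conjugation by permutation matrices is equivariant). Propriety of $\hat s$ follows from restricting truthfulness of $S$ to diagonal density matrices, exactly as in the first half of the proof of Theorem~\ref{thm:spectral-main-char}: $\hat s(\lambda;\tau) = S(\Diag(\lambda);\Diag(\tau)) \leq S(\Diag(\tau);\Diag(\tau)) = \hat s(\tau;\tau)$. For the spectral score $S[\hat s]$, Lemma~\ref{lem:spectral-form} gives $Z_{S[\hat s]}(\rho) = U\Diag(\hat s(\lambda(\rho),\cdot))U^* = U\,Z_S(\Diag(\lambda(\rho)))\,U^* = Z_S(\rho)$, proving equivalence. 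The main obstacle is the degenerate eigenvalue case: when $\lambda(\rho)$ has repetitions, the eigenbasis $U$ is not unique, so we must check that $Z_S(\Diag(\lambda))$ is block-scalar within each degenerate eigenspace so that different choices of $U$ yield the same $Z_S(\rho)$. This follows by enlarging the commutant argument from diagonal to block-diagonal unitaries acting trivially on the given eigenvalue multiset, which by equivariance force $Z_S(\Diag(\lambda))$ to be a scalar on each such block.
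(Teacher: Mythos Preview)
Your proof is correct and takes a genuinely different route from the paper's. The paper leans on the convex-analytic machinery developed earlier: it invokes Theorem~\ref{thm:main-char} to write $S(\rho';\rho) = F(\rho') + \inprod{dF_{\rho'}}{\rho-\rho'}$, appeals to the Davis--Lewis characterization of unitarily invariant convex functions to get $F(\rho)=f(\lambda(\rho))$, uses Lemma~\ref{lem:spectral-convex-subgradients} to pull the subgradient structure down to $f$, and finally argues that unitary invariance forces the subgradient selection $df_{\rho'}$ to depend only on $\lambda(\rho')$. Your argument is more direct and representation-theoretic: you extract the equivariance $Z_S(U\rho'U^*)=UZ_S(\rho')U^*$ straight from the expected-score identity, then use a commutant computation (diagonal unitaries first, block unitaries for degenerate eigenvalues) to show $Z_S(\rho')$ is simultaneously diagonalizable with $\rho'$, and read $\hat s$ off the diagonal. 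Your route avoids the subgradient lemmas and the Davis--Lewis result entirely, making it more self-contained; the paper's route, in turn, makes the link to the spectral convex function $F$ explicit, which is the thematic thread of \S\ref{sec:spectral-scores}.

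One small correction: your claim that $U^*Z_S(U\rho'U^*)U - Z_S(\rho') \in \reals I_\X$ is weaker than what actually holds. You have $\inprod{A}{\rho}=0$ for \emph{every} $\rho\in\Dens(\X)$, not merely that $\inprod{A}{\cdot}$ is constant on the trace-one hyperplane; taking $\rho = xx^*$ for arbitrary unit $x$ gives $x^*Ax=0$, and since $A$ is Hermitian this forces $A=0$. This renders your ``shifting $Z_S$ by scalars'' step unnecessary---which is fortunate, because as written that step is not obviously well-posed: the putative scalar could depend on both $U$ and $\rho'$, so absorbing it into a redefinition $Z_S(\rho')\mapsto Z_S(\rho')+c(\rho')I_\X$ would require solving a cocycle condition. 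Simply observing $A=0$ sidesteps this entirely and tightens the argument.
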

\begin{proof}
  Let $S$ be a given truthful quantum score.
  If $S$ is equivalent to a spectral score $S[\hat s]$, then it is unitary-invariant by Lemma~\ref{lem:spectral-form}:
  $S(U\rho' U^*; U\rho U^*) = \inprod{U Z_S(\rho') U^*}{U\rho U^*} = \inprod{Z_S(\rho')}{\rho} = S(\rho';\rho)$.

  Conversely, if $S$ is unitary-invariant, then $F(\rho) := S(\rho;\rho)$ is a unitary-invariant convex function.
  From~\cite{davis1957all,lewis1996convex}, we may therefore write $F(\rho) = f(\lambda(\rho))$ for $f:\simplex\to\reals$ convex.
  Moreover, from Theorem~\ref{thm:main-char} we may write $S(\rho';\rho) = F(\rho') + \inprod{dF_{\rho'}}{\rho-\rho'}$ for some selection of subgradients $dF$ of $F$.
  By Lemma~\ref{lem:spectral-convex-subgradients}, for each $\rho'\in\Dens(\X)$ we may write $dF_{\rho'} = U^*\Diag(df_{\rho'}) U$, where $df_{\rho'} \in \partial f(\lambda(\rho'))$ and $U$ is a unitary matrix that diagonalizes $\rho'$.
  In principle, these choices $df_{\rho'}$ may depend on $\rho'$; we will show that they depend only on $\lambda(\rho')$.

  Let $\lambda' \in \simplex$ and define $df_{\lambda'} = df_{\Diag(\lambda')}$.
  Let $\rho' = U^* \Diag(\lambda') U$ for a unitary matrix $U$.
  Let $\rho\in\Dens(\X)$ be arbitrary.
  Then
  \begin{align*}
    S(\Diag(\lambda');\rho)
    &= f(\lambda') + \inprod{\Diag(df_{\lambda'})}{\rho - \Diag(\lambda')}
    \\
    S(\rho';U^*\rho U)
    &= f(\lambda') + \inprod{U^*\Diag(df_{\rho'}) U}{U^* \rho U-\rho'}
    \\
    &= f(\lambda') + \inprod{\Diag(df_{\rho'})}{\rho-\Diag(\lambda')}~.
  \end{align*}
  Thus, by unitary-invariance, we have $\inprod{\Diag(df_{\rho'})}{\rho-\Diag(\lambda')} = \inprod{\Diag(df_{\lambda'})}{\rho - \Diag(\lambda')}$ for all $\rho\in\Dens(\X)$.
  Without loss of generality, then, we may assume $d_{\rho'} = df_{\lambda'}$, since both induce the same expected score function.
  In particular, we have
  \begin{align*}
    S(\rho';\rho)
    &= f(\lambda') + \inprod{U^* \Diag(df_{\lambda'}) U}{\rho-\rho'}
    \\
    &= \inprod{U^* \Diag(f(\lambda')\ones + df_{\lambda'} - \inprod{df_{\lambda'}}{\lambda'}\ones) U}{\rho}
    \\
    &= S[\hat s](\rho';\rho)~,
  \end{align*}
  where $\hat s(\lambda,y) = f(\lambda) + \inprod{df_\lambda}{\ones_y - \lambda}$.
\end{proof}

\section{Quantum Properties and Elicitability}
\label{sec:properties}

Density matrices $\rho$ on $\X = \C^n$ require $n^2-1$ real parameters to specify.
In some settings, however, the principal may only be interested in a particular statistic or \emph{property} of $\rho$, such as the maximum eigenvalue $\lambda_1(\rho)$ or von Neumann entropy $H(\rho)$.
When can we design scoring rules to incentivize the truthful reporting of such properties, i.e., which properties of density matrices are elicitable?
We apply tools from the classical case to study a few natural properties in the quantum setting.
For those that are not elicitable, in \S~\ref{sec:elic-complex} we further ask their \emph{elicitation complexity}: how many dimensions an elicitable property needs to have before one can recover the property of interest.

Informally, a property determines the ``correct'' reports, from some set $\R$, for a given density matrix $\rho$.
To score such reports, we will naturally need to generalize quantum scores, so the scoring function and measurement function depend on a report $r\in\R$ instead of some reported density $\rho'$.
\begin{definition}
  A \emph{quantum property} is any function $\Gamma : \Dens(\X) \to \R$ for some set $\R$.
  In this context, a \emph{quantum score} is a pair $S=(s,\mu)$ of a scoring function $s:\R\times\N \to \reals$ and measurement function $\mu:\R\to\Meas(\X)$.
  The expected score is given by
  \begin{equation}
    \label{eq:quantum-general-def-expected-score}
    S(r;\rho) := \E_{Y \sim \inprod{\mu(r)}{\rho}} s(r,Y)~.
  \end{equation}
  We say a quantum score $S$ \emph{elicits} $\Gamma$ if for all $\rho \in \Dens(\X)$,
  \begin{equation}
    \label{eq:quantum-property-elicits}
    \{\Gamma(\rho)\} = \argmax_{r\in\R} \; S(r;\rho)~.
  \end{equation}
  The \emph{level sets} of $\Gamma$ are the sets $\Gamma_r = \{\rho\in\Dens(\X) \mid \Gamma(\rho)=r\}$ for each $r\in\R$.
\end{definition}
We will occasionally deal with set-valued properties as well, $\Gamma : \Dens(\X) \to 2^\R$, which specify a set of correct reports for each density matrix.
Here we say $S$ elicits $\Gamma$ if we have $\Gamma(\rho) = \argmax_{r\in\R} \; S(r;\rho)$, i.e., the optimal reports for $\rho$ must be exactly $\Gamma(\rho)$.
In both cases, analogous to $Z_S(\rho)$, we may define $Z_S(r) = \sum_{y\in\Y(r)} \mu(r)_y s(r,y) \in \Herm(\X)$.

Translating an observation of Osband~\cite{osband1985providing}, the level sets of any elicitable property must be convex \cite[Corollary 4.9]{frongillo2019general}.
\begin{proposition}
  If $\Gamma:\Dens(\X)\to\R$ is elicitable, the level sets $\Gamma_r$ are convex for all $r\in\R$.
\end{proposition}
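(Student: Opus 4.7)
The plan is to exploit the fact that the expected score $S(r;\rho)$ is linear in the belief $\rho$, via the identity $S(r;\rho) = \inprod{Z_S(r)}{\rho}$ (an analog of eq.~\eqref{eq:coeff-rho-main-char} applied to the general property setting, where $Z_S(r) = \sum_{y\in\Y(r)} \mu(r)_y s(r,y)$). This linearity is the crucial structural ingredient, and everything else is a formal manipulation of the defining inequality of elicitation.

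Concretely, fix $r\in\R$ and take any $\rho_1,\rho_2\in\Gamma_r$ and $\alpha\in[0,1]$; set $\rho_\alpha := \alpha \rho_1 + (1-\alpha)\rho_2$, which lies in $\Dens(\X)$ since $\Dens(\X)$ is convex. I want to show $\Gamma(\rho_\alpha) = r$, i.e., that $r$ is the unique maximizer of $S(\cdot\,;\rho_\alpha)$. By linearity of $\rho\mapsto \inprod{Z_S(r')}{\rho}$, for any $r'\in\R$,
\begin{equation*}
  S(r';\rho_\alpha) = \alpha S(r';\rho_1) + (1-\alpha) S(r';\rho_2)~.
\end{equation*}
Since $S$ elicits $\Gamma$ and $\Gamma(\rho_i)=r$, for any $r'\neq r$ we have $S(r';\rho_i) < S(r;\rho_i)$ for $i=1,2$. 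Taking the convex combination preserves the strict inequality (as at least one of $\alpha, 1-\alpha$ is positive; the boundary cases $\alpha\in\{0,1\}$ are immediate), so $S(r';\rho_\alpha) < S(r;\rho_\alpha)$. Hence $r$ is the unique maximizer, giving $\rho_\alpha\in\Gamma_r$.

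There is no serious obstacle here: the only subtlety worth double-checking is the set-valued case, but the proof plan above works verbatim with the definition that $\Gamma(\rho)=\argmax_{r\in\R} S(r;\rho)$, using that if $r$ is optimal at both $\rho_1$ and $\rho_2$ then it is optimal at any convex combination (with the same linearity argument, now replacing strict inequality by weak inequality). This is exactly Osband's principle translated into the quantum setting, and the entire argument rests on the fact that the expected score is linear in the belief $\rho$, which is a consequence of the Born rule $p_y = \inprod{\mu(r)_y}{\rho}$.
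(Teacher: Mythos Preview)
Your argument is correct and is exactly Osband's principle, which is precisely what the paper invokes: it does not write out a proof but simply cites the result as a translation of Osband's observation \cite[Corollary 4.9]{frongillo2019general}. Your write-up supplies the standard details the paper omits, resting on the same linearity $S(r;\rho)=\inprod{Z_S(r)}{\rho}$ and the uniqueness in eq.~\eqref{eq:quantum-property-elicits}.
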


We now consider various properties of interest in the quantum setting, either deriving quantum scores to elicit them, or more often, showing that they are not elicitable.

\paragraph{Expectations and linear properties}
Notions of random variables and expected values are somewhat subtle for quantum states~\cite{hudson2018short}.
In essence, we can think of a random variable $Z$ as a pair $(z:\Y\to\reals^k,\mufix\in\Meas_\Y(\X))$ for some (for our purposes, finite) set $\Y$.
Then the expected value $\E_\rho Z$ under $\rho\in\Dens(\X)$ is simply $\E_{Y\sim\inprod{\mufix}{\rho}} z(Y)$, consistent with our usage in this article.
Since classical expectations are elicitable, so are quantum expectations.
Take any scoring rule $\hat s:\reals^k\times\Y\to\reals$ for the mean, and apply the fixed-measurement quantum score $S = (\hat s,\mufix)$: we have $S(r;\rho) = \hat s(r;\inprod{\mufix}{\rho})$, which is uniquely maximized by $r = \E_{Y\sim\inprod{\mufix}{\rho}} z(Y) = \E_\rho Z$.
The above generates scores of the form $S(r;\rho) = G(r) + \inprod{dG_r}{\E_\rho Z - r} + \E_\rho Z'$, where $G$ is convex with subgradients $dG$, and $Z' = (z',\mufix)$ is another quantum random variable~\cite{frongillo2015vector-valued}.
We may write this expectation as the property $\Gamma(\rho) = \E_\rho Z = \inprod{\sum_{y\in\Y} z(y) \mufixy}{\rho}$.
Since the coefficients $z(y)$ are arbitrary, we can see that the first term in the inner product ranges over all possible $A\in\Herm(\X)$.
Expected values thus form all possible linear maps from $\Dens(\X)$ to $\reals^k$.

\paragraph{Entropy, norms, entanglement}
Just as in the classical case, most entropy functions are strictly concave, and their level sets are typically non-convex.
In particular, one can check the non-convexity of level sets for all the examples in \S~\ref{sec:quant-stat-meas}, e.g., von Neumann, Tsallis, and  R\'enyi.
These functions are therefore all non-elicitable.
As many entanglement measures reduce to an entropy function as a special case, most are likely non-elicitable as well~\citep{horodecki2001entanglement,plenio2014introduction}.
Finally, essentially all non-trivial norms have non-convex level sets, such as the Schatten norms $\|\rho\|_\beta := \|\lambda(\rho)\|_\beta$, where the latter is the usual vector norm $\|x\|_\beta = (\sum_i |x_i|^\beta)^{1/\beta}$.

\paragraph{Eigenvalues}
An interesting query unique to the quantum setting is the vector of eigenvalues of a density matrix.
As we have seen, in many ways one can consider $\lambda(\rho)\in\simplex$ to be the classical distribution ``living inside'' the density matrix $\rho$.
As classical distributions are elicitable (via strictly proper scoring rules), and density matrices are (via strictly truthful quantum scores), one may guess that the function $\lambda : \Dens(\X) \to \simplex$, considered as a property, is also elicitable.
Unfortunately, $\lambda$ is not elicitable; again one can easily check that its level sets are non-convex.
(As a simple example, consider $\rho_1 = \tfrac 1 4 \matmat{1 0; 0 3}$ and $\rho_2 = \tfrac 1 4 \matmat{3 0; 0 1}$; then $\lambda(\rho_1)=\lambda(\rho_2)=(\tfrac 3 4, \tfrac 1 4)$, yet $\lambda(\tfrac 1 2 \rho_1 + \tfrac 1 2 \rho_2) = (\tfrac 1 2, \tfrac 1 2)$.)
In fact, even specifying a particular entry of $\lambda$, such as the maximum eigenvalue $\lambda_1$, leads to non-convex level sets.
(Same example.)
As such, eigenvalues are also non-elicitable.

\paragraph{Eigenvectors}
While eigenvalues of $\rho$ are not elicitable, eigenvectors are.
Specifically, one can elicit a set of orthonormal eigenvectors corresponding to the top (or bottom) $k$ eigenvalues, for any $k$.
Let us first see how to elicit an eigenvector of $\rho$ with the maximum eigenvalue.
Formally, set $\nsphere = \{x\in\X \mid \inprod{x}{x}=1\}$, and define $\Gamma_1 : \Dens(\X) \to 2^\nsphere$ to be the set of eigenvectors of $\rho$ with eigenvalue $\lambda_1(\rho)$.
Fix $\Y = \{0,1\}$ and take $S=(s,\mu)$ where $s(x,y) = \ones\{y = 1\}$ and $\mu(x) = \{I_\X - xx^*,xx^*\}$.
(Note that $s$ does not actualy depend on $x$.)
Then expected score is $S(x ; \rho) = \inprod{xx^*}{\rho} = \inprod{x}{\rho x}$, which is maximized when $x$ is an eigenvector with eigenvalue $\lambda_1(\rho)$.
(See \S~\ref{sec:discussion-finite-properties} for how $\Gamma_1$ relates to the mode for classical distributions.)

We can generalize this argument to the eigenvectors corresponding to the highest $k$ eigenvalues.
Let $\Gamma_{1..k}$ be the set-valued property specifying $k$ orthonormal vectors $x_1,\ldots, x_k$ with eigenvalues $\lambda_1(\rho),\ldots,\lambda_k(\rho)$.
Now set $\Y = \{1,\ldots,k,k+1\}$ and take $S=(s,\mu)$ as follows.
Let $s((x_1,\ldots,x_k),y) = v_y$ for some score vector $v\in\reals^n$, with $v_1 > v_2 > \ldots > v_k > v_{k+1} = \ldots = v_n = 0$.
(Note again that $s$ does not depend on the report.)
Let $\mu((x_1,\ldots,x_k)) = \{x_1x_1^*, \ldots, x_kx_k^*, I_\X - \sum_{i=1}^k x_ix_i^*\}$.
Then $S((x_1,\ldots,x_k);\rho) = \inprod{\sum_{i=1}^k v_i x_ix_i^*}{\rho}$.
Let $A = \sum_{i=1}^k v_i x_ix_i^*$ and note $\lambda(A) = v$.
By Lemma~\ref{lem:inner-prod-eigenvalues}, we have $S((x_1,\ldots,x_k);\rho) \leq \inprod{v}{\lambda(\rho)}$, with equality if and only if there exists a unitary $U$ such that $A = U \Diag(v) U^*$ and $\rho = U \Diag(\lambda(\rho)) U^*$.
The latter is equivalent to $x_1,\ldots,x_k$ being eigenvectors of $\rho$ with eigenvalues $\lambda_1(\rho),\ldots,\lambda_k(\rho)$.
In fact, this argument can even be generalized to simultaneously elicit the top $k$ eigenvectors and bottom $m$ eigenvectors.
\footnote{Require $v_1 > v_2 > \cdots > v_k > v_{k+1} = \cdots = v_{n-m} = 0 > v_{n-m-1} > \cdots > v_n$.  Set $s((x_1,\ldots,x_{k},x_{n-m-1},\ldots,x_n),y) = v_y$.
Let $\mu((x_1,\ldots,x_{k},x_{n-m-1},\ldots,x_n)) = \{x_1x_1^*, \ldots, x_nx_n^*\}$, where $x_{k+1},\ldots,x_{n-m}$ are chosen arbitrarily to complete the given orthonormal vectors to an orthonormal basis.  Then $S((x_1,\ldots,x_k,x_{n-m-1},\ldots,x_n);\rho) = \inprod{\sum_{i=1}^n v_i x_ix_i^*}{\rho}$, which is again upper bounded by $\inprod{v}{\lambda(\rho}$, again with equality if and only if the $x_i$ are eigvectors of $\rho$ with eigenvalues $\lambda_i(\rho)$.}

\section{Elicitation Complexity}
\label{sec:elic-complex}

When confronted with a non-elicitable property $\Gamma$, it is natural to ask its \emph{elicitation complexity}~\cite{lambert2008eliciting,frongillo2021elicitation}.
That is, what is the smallest dimension of an elicitable property $\hat \Gamma$ from which one can compute $\Gamma$?
In this section, we develop general tools which allow us to study the elicitation complexity of the non-elicitable properties from the previous section, namely entropy, norms, and eigenvalues.

\subsection{Definitions}

To avoid highly discontinuous constructions, one must place restrictions on the intermediate property $\hat\Gamma$ above.
Here, as in previous works, we require $\hat\Gamma$ to be \emph{indentifiable}, meaning each of its level sets $\{\rho \mid \Gamma(\rho) = r\}$ is defined by a linear constraint.
\begin{definition}
  \label{def:identifiable}
  A (quantum) property $\Gamma:\Dens(\X)\to\reals^k$ is \emph{identifiable} if, for each $r\in\Gamma(\Dens(\X))$, there exists $V(r)\in\Herm(\X)^k$ such that for all $\rho\in\Dens(\X)$,
  \begin{equation}
    \label{eq:iden}
    \Gamma(\rho)=r \iff \inprod{V(r)}{\rho} = 0 \in \reals^k~,
  \end{equation}
  where $\inprod{V(r)}{\rho} = (\inprod{V(r)_i}{\rho})_{i=1}^k$.
  In this case we call $V$ the \emph{identification function} for $\Gamma$.
\end{definition}
The analagous definition for the classical case takes $v:\R\to\reals^{k\times\Y}$ and requires $\Gamma(p) = r \iff \inprod{v(r)}{p} = 0 \in \reals^k$.
Identifiability is relatively a weak condition; for example, in the classical setting, any convex-elicitable property (one elicited by a convex loss function / concave scoring rule) is also identifiable~\cite{finocchiaro2021unifying}.

\begin{definition}
  \label{def:elic-complex}
  A (quantum) property $\Gamma:\Dens(\X)\to\R$ is \emph{$k$-elicitable}
  if there exists an elicitable and identifiable property $\hat \Gamma:\Dens(\X)\to\reals^k$ and map $\psi:\reals^k\to\R$ such that $\Gamma = \psi \circ \hat \Gamma$.
  The \emph{elicitation complexity} of $\Gamma$ is $\elic_Q(\Gamma) = \min\{k: \Gamma \text{ is $k$-elicitable}\}$.
\end{definition}

The analogous definition of elicitation complexity for the classical case follows using the corresponding definition of identifiability.
We will additionally restrict to a subset of classical distributions $\P \subseteq \simplex$:
for a (classical) property $\Gamma:\P\to\R$, we denote by $\elic_\P(\Gamma)$ its elicitation complexity when restricted to $\P$.

Immediate from Definition~\ref{def:elic-complex} is that if $\Gamma$ is a function of $\Gamma'$, it must have weakly lower elicitation complexity.
In this case we say $\Gamma'$ \emph{refines} $\Gamma$.

\begin{lemma}
  Let $\Gamma:\Dens(\X)\to\R,\Gamma':\Dens(\X)\to\R'$ such that $\Gamma = f \circ \Gamma'$ for some $f:\R'\to\R$.
  Then $\elicq(\Gamma) \leq \elicq(\Gamma')$.
\end{lemma}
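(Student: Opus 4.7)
The plan is to show this by direct composition, using the definition of $k$-elicitability verbatim. Let $k = \elicq(\Gamma')$; I will exhibit an elicitable and identifiable property of dimension $k$ from which $\Gamma$ can also be recovered, thereby witnessing $\elicq(\Gamma)\leq k$.

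First, by Definition~\ref{def:elic-complex} applied to $\Gamma'$, there exists an elicitable and identifiable $\hat\Gamma:\Dens(\X)\to\reals^k$ together with a map $\psi:\reals^k\to\R'$ such that $\Gamma' = \psi\circ\hat\Gamma$. The key observation is that the same $\hat\Gamma$ already serves as a witness for $\Gamma$: define $\tilde\psi := f\circ\psi : \reals^k\to\R$, and then
\begin{equation*}
  \Gamma = f\circ\Gamma' = f\circ(\psi\circ\hat\Gamma) = (f\circ\psi)\circ\hat\Gamma = \tilde\psi\circ\hat\Gamma~.
\end{equation*}
Since $\hat\Gamma$ is elicitable and identifiable of dimension $k$, and $\Gamma$ factors through $\hat\Gamma$ via $\tilde\psi$, Definition~\ref{def:elic-complex} shows that $\Gamma$ is $k$-elicitable. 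Taking the minimum over such $k$ gives $\elicq(\Gamma)\leq\elicq(\Gamma')$.

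There is no real obstacle here: the result is essentially a tautology from how $k$-elicitability is defined, since nothing in Definitions~\ref{def:identifiable} and~\ref{def:elic-complex} constrains the postcomposed link map $\psi$ (e.g., continuity, measurability), so composition with $f$ is unrestricted. If one wished to impose regularity on $\psi$ (for example, continuity, as is sometimes done in the literature), then the only additional step would be to require the same regularity on $f$ so that $f\circ\psi$ inherits it; in the present setting this is not needed.
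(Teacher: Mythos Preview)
Your proof is correct and matches the paper's own treatment: the paper states this lemma as ``immediate from Definition~\ref{def:elic-complex}'' without giving a formal argument, and your composition $\tilde\psi = f\circ\psi$ is precisely the one-line justification that makes this immediacy explicit.
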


\subsection{Reduction to the classical case}
\label{sec:reduct-class-case}

To understand quantum elicitation complexity, we will apply results for the classical case.
To bridge the two definitions, we will leverage and expand on the correspondance between fixed-measurement quantum scores and classical scoring rules from Example~\ref{ex:classical-reduction} and Proposition~\ref{prop:fixed-meas-expressive}.

For the remainder of this section, fix any tomographically complete $\mufix \in \Meas_\Y(\X)$ for some finite set $\Y$.
As before, define $\phi:\Herm(\X)\to\reals^\Y$ to be the linear operator $\phi X = \inprod{\mufix}{X}$.
As $\mufix$ is tomographically complete, the rows of $\phi$ span $\Herm(\X)$.
Thus $\phi$ has a linear left inverse, namely the Moore--Penrose pseudoinverse $\phi^+:\reals^\Y\to\Herm(\X)$.
The adjoint of $\phi^+$ is the linear operator $\phi^{+*}:\Herm(\X)\to\reals^\Y$.
Define $\Pfix = \{\inprod{\mufix}{\rho} \mid \rho \in \Dens(\X)\} \subseteq \simplex$, which is the image of $\Dens(\X)$ under $\phi$.
In what follows, given any property $\Gamma:\Dens(\X)\to\R$, we define $\Gammafix:\Pfix\to\R$ by $\Gammafix(p) = \Gamma(\phi^+ p)$.

\begin{restatable}{proposition}{propeliciden}
  \restatehack{\begin{proposition}}
    \label{prop:quantum-classical-elic-iden}\restatehack{\end{proposition}}
  Let $\Gamma:\Dens(\X)\to\R$.
  $\Gamma$ is elicitable if and only if $\Gammafix$ is elicitable, and $\Gamma$ is identifiable if and only if $\Gammafix$ is identifiable.
\end{restatable}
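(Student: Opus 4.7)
The strategy is to exploit the fact that tomographic completeness makes $\phi:\Herm(\X)\to\reals^\Y$ injective, so $\phi^+\phi$ equals the identity on $\Herm(\X)$ and $\phi$ restricts to a bijection $\Dens(\X)\to\Pfix$ satisfying $\Gammafix(\phi\rho)=\Gamma(\rho)$ for all $\rho$. Then the adjoint operator $\phi^{+*}:\Herm(\X)\to\reals^\Y$ provides a clean dictionary for translating the linear objects appearing in the definitions of elicitability (namely $Z_S(r)$) and identifiability (namely $V(r)$) between the quantum and classical settings.

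For the elicitability equivalence, the direction from $\Gammafix$ to $\Gamma$ is the fixed-measurement reduction of Example~\ref{ex:classical-reduction}: given a classical scoring rule $\hat s$ eliciting $\Gammafix$, I would take $S=(\hat s,\mufix)$, so $S(r;\rho)=\E_{Y\sim\phi\rho}\hat s(r,Y)$ has unique maximizer $\Gammafix(\phi\rho)=\Gamma(\rho)$. For the converse, starting from $S=(s,\mu)$ eliciting $\Gamma$, I would use eq.~\eqref{eq:coeff-rho-main-char} to write $S(r;\rho)=\inprod{Z_S(r)}{\rho}$ and define a classical rule $\hat s(r,\cdot):=\phi^{+*}Z_S(r)\in\reals^\Y$. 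The adjoint identity combined with $\phi^+\phi\rho=\rho$ then gives $\E_{Y\sim p}\hat s(r,Y)=\inprod{Z_S(r)}{\phi^+ p}=S(r;\phi^+ p)$ for every $p\in\Pfix$, which has unique maximizer $\Gamma(\phi^+ p)=\Gammafix(p)$.

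For identifiability, the same dictionary applies componentwise. If $V:\R\to\Herm(\X)^k$ identifies $\Gamma$, I would set $v(r)_i:=\phi^{+*}V(r)_i\in\reals^\Y$ and verify $\inprod{v(r)_i}{\phi\rho}=\inprod{V(r)_i}{\rho}$, which carries the defining equivalence of identifiability through the bijection $\phi|_{\Dens(\X)}$. Conversely, given $v:\R\to\reals^{k\times\Y}$ identifying $\Gammafix$, I would lift via $V(r)_i:=\sum_{y\in\Y}v(r)_{iy}\,\mufixy$ and observe that $\inprod{V(r)_i}{\rho}=\inprod{v(r)_i}{\phi\rho}$ produces the matching identification of $\Gamma$.

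The only delicate point I anticipate is checking that the argmax-uniqueness in the definition of elicitability transfers faithfully in both directions, but this reduces entirely to the fact that $\phi$ is a bijection between $\Dens(\X)$ and $\Pfix$, so no two distinct densities (or, on the classical side, two distinct points of $\Pfix$) collapse. Everything else is a routine diagram chase through $\phi$ and $\phi^{+*}$; in particular, since quantum scores for properties are required to be real-valued, I avoid all the subtleties around extended Hermitians that arose earlier in the paper.
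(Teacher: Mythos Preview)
Your proposal is correct and follows essentially the same approach as the paper's own proof: both directions of elicitability use the fixed-measurement score $(\hat s,\mufix)$ in one direction and the classical rule $\hat s(r,\cdot)=\phi^{+*}Z_S(r)$ in the other, and identifiability is transferred componentwise via $\phi^{+*}$ and $\phi^*$ (your sum $\sum_y v(r)_{iy}\mufixy$ is exactly $\phi^* v(r)_i$). The only cosmetic difference is that you spell out $\phi^*$ explicitly rather than naming it.
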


\begin{corollary}\label{cor:elic-complex-equiv}
  For all $\Gamma:\Dens(\X)\to\R$, we have $\elic_Q(\Gamma) = \elic_{\Pfix}(\Gammafix)$.
\end{corollary}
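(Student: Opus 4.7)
The plan is to prove both inequalities $\elic_Q(\Gamma) \leq \elic_\Pfix(\Gammafix)$ and $\elic_Q(\Gamma) \geq \elic_\Pfix(\Gammafix)$ by passing elicitable and identifiable intermediate properties back and forth through the linear maps $\phi$ and $\phi^+$, invoking Proposition~\ref{prop:quantum-classical-elic-iden} in each direction. The key preliminary observation is that tomographic completeness of $\mufix$ makes $\phi$ injective on $\Herm(\X)$, so its Moore--Penrose pseudoinverse satisfies $\phi^+\phi = I_{\Herm(\X)}$; in particular $\phi^+(\phi\rho) = \rho$ for every $\rho \in \Dens(\X)$, which gives the identity $\Gammafix(\phi\rho) = \Gamma(\rho)$ and, dually, $\Gammafix(p) = \Gamma(\phi^+ p)$ for $p \in \Pfix$.

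For the inequality $\elic_Q(\Gamma) \leq \elic_\Pfix(\Gammafix)$, I would start from an elicitable and identifiable classical property $\hat P : \Pfix \to \reals^k$ with $\Gammafix = \psi \circ \hat P$ witnessing $\elic_\Pfix(\Gammafix) = k$. Define the quantum property $\hat\Gamma : \Dens(\X) \to \reals^k$ by $\hat\Gamma(\rho) = \hat P(\phi\rho)$. Then $(\hat\Gamma)^\diamond(p) = \hat\Gamma(\phi^+ p) = \hat P(\phi\phi^+ p) = \hat P(p)$ on $\Pfix$, so by Proposition~\ref{prop:quantum-classical-elic-iden} the property $\hat\Gamma$ is elicitable and identifiable. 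Since $\Gamma(\rho) = \Gammafix(\phi\rho) = \psi(\hat P(\phi\rho)) = \psi(\hat\Gamma(\rho))$, the property $\Gamma$ is $k$-elicitable, yielding $\elic_Q(\Gamma) \leq k$.

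For the reverse inequality, start from an elicitable and identifiable quantum property $\hat\Gamma : \Dens(\X) \to \reals^k$ with $\Gamma = \psi \circ \hat\Gamma$ witnessing $\elic_Q(\Gamma) = k$. By Proposition~\ref{prop:quantum-classical-elic-iden}, $(\hat\Gamma)^\diamond : \Pfix \to \reals^k$ is elicitable and identifiable. For $p \in \Pfix$ we have $\Gammafix(p) = \Gamma(\phi^+ p) = \psi(\hat\Gamma(\phi^+ p)) = \psi((\hat\Gamma)^\diamond(p))$, so $\Gammafix = \psi \circ (\hat\Gamma)^\diamond$, giving $\elic_\Pfix(\Gammafix) \leq k$.

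I do not expect a serious obstacle here: the result is essentially a bookkeeping corollary of Proposition~\ref{prop:quantum-classical-elic-iden}, and the only subtlety is keeping straight the two compositions with $\phi$ and $\phi^+$ so that the link map $\psi$ transfers unchanged between the quantum and classical sides. The only place to be careful is that the intermediate property on the other side has the \emph{same} dimension $k$, which is automatic because $\hat\Gamma$ and $(\hat\Gamma)^\diamond$ (resp.\ $\hat P$ and its pullback by $\phi$) take values in the same $\reals^k$.
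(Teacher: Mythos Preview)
Your proposal is correct and is exactly the natural unpacking of Corollary~\ref{cor:elic-complex-equiv}; the paper states the result without proof as an immediate consequence of Proposition~\ref{prop:quantum-classical-elic-iden}, and your two directions simply spell out why the $k$-elicitability witnesses transfer through $\phi$ and $\phi^+$ with the same link map $\psi$.
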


With the above equivalence, we can reduce the problem of quantum elicitation complexity to the classical elicitation complexity with respect to the subset $\Pfix \subseteq \simplex$.
This reduction suffices to establish the complexities of several properties relevant to quantum information theory.

\subsection{Entropy, norms, and eigenvalues}
\label{sec:complexity-entropy-norms-eigenvalues}

Classically, most entropy functions and norms are hard to elicit: they have the same complexity as eliciting the entire distribution itself.
This result follows from a more general one that any strictly convex (or strictly concave) function of the distribution has maximal elicitation complexity.
Using Corollary~\ref{cor:elic-complex-equiv}, we will show that the same holds for the quantum setting.

To begin, let us restate the result for the classical case.
Let $\id:\Dens(\X)\to\Dens(\X)$ be the identity, $\id: \rho\mapsto\rho$.
Then $\idfix:\Pfix\to\Pfix$ is also the identity, $\idfix:p\mapsto p$.
As $\Pfix$ is a convex subset of $\simplex$, we have the following from \citet[Corollary 2]{frongillo2021elicitation}.

\begin{proposition}
  \label{prop:classical-entropy-norm}
  Let $G:\Pfix\to\reals$ be strictly convex.
  Then $\elicfix(G) = \elicfix(\idfix)$.
\end{proposition}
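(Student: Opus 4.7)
The proposition is cited from \citet{frongillo2021elicitation}, so the plan is to sketch how their argument specializes here. The easy direction, $\elicfix(G) \leq \elicfix(\idfix)$, is immediate: since $G$ is a function of the distribution itself, we can take $\hat\Gamma = \idfix$ and $\psi = G$, and $\idfix$ is elicitable on $\Pfix$ via any strictly proper classical scoring rule (restricted to $\Pfix$) and identifiable via the obvious linear identification $V(p) = p - \cdot$.

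The main content is the lower bound $\elicfix(G) \geq \elicfix(\idfix)$. The plan is the following. Suppose $G = \psi \circ \hat\Gamma$ for some elicitable and identifiable $\hat\Gamma : \Pfix \to \reals^k$, where $k = \elicfix(G)$. By identifiability of $\hat\Gamma$, each level set $\hat\Gamma^{-1}(r)$ is cut out from $\Pfix$ by $k$ linear equations, i.e.\ it equals $\Pfix \cap A_r$ for some affine subspace $A_r$ of codimension at most $k$. Since $G = \psi\circ\hat\Gamma$, the function $G$ is constant on each such level set. Strict convexity of $G$ on $\Pfix$ then forces each level set to contain no nontrivial line segment: if $p_0,p_1 \in \Pfix$ lie in the same level set of $\hat\Gamma$, then $G$ is constant on the segment $[p_0,p_1]\subseteq \Pfix$, contradicting strict convexity unless $p_0=p_1$. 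Hence $\hat\Gamma$ is injective on $\Pfix$.

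From injectivity and the affine structure of the level sets, one concludes that the linear identification data for $\hat\Gamma$ already suffice to pin down any point of $\Pfix$ uniquely, so $\hat\Gamma$ also refines $\idfix$ in the sense that $\idfix = (\hat\Gamma|_{\Pfix})^{-1} \circ \hat\Gamma$. Taking $\psi' = (\hat\Gamma|_{\Pfix})^{-1}$ (which is well-defined on $\hat\Gamma(\Pfix)$) therefore exhibits $\idfix$ as a function of the same $k$-dimensional elicitable/identifiable $\hat\Gamma$, giving $\elicfix(\idfix) \leq k = \elicfix(G)$.

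The step I expect to be the main obstacle is the passage from ``each level set of $\hat\Gamma$ is a singleton'' to ``$\idfix$ itself is $k$-elicitable via the same $\hat\Gamma$''. Making this rigorous requires verifying that the inverse $\psi' = (\hat\Gamma|_{\Pfix})^{-1}$ is a legitimate link function in the sense of Definition~\ref{def:elic-complex} (and not, e.g., pathologically discontinuous), and that identifiability of $\hat\Gamma$ with respect to $\Pfix$ is preserved under the reduction. The cited Corollary 2 of \citet{frongillo2021elicitation} handles exactly this, by appealing to the fact that affine level sets of an identifiable property form a ``level-set partition'' whose quotient map is as well-behaved as $\hat\Gamma$ itself; the proof here is a direct transcription once one observes that $\Pfix$ is a convex subset of $\simplex$ and thus fits into the framework of that paper.
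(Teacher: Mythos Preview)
The paper does not prove this proposition at all; it simply invokes \citet[Corollary~2]{frongillo2021elicitation}, noting only that $\Pfix$ is a convex subset of $\simplex$ so that the cited result applies verbatim. Your sketch is therefore not a comparison against the paper's proof but rather a reconstruction of the argument behind the cited corollary, and as such it is essentially correct.

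One remark: the ``main obstacle'' you flag is not actually an obstacle here. In Definition~\ref{def:elic-complex} the link $\psi:\reals^k\to\R$ is an arbitrary set-theoretic map, with no continuity or regularity requirement. Once you have shown $\hat\Gamma$ is injective on $\Pfix$, you may define $\psi'$ to be $(\hat\Gamma|_{\Pfix})^{-1}$ on $\hat\Gamma(\Pfix)$ and anything at all elsewhere, and $\idfix = \psi'\circ\hat\Gamma$ follows immediately. There is nothing further to check about $\psi'$, and identifiability/elicitability of $\hat\Gamma$ are properties of $\hat\Gamma$ alone, unaffected by the choice of link. So your lower-bound argument is already complete at the point where injectivity is established.
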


Recall that we set $\X = \C^n$.
It is well-known that a density matrix on $\C^n$ requires $n^2-1$ real parameters to specify, which suggests that we should have $\elicfix(\idfix) = n^2-1$.
To see that this complexity is correct, note that as the image of $\Dens(\X)$ an injective linear map, the affine dimension of $\Pfix$ must be the affine dimension of $\Dens(\X))$, which is $\dim(\Herm(\X))-1=n^2-1$; the result now follows from \cite[Lemma 5]{frongillo2021elicitation}.
Thus, $\elicq(\id) = n^2-1$ by Proposition~\ref{prop:quantum-classical-elic-iden}.
As a corollary, since $\id$ refines all properties, we have $\elicq(\Gamma) \leq n^2-1$ for all quantum properties $\Gamma$.

Combining the above, we can now establish the complexity of strictly convex functions of a density matrix.
\begin{theorem}\label{thm:quantum-entropy-norms}
  For any strictly convex $F:\Dens(\X)\to\reals$, we have $\elicq(F) = n^2-1$.
\end{theorem}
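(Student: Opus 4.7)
The plan is to reduce to the classical case via Corollary~\ref{cor:elic-complex-equiv} and then apply the classical result (Proposition~\ref{prop:classical-entropy-norm}), so the proof essentially boils down to verifying that strict convexity is preserved under the pullback $F \mapsto \Ffix$.

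Concretely, first I would note that $\phi : \Herm(\X) \to \reals^\Y$ is injective, since tomographic completeness of $\mufix$ means that $\ker(\phi) = \{X : \inprod{\mufix_y}{X}=0 \text{ for all } y\} = \mathrm{span}(\mufix_y)^\perp = \{0\}$. In particular, the pseudoinverse $\phi^+$ is a linear left inverse of $\phi$, and it restricts to a bijection $\phi^+ : \Pfix \to \Dens(\X)$ with inverse $\phi$. Next, I would define $\Ffix : \Pfix \to \reals$ by $\Ffix(p) = F(\phi^+ p)$ and check that strict convexity transfers: for distinct $p,q \in \Pfix$ and $t\in(0,1)$, linearity of $\phi^+$ and strict convexity of $F$ give
\begin{equation*}
  \Ffix(tp + (1-t)q) = F(t\phi^+ p + (1-t)\phi^+ q) < tF(\phi^+ p) + (1-t)F(\phi^+ q) = t\Ffix(p) + (1-t)\Ffix(q),
\end{equation*}
using that $\phi^+ p \neq \phi^+ q$ because $\phi^+$ is injective on $\Pfix$.

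Having established that $\Ffix$ is strictly convex on $\Pfix$, I would apply Proposition~\ref{prop:classical-entropy-norm} to conclude $\elicfix(\Ffix) = \elicfix(\idfix)$. Combined with the observation made immediately before the theorem statement, namely $\elicfix(\idfix) = n^2 - 1$ (since $\Pfix$ has affine dimension $n^2-1$), and Corollary~\ref{cor:elic-complex-equiv}, which equates $\elicq(F) = \elicfix(\Ffix)$, the result follows.

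The only subtlety (hardly an obstacle) is confirming strict convexity of $\Ffix$, which requires $\phi^+$ to be injective on $\Pfix$; this is immediate from tomographic completeness. The rest is bookkeeping: chaining the equivalence of complexities across the classical/quantum boundary, and invoking the known classical result that every strictly convex function on a full-dimensional convex set in $\simplex$ has maximal elicitation complexity.
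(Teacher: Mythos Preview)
Your proposal is correct and follows essentially the same approach as the paper's proof: define $F^\diamond(p) = F(\phi^+ p)$, show it is strictly convex, then chain Proposition~\ref{prop:classical-entropy-norm}, Corollary~\ref{cor:elic-complex-equiv}, and the computation $\elicfix(\idfix)=n^2-1$. The only cosmetic difference is that the paper packages the strict-convexity-preservation step as a standalone lemma (Lemma~\ref{lem:convex-affine-composition}), whereas you verify it inline; the argument is the same.
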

\begin{proof}
  Let $F^\diamond:\Pfix\to\reals$, $F^\diamond : p \mapsto F(\phi^+ p)$.
  From Lemma~\ref{lem:convex-affine-composition} in \S~\ref{sec:omitted-proofs}, as $F$ is strictly convex, so is $F^\diamond$.
  We now have $\elicq(F) = \elicfix(F^\diamond) = \elicfix(\idfix) = n^2-1$ from Proposition~\ref{prop:classical-entropy-norm}, Corollary~\ref{cor:elic-complex-equiv}, and the above discussion.
\end{proof}

We saw in \S~\ref{sec:properties} that essentially all entropy functions fail to be elicitable.
The above shows that they also have maximal elicitation complexity, just as in the classical case.
(One would expect the same for most entanglement measures, by extension.)
Schatten norms for $\beta>1$, defined in \S~\ref{sec:properties}, are all strictly convex when raised to the power $\beta$.
\footnote{When $\beta=1$, the Schatten norm is the trace norm, which for density matrices is only interesting in its relative form, such as $\Gamma(\rho) = \|\rho - \rho_0\|_1$ for some fixed reference $\rho_0\in\Dens(\X)$.  We conjecture that the complexity of $\Gamma$ is still maximal; showing this statement would likely require a different proof technique.}
In essence, these results say that it is as hard to elicit the amount or value of information as it is to elicit the information itself.

Let us now turn to eigenvalues.
Formally, let us consider $\lambda:\Dens(\X)\to\reals^n$ as a quantum property.
As the function $F(\rho) = \inprod{\rho}{\rho} = \|\lambda(\rho)\|_2^2$ is strictly convex, Theorem~\ref{thm:quantum-entropy-norms} gives $\elicq(F) = n^2-1$.
But $\lambda$ refines $F$, so $\elicq(F) \leq \elicq(\lambda)$.
With the general upper bound $\elicq(\lambda) \leq n^2-1$ from the discussion above, we have the following.

\begin{corollary}
  $\elicq(\lambda) = n^2-1$.
\end{corollary}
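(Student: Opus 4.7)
The plan is to sandwich $\elicq(\lambda)$ between $n^2-1$ from both sides, using the strictly convex function $F(\rho) = \inprod{\rho}{\rho}$ as the lower-bound witness and the general maximal-complexity bound for the upper bound. Essentially all of the real work has already been done in Theorem~\ref{thm:quantum-entropy-norms} and the surrounding discussion; the corollary is just a composition argument.

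Concretely, I would proceed as follows. First, observe that $F(\rho) = \inprod{\rho}{\rho}$ is strictly convex on $\Dens(\X)$ (this is just strict convexity of the squared Frobenius norm on a convex set, and it is also immediate from $\|\rho\|_F^2 = \|\lambda(\rho)\|_2^2$ together with the earlier observation that strict convexity transfers between $F$ and its spectral function). Hence Theorem~\ref{thm:quantum-entropy-norms} gives $\elicq(F) = n^2-1$.

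Next, note that $F$ factors through $\lambda$: writing $g:\reals^n\to\reals$, $g(v) = \|v\|_2^2$, we have $F = g\circ\lambda$, so $\lambda$ refines $F$ in the sense of the refinement lemma stated just before Section~\ref{sec:reduct-class-case}. That lemma then yields $\elicq(F) \leq \elicq(\lambda)$, i.e.\ $n^2-1 \leq \elicq(\lambda)$.

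Finally, for the matching upper bound, recall from the paragraph preceding Theorem~\ref{thm:quantum-entropy-norms} that $\elicq(\id) = n^2-1$ and that $\id$ refines every quantum property, giving the universal bound $\elicq(\Gamma) \leq n^2-1$ for all $\Gamma$; in particular $\elicq(\lambda) \leq n^2-1$. Combining the two inequalities gives the claim. I do not expect any genuine obstacle here: the only potentially subtle point is confirming that $\rho\mapsto\inprod{\rho}{\rho}$ is strictly convex on $\Dens(\X)$ rather than merely convex, but this follows directly from the strict convexity of $\|\cdot\|_2^2$ on $\reals^{n\times n}$ restricted to the affine subspace $\Herm(\X)$.
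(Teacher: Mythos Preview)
Your proposal is correct and follows essentially the same argument as the paper: use $F(\rho)=\inprod{\rho}{\rho}=\|\lambda(\rho)\|_2^2$ together with Theorem~\ref{thm:quantum-entropy-norms} for the lower bound via the refinement lemma, and the universal $n^2-1$ upper bound from $\id$ for the other direction. The paper's discussion preceding the corollary is nearly word-for-word your outline.
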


A priori, this result may seem surprising.
A density matrix $\rho$ can be thought of as a transformed version of the classical distribution given by its eigenvalues $\lambda(\rho)$.
Since we can certainly elicit the full density matrix, as well as any classical distribution, it may seem reasonable that we should be able to elicit the classical distribution hiding inside a density matrix.
Yet doing so is as hard as eliciting the full density matrix itself.

\subsection{Top eigenvalues and eigenvectors}
\label{sec:complexity-top-eigenvalues}

As entropy, norms, and eigenvalues all have maximal elicitation complexity, we did not need a nontrivial upper bound; it cannot be higher than that of the density matrix itself.
To close, let us consider a more nuanced case, that of the top $k$ eigenvalues.
As we saw in \S~\ref{sec:properties}, the property $\lambda_{1..k}:\Dens(\X)\to\reals^k$, $(\lambda_{1..k}(\rho))_i = \lambda_i(\rho)$, is not elicitable.
Intuitively, since the case $k=n$ yields maximal complexity, one may expect the complexity of $\lambda_{1..k}$ to degrade gracefully in $k$.
We now use the main upper bound of \citet{frongillo2021elicitation} to show that indeed it does, with the caveat that we step outside the class of identifiable properties.

\citet[Theorem 3]{frongillo2021elicitation} show that if a scoring rule elicits a property, then the expected score of an optimal report can be elicited alongside that property.
We can directly translate the proof to our setting.
Let $S = (s,\mu)$ be a quantum score eliciting some property $\Gamma:\Dens(\X)\toto \R$.
We wish to (indirectly) elicit the property $\gamma:\Dens(\X)\to\reals$ given by $\gamma(\rho) = \max_{r\in\R} S(r;\rho)$.

Define a new quantum score $(s^*,\mu)$ with $s^*:(\reals\times\R)\times\Y\to\reals$ given by
$s^*((\alpha,r),y) = G(\alpha) + dG_\alpha (s(r,y) - \alpha)$,
where $G:\reals\to\reals$ is a strictly increasing and strictly convex function, and $dG_\alpha > 0$ is a subgradient $dG_\alpha \in \partial G(\alpha)$.
Then we have
\begin{align*}
  S^*((\alpha,r);\rho)
  &= \sum_{y\in\Y(r)} \inprod{\mu(r)_y}{\rho} ( G(\alpha) + dG_\alpha (s(r,y) - \alpha) )
  \\
  &= \sum_{y\in\Y(r)} \inprod{\mu(r)_y}{\rho} ( G(\alpha) + dG_\alpha (s(r,y) - \alpha) )
  \\
  &= G(\alpha) + dG_\alpha \left(\textstyle\sum_{y\in\Y(r)} \inprod{\mu(r)_y}{\rho} s(r,y) - \alpha\right)
  \\
  &= G(\alpha) + dG_\alpha (S(r;\rho) - \alpha)~.
\end{align*}
As $dG_\alpha > 0$, for all $\alpha$ we have that any $r\in\Gamma(\rho)$ maximizes the expected score.
Note that $\gamma(\rho) = S(r;\rho)$ for such an $r$.
By the subgradient inequality, $G(\gamma(\rho)) \geq G(\alpha) + dG_\alpha(\gamma(\rho)-\alpha)$, with equality (and thus optimality) if and only if $\alpha = \gamma(\rho)$.
Thus $S^*$ elicits the property $\hat\Gamma = (\gamma, \Gamma)$ given by $\hat \Gamma(\rho) = \{(\gamma(\rho),r) \mid r\in\Gamma(\rho)\}$.

Let us now apply this result to the top eigenvalue.
Recall from \S~\ref{sec:properties} the quantum score $S=(s,\mu)$ with $s(x,y) = \ones\{y = 1\}$ and $\mu(x) = \{I_\X - xx^*,xx^*\}$.
As we saw, $S$ elicits $\Gamma_1 : \Dens(\X) \toto \nsphere$, the set of unit-norm eigenvectors of maximum eigenvalue.
When $x$ is in fact an eigenvector with eigenvalue $\lambda_1(\rho)$, the expected score is therefore $S(x;\rho) = \inprod{xx^*}{\rho} = \inprod{x}{\rho x} = \inprod{x}{\lambda_1(x)x} = \lambda_1(\rho)$.
From the above, then, we can elicit the pair $(\lambda_1,\Gamma_1)$, the maximum eigenvalue along with a corresponding eigenvector.
Taking $G(\alpha) = \alpha^2$, we obtain the following natural score $s^*((\alpha,x),y) = 2\alpha\ones\{y=1\} - \alpha^2$.
This score is exactly binary-outcome Brier score, with the twist that $x$ actually determines the probability that $y=1$.

Generalizing to the top $k$ eigenvalues, let $\Gamma_k$ be the set-valued property specifying $k$ orthonormal vectors $x_1,\ldots, x_k$ with eigenvalues $\lambda_{1..k}(\rho)$.
Consider a report of the form $\alpha_1,\ldots,\alpha_k \geq 0$ (non-increasing) and $x_1,\ldots,x_k$ orthonormal.
Let $A = \sum_{i=1}^k \alpha_i \vvstar{x}{i}$ and note $\lambda(A) = \alpha \in \reals^n$ (padding with zeros after $\alpha_k$).
In fact, our score will only depend on $A$, so we may equivalently specify the report as any $A\in\Pos(\X)$ of rank at most $k$ and let $\alpha=\lambda(A)$ and $x_1,\ldots,x_n$ be any set of orthonormal vectors such that $A = \sum_{i=1}^n \alpha_i \vvstar{x}{i}$.
Take $S^* = (s^*,\mu)$ where $\mu(A) = \{\vvstar{x}{i}, \ldots, \vvstar{x}{n}\}$ and $s^*(A,y) = 2\alpha_y - \inprod{\alpha}{\alpha}$.
(Interestingly, $S^*$ is thus a spectral score.)
Then we have
\begin{equation*}
  S^*(A;\rho) = 2 \inprod{A}{\rho} - \inprod{\alpha}{\alpha} \leq 2 \inprod{\alpha}{\lambda(\rho)} - \inprod{\alpha}{\alpha} \leq \inprod{\lambda(\rho)}{\lambda(\rho)}~,
\end{equation*}
where these inequalities are tight if and only (i) there exists a unitary $U$ such that $A = U \Diag(\alpha) U^*$ and $\rho = U \Diag(\lambda(\rho)) U^*$, and (ii) $\alpha = \lambda(\rho)$.
Put together, these conditions are equivalent to $x_1,\ldots,x_k$ being eigenvectors of $\rho$ with eigenvalues $\alpha_1=\lambda_1(\rho),\ldots,\alpha_k=\lambda_k(\rho)$.
In other words, $S^*$ elicits the pair $(\lambda_{1..k},\Gamma_{1..k})$.

One can specify a rank-$k$ Hermitian matrix with $2nk - k^2$ real-valued parameters.
\footnote{Because the eigenvectors $x_1,\ldots,x_k$ can each be independently multiplied by any phase $e^{\theta i}$ and remain eigenvectors, it takes only $2nk - k^2 - k$ real parameters to specify them, and another $k$ to specify $\alpha$.}  
It may therefore seem at this point that we have shown $\elicq(\lambda_{1..k}) \leq 2nk - k^2$.
In fact, this statement is likely false, because $\Gamma_{1..k}$ is not identifiable: restricted to diagonal matrices $\rho = \Diag(\lambda)$, $\Gamma_{1..k}$ gives the top $k$ most likely outcomes, which is not an identifiable property (\S~\ref{sec:discussion-finite-properties}).
Nonetheless, we would have the complexity upper bound of $2nk - k^2$ for a more flexible class of properties that contains $(\lambda_{1..k},\Gamma_{1..k})$.

\section{Discussion and Future Directions}
\label{sec:discussion}

This article attempts to lay a foundation of a much broader theory of quantum information elicitation.
Some of these directions have already been explored, such as in machine learning.
Others, particularly in more economic settings, remain open.

\subsection{Machine learning of quantum mixed states}
\label{sec:discussion-ML-mixed}

The phrase ``quantum machine learning'' has come to mean essentially four settings, depending on whether the data is classical or quantum, and whether the learning algorithm is classical or quantum~\cite{aimeur2006machine}.
Closest to the present work is the (quantum, classical) pair, i.e., classical learning algorithms applied to data generated from a quantum process.
One setting is quantum tomography and variations, where one wishes to recover a quantum state, or a function from some feature vector to a quantum state.
This literature is related to recent work in online learning, where one predicts a sequence of mixed states in an online fashion~\cite{koolen2011learning}.

Several loss functions (presented here as positive-oriented scores) have been proposed in the machine learning literature for learning from quantum mixed states.
A partial list follows.

\smallskip
\begin{tabular}{ll}
  $S_1(\rho';\rho) = \inprod{\log \rho'}{\rho}$
  &
    Log spectral score, also ``matrix entropic loss'' \cite{koolen2011learning}
  \\
  $S_2(\rho';\rho) =  \log\det( (\rho')^{-1}\rho) - \inprod{(\rho')^{-1}}{\rho}$
  &
    Log determinent score~\cite{stein1961estimation,gneiting2011making}
  \\
  $S_3(\rho';\rho) = \inprod{\rho'}{\rho}$
  &
    Trace score \cite{koolen2011learning}
  \\
  $S_4(\rho';\rho) = \log \inprod{\rho'}{\rho}$
  &
    Log trace score \cite{koolen2011learning}
  \\
  $S_5(\rho';\rho) = \log \Tr \exp( \log \rho' + \log \rho)$
  &
    A variation on the log trace score~\cite{koolen2011learning}
\end{tabular}
\vskip2pt

\noindent
Interestingly, only $S_1$ and $S_2$ are truthful; $S_2$ is of the form~\eqref{eq:main-char} for $F(\rho) = -\log\det\rho$.
The fact that $S_5$ is not truthful answers a question posed by \citet{koolen2011learning}.
\footnote{Take $\rho=\Diag(\lambda)$ and $\rho'=\Diag(\lambda')$.  Then $S_5(\rho';\rho) = \log\sum_i \lambda_i\lambda'_i = S_3(\rho';\rho)$; the optimal $\lambda'$ puts all weight on $\lambda_1$.}
Even more importantly, the scores $S_4$ and $S_5$ are not \emph{physically implementable} (or \emph{operational}~\cite{blume2010optimal}), meaning there is no way to directly compute the score using a measurement, or in some cases, even any finite sequence of measurements.
Verifying whether a score is physically implementable is straightforward from \S~\ref{sec:quant-scor-rules}: the score must be (extended) linear in the true mixed state $\rho$.

\subsection{Restricting to pure states}
\label{sec:discussion-pure}

We have focused on the case where the underlying belief is a mixed quantum state.
It is also interesting to design quantum scores when one assumes the true state $\rho$ is pure, that is, $\rho = xx^*$ for some $x\in\X$.
The set of pure states is the relative boundary $\partial \Dens(\X)$ of the set of density matrices.
The affine score characterization (\S~\ref{sec:class-scor-rules}) therefore states that any quantum score which is truthful for pure states can be extended to a truthful score on all of $\Dens(\X)$; in that sense, there are no ``tricks'' that allow one to take advantage of the restricted setting.

Nonetheless, some particularly simple forms arise when considering pure states directly.
For example, the simple trace score $S_3(\rho';\rho) = \inprod{\rho'}{\rho}$, which is not truthful for mixed states, is strictly truthful for pure states: $S_3(xx^*;zz^*) = |\inprod{x}{z}|^2$, which is maximized by $x=z$.
Indeed, this score has been used in learning algorithms from pure states~\cite{lee2018learning}.

\subsection{Extensions of finite properties}
\label{sec:discussion-finite-properties}

The trace score $S_3$ also happens to coincide with the score for learning a maximum eigenvector $\Gamma_1$ given in \S~\ref{sec:properties}.
The optimal report for $S_3(\rho';\rho) = \inprod{\rho'}{\rho}$ over all $\rho'\in\Dens(\X)$ is $\rho' = xx^*$ where $x$ is an eigenvector of $\rho$ of maximum eigenvalue.
This statement also shows why $S_3$ is not truthful.
As such, $S_3$ can be thought of as the quantum analog of 0-1 loss (classification error) from machine learning, which elicits the mode (most likely outcome).
The two coincide when $\rho$ and $\rho'$ are diagonal matrices: the optimal report is $\rho' = \Diag(e_i)$ where $e_i$ is the $i$th standard basis vector and $i \in \argmax_j \lambda_j(\rho) = \textrm{mode}(\lambda)$.
(As the mode is not identifiable, this shows why $\Gamma_1$ is not either.)

Other extensions of finite properties are interesting to consider as well.
The classical score for top-$k$ classification is $\ones\{y\in T\}$ where one predicts a set $T$ of size $k$.
Similar to $\Gamma_1$ from the mode (the special case $k=1$), the score given for $\Gamma_{1..k}$ in \S~\ref{sec:properties} is the natural analog of this top-$k$ score.
As one final example, consider the problem of classification with an abstain option~\cite{ramaswamy2018consistent}.
Classically, the report is some $r \in \Y \cup \{\bot\}$ and the score is $\alpha \in$ if $r=\bot$ and $\ones\{r=y\}$ otherwise, for some $0 < \alpha < 1$.
(This score incentivizes the agent/algorithm to abstain ($\bot$) when no outcome has at least $\alpha$ probability.)
Modifying the score for $\Gamma_1$, we can take $s(\bot,y) = 1/2$ and $s(x,y) = \ones\{y=1\}$ otherwise, with $\mu(x) = \{I_\X - xx^*, xx^*\}$ as before.
This score elicits an eigenvector of $\lambda_1(\rho)$ if $\lambda_1(\rho) > \alpha$, and $\bot$ otherwise.

\subsection{Multi-agent elicitation: wagering mechanisms and prediction markets}

A wagering mechanism seeks to simultaneously elicit information from a group of agents~\cite{lambert2015axiomatic}.
In the simplest setting, we have $m$ agents who will report some $\rho_1,\ldots,\rho_m\in\Dens(\X)$.
Given a quantum score $S=(s,\mu)$, in the quantum analog of the weighted-score wagering mechanism (with unit wagers for simplicity), the expected payoff to agent $i$ is
\begin{equation}
  \label{eq:quantum-wagering}
  \text{Payoff}_i(\rho_1,\ldots,\rho_m;\rho) = S(\rho_i;\rho) - \frac 1 {m-1} \sum_{j\neq i} S(\rho_j;\rho)~.
\end{equation}
There is an issue with this form, however: the mechanism must choose a single measurement, yet there is no guarantee that $\mu(\rho_i)$ is the same for all $i$.
A simple remedy, of course, is to require $S$ to be a fixed-measurement score (\S~\ref{sec:fixed-measurement-scores}).
An interesting open question is how to design a quantum wagering mechanism when the measurement must be projective.
Here one would imagine choosing the measurement as a function of all the reports, $\mu(\rho_1,\ldots,\rho_m) \in \Meas_\Y(\X)$,
though the analysis of \citet{blume2006accurate} suggests care is needed.

The situation for prediction markets is similar.
In the analog of the classical scoring rule market~\cite{hanson2003combinatorial,frongillo2018axiomatic}, there is a global density matrix, initialized to some $\rho_0\in\Dens(\X)$, and at time $t$ a trader who modifies this global matrix $\rho_{t-1} \to \rho_t$ gains the following in expectation
\begin{equation}
  \label{eq:quantum-wagering}
  \text{Payoff}_i(\rho_{t-1}\to\rho_t;\rho) = S(\rho_t;\rho) - S(\rho_{t-1};\rho)~.
\end{equation}
Interestingly, cost-function-based market makers~\cite{abernethy2013efficient} also have natural analog: maintain a share vector $Q\in\Herm(\X)$; a bundle of shares $R\in\Herm(\X)$ costs $F^*(Q+R) - F^*(Q)$ and pays off $\inprod{R}{\rho}$, where $F^*$ is the convex conjugate of some convex $F:\Dens(\X)\to\reals$.
A natural choice for both is to take $F=-H$, negative von Neumann entropy, yielding the quantum analog of LMSR~\cite{hanson2003combinatorial,hanson2007logarithmic}, where $F^*(Q) = \log\sum_i\exp\lambda_i(Q)$.

In both of these settings, just as with wagering mechanisms, the relevant scores and payouts are only physically implementable when using the same measurement for all agents.
We would like to understand if the measurement can be projective and chosen based on the market history in a way which preserves truthfulness.
Unlike in the wagering mechanism setting, however, where a natural desideratum was for the score to be bounded, in a prediction market there are reasons to want an unbounded score, so that the market cannot get ``stuck''.
In particular, we ideally would like $F$ to be of Legendre type, so like von Neumann entropy its gradients blow up at the boundary of $\Dens(\X)$, but there is no way to achieve this behavior with a fixed-measurement quantum score (Proposition~\ref{prop:projective-expressive} and preceeding discussion).

\subsection{Quantum information and Bayes}
\label{sec:quant-inform-bayes}

Finally, a natural question from an economic point of view is to what extent one can talk about Bayesian updates of a belief which is a quantum mixed state.
If one currently believes $\rho_0$, but observes some signals, perhaps measurements of a copy of the true $\rho$, how should one incorporate this information into a new belief?
This question has been studied from a variety of vantage points; see e.g.\ \cite[Theorem 4]{blume2006accurate} and \cite{warmuth2005bayes,warmuth2010bayesian}.
With such a theory in hand, one could develop analogs of peer prediction and costly information aquisition.

\subsection*{Acknowledgements}
The author gratefully acknowledges many conversations and insights from
Ian Kash,
Steven Kordonowy,
Anthony Polloreno,
Graeme Smith,
and
Bo Waggoner.
This material is based upon work supported by the National Science Foundation under Grant No.\ IIS-2045347.

\newpage
\bibliographystyle{plainnat}
\bibliography{refs-quantum,refs-elic}

\newpage
\appendix

\section{Subteties Arising from the Extended Reals}
\label{sec:app-ext-reals}

\subsection{Extended linear functions}
\label{sec:extend-linear-functions}

Let $\V$ be a real vector space.
A real-valued function $\ell:\V\to\reals$ is \emph{linear} if $\ell(\alpha x + x') = \alpha\ell(x)+\ell(x')$ for all $x,x'\in\V$ and $\alpha\in\reals$.
It turns out this definition does not immediately generalize to extended-real-valued functions.
Following \citet{waggoner2021linear}, we say a function $\ell:\V\to\extreals$ is \emph{extended linear}, denoted $\ell \in \extlin{\V}$, if two conditions hold:
\begin{itemize}
\item[(i)] $\ell(\alpha v) = \alpha \ell(v)$ for all $\alpha \in \reals$ and $v\in\V$;
\item[(ii)] $\ell(v + v') = \ell(v) + \ell(v')$ for all $v,v'\in\V$ such that $\{\ell(v),\ell(v')\} \neq \{\infty,-\infty\}$.
\end{itemize}
As usual, one defines $0\cdot\infty = 0\cdot(-\infty) = 0$.

\subsection{Extended vectors and extended Hermitian matrices}
\label{sec:extend-herm}

To accommodate classical scoring rules which assign values of $-\infty$, one needs to consider the set of extended linear functions $\lin(\reals^\Y\to\extreals)$.
While in general extended linear functions can be quite complex to describe~\cite{waggoner2021linear}, fortunately, as observed by \citet{gneiting2007strictly}, for scoring rules we will only require those of a simple form: inner products of what we call \emph{extended vectors} $v \in (\reals\cup\{-\infty\})^\Y$.
Given such a vector $v$, we define the inner product $\inprod{v}{x}$ for $x\in\reals^\Y$ by $\inprod{v}{x} = \sum_{y: v_y \in \reals} v_y x_y -\infty \sum_{y : v_y = -\infty} x_y$.
Note that by grouping the terms according to the value of $v$ first, we avoid expressions of the form $\infty - \infty$.
Moreover, one can check that the function $\ell_v : x\mapsto \inprod{v}{x}$ is an element of $\lin(\reals^\Y\to\extreals)$.
As the expected score for a given regular scoring rule $\hat s$ is given by $\E_p \hat s(p',Y) = \inprod{s(p',\cdot)}{p}$, it therefore suffices to work with extended linear functions given by extended vectors.

Analogously, to accommodate \emph{quantum} scoring rules which assign values of $-\infty$, we need to consider extended linear functions from $\Herm(\X)$ to $\extreals$.
Just as in the classical case, fortunately, we will only require those of a simple form: inner products of what we call \emph{extended Hermitian} matrices $E\in\extHerm(\X)$.

Formally, $\extHerm(\X)$ denotes the set of formal expressions of the form $A - \infty B$ where $A\in\Herm(\X)$, $B\in\Pos(\X)$, and $AB=0$.
For any such expression, we define $\inprod{A-\infty B}{X} = \inprod{A}{X} - \infty \inprod{B}{X}$ for any $X\in\Herm(\X)$.
Thus, given $E \in \extHerm$, the function $\ell : X \mapsto \inprod{E}{X}$ is extended linear, i.e., $\ell \in \lin(\Herm(\X)\to\extreals)$.
Now let an orthonormal basis $\{x_1,\ldots,x_n\}$ of $\X = \C^n$ and vector $v \in (\reals\cup\{-\infty\})^n$ be given, and let $U\in\lin(\X\to\X)$ be the unitary matrix with columns $x_1,\ldots,x_n$.
Then we may write
\begin{equation}
  \label{eq:extherm-diag}
  U \Diag(v) U^* := \sum_{y=1}^n v_y \vvstar{x}{y} = \sum_{y:v_y\in\reals} v_y \vvstar{x}{y} - \infty \sum_{y:v_y=-\infty} \vvstar{x}{y} \in\extHerm(\X)~.
\end{equation}
Moreover, given $E \in \extHerm(\X)$, as $E = A - \infty B$ for $AB = 0$, there exists a unitary matrix $U$ simultaneously diagonalizing $A$ and $B$, and we may define $\lambda(E) \in (\reals \cup\{-\infty\})^n$ by the above, reordering the elements of $v$ appropriately.

The following result justifies our focus on $\extHerm(\X)$; as we will see later, it implies that $\extHerm(\X)$ captures all extended linear functions arising from quantum scores.

\begin{lemma}\label{lem:extherm}
  Let $A_1,\ldots,A_m \in \Pos(\X)$, and $\alpha_1,\ldots,\alpha_m\in\reals\cup\{-\infty\}$ be given, and consider $\ell \in \lin(\Herm(\X)\to\extreals)$ such that $\ell(\rho) = \sum_{i=1}^m \inprod{A_i}{\rho} \alpha_i$ for all $\rho\in\Dens(\X)$.
  Then there exists $E\in\extHerm(\X)$ such that $\ell(\rho) = \inprod{E}{\rho}$ for all $\rho\in\Dens(\X)$.
\end{lemma}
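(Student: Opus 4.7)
The plan is to separate the $-\infty$ contributions from the finite ones in $\ell$ and construct $E = A' - \infty B$ matching the two parts. Let $I = \{i : \alpha_i \in \reals\}$ and $J = \{i : \alpha_i = -\infty\}$, and set $B := \sum_{i \in J} A_i$, which lies in $\Pos(\X)$ as a nonnegative combination of positive semidefinite matrices. Since each $\inprod{A_i}{\rho} \geq 0$ for $\rho \in \Dens(\X)$, the defining sum yields $\ell(\rho) = -\infty$ precisely when $\inprod{B}{\rho} > 0$, and otherwise $\ell(\rho) = \sum_{i \in I} \alpha_i \inprod{A_i}{\rho}$.

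The crux is to produce $A' \in \Herm(\X)$ satisfying both (i) $A' B = 0$, so that $E := A' - \infty B$ is a valid element of $\extHerm(\X)$, and (ii) $\inprod{A'}{\rho} = \sum_{i \in I} \alpha_i \inprod{A_i}{\rho}$ whenever $\inprod{B}{\rho} = 0$. The naive choice $\sum_{i \in I} \alpha_i A_i$ trivially satisfies (ii) but can fail (i), since the $A_i$ for $i \in I$ need not commute with $B$. My fix is to sandwich with the projection onto $\ker(B)$: let $P$ denote the orthogonal projection onto $\ker(B)$ and set $A' := P\bigl(\sum_{i \in I} \alpha_i A_i\bigr) P \in \Herm(\X)$. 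Property (i) is then immediate from $PB = 0$. For (ii), I appeal to Lemma~\ref{lem:pos-zero-inner-prod}: the hypothesis $\inprod{B}{\rho} = 0$ with $B,\rho \in \Pos(\X)$ forces $B\rho = 0$, so the image of $\rho$ lies in $\ker(B)$, i.e.\ $P\rho = \rho$, and Hermiticity of $\rho$ then gives $\rho = P\rho P$. Cyclicity of the trace now yields $\inprod{A'}{\rho} = \inprod{\sum_{i \in I} \alpha_i A_i}{\rho}$.

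Putting the two cases together, $\inprod{E}{\rho} = \inprod{A'}{\rho} - \infty \inprod{B}{\rho}$ matches $\ell(\rho)$ on all of $\Dens(\X)$: both equal $-\infty$ when $\inprod{B}{\rho} > 0$, and both equal $\sum_{i \in I} \alpha_i \inprod{A_i}{\rho}$ when $\inprod{B}{\rho} = 0$. The main obstacle is reconciling the two requirements on $A'$; the insight that unlocks it is that the finite-part data matters only on the face $\{\rho \in \Dens(\X) : B\rho = 0\}$ where $\ell$ is finite, so restricting via $P$ preserves all values of $\ell$ we need while enforcing the annihilation condition $A' B = 0$ required for membership in $\extHerm(\X)$.
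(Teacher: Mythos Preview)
Your proof is correct and follows the same strategy as the paper's: split the index set into finite and $-\infty$ coefficients, take $B$ to be the sum of the $A_i$ with $\alpha_i=-\infty$, and then modify the finite-part matrix so that it annihilates $B$, using Lemma~\ref{lem:pos-zero-inner-prod} to see this modification is invisible on the face $\{\rho:\inprod{B}{\rho}=0\}$. Your two-sided compression $P(\sum_{i\in I}\alpha_i A_i)P$ is in fact a cleaner choice than the paper's one-sided $A' - \sum_{y>k} A'\,x_y x_y^*$, since yours is manifestly Hermitian.
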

\begin{proof}
  Let
  \begin{align}
    \label{eq:X_S}
    A' &= \sum_{i:\alpha_i\in\reals} \alpha_i A_i \in \Herm(\X)
    \\
    \label{eq:X_Splus}
    B &= \sum_{i:\alpha_i=-\infty} A_i \in \Pos(\X)~.
  \end{align}
  For all $\rho\in\Dens(\X)$, we have $\ell(\rho) = \inprod{A'}{\rho} - \infty \inprod{B}{\rho}$.
  Let $x_1,\ldots,x_n$ be an orthonormal eigenbasis for $B$, such that $x_1,\ldots,x_k \in \ker B$, and $B = \sum_{y = k+1}^n \lambda_y \vvstar{x}{y}$ for some eigenvalues $\lambda_y > 0$.
  Define $A = A' - \sum_{y = k+1}^n A' \vvstar{x}{y}$.
  For all $y \in \{k+1,\ldots,n\}$, we have $A x_y = A' x_y - \left(\sum_{j = k+1}^n A' x_j x_j^*\right) x_y = A' x_y - A' x_y = 0$ by orthonormality, giving $x_y \in \ker A$.
  Thus, $A B = 0$, and $E = A - \infty B \in \extHerm(\X)$.

  It remains to show that $\inprod{E}{\cdot}$ and $\ell$ agree on $\Dens(\X)$.
  Let $\rho\in\Dens(\X)$.
  If $\inprod{B}{\rho} \neq 0$, then as $B,\rho\in\Pos(\X)$ we must have $\inprod{B}{\rho} > 0$ and thus $\inprod{E}{\rho} = \inprod{A}{\rho} -\infty\inprod{B}{\rho} = -\infty = \inprod{A'}{\rho} -\infty\inprod{B}{\rho} = \ell(\rho)$.
  Otherwise, we have $\inprod{B}{\rho} = 0$, giving $B \rho = 0$ from Lemma~\ref{lem:pos-zero-inner-prod}.
  As $x_1,\ldots,x_k \in \ker B$, we must have $x_{k+1},\ldots,x_n \in \ker \rho$.
  We have $\inprod{A'}{\rho} = \sum_{y=1}^n x_y^* A' \rho x_y = \sum_{y=1}^k x_y^* A' \rho x_y = \sum_{y=1}^k x_y^* A \rho x_y = \inprod{A}{\rho}$, where we used orthonormality and the definition of $A'$ in the third equality.
  Hence, $\inprod{E}{\rho} = \inprod{A}{\rho} = \inprod{A'}{\rho} = \ell(\rho)$.
\end{proof}

\subsection{Fixed-measurement scores and the relative boundary}

\begin{lemma}\label{lem:fixed-meas-boundary}
  Let $\mufix\in\Meas_\Y(\X)$ be any tomographically complete measurement over a finite $\Y$.
  Assume $\mufixy \neq 0$ for all $y\in\Y$ without loss of generality.
  Then there exists $\rho$ on the relative boundary of $\Dens(\X)$ such that $\inprod{\mufix}{\rho}$ is in the relative interior of $\simplex$.
\end{lemma}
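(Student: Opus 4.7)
The plan is to produce $\rho$ as a rank-$1$ density matrix $xx^*$ for a carefully chosen unit vector $x\in\X$. Recall that $\rho\in\Dens(\X)$ lies on the relative boundary of $\Dens(\X)$ precisely when it fails to have full rank, and a distribution lies in the relative interior of $\simplex$ precisely when every coordinate is strictly positive. Assuming $n\geq 2$ (the case $n=1$ is vacuous, since both $\Dens(\X)$ and $\simplex$ are singletons with empty relative boundary), any rank-$1$ density matrix $\rho = xx^*$ lies on the relative boundary of $\Dens(\X)$. The goal then reduces to finding a unit vector $x$ such that $\inprod{\mufixy}{xx^*} = x^* \mufixy \, x > 0$ for every $y\in\Y$.

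For each $y\in\Y$, since $\mufixy \in \Pos(\X)$ is nonzero by assumption, the set $\{x\in\X : x^* \mufixy \, x = 0\}$ coincides with $\ker \mufixy$, which is a proper complex linear subspace of $\X$. The task is therefore to exhibit a unit vector lying outside $\bigcup_{y\in\Y} \ker \mufixy$.

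The key step is the standard fact that a finite-dimensional complex vector space $\X \cong \C^n$ (with $n\geq 2$) cannot be written as a finite union of proper subspaces. One clean way to see this: each such subspace has real codimension at least two and therefore Lebesgue measure zero in $\X$ viewed as $\reals^{2n}$, so the finite union $\bigcup_{y\in\Y} \ker \mufixy$ is a measure-zero set and in particular misses the unit sphere. (Alternatively, a dimension-counting or Baire-category argument suffices.) Hence there exists a unit $x\in\X$ with $x\notin \ker \mufixy$ for every $y\in\Y$. Taking $\rho = xx^*$, we obtain $\Tr(\rho)=1$ and $\rho\in\Pos(\X)$, so $\rho\in\Dens(\X)$; $\rho$ has rank $1 < n$, so lies on the relative boundary; and $\inprod{\mufixy}{\rho} = x^*\mufixy\, x > 0$ for every $y$, so $\inprod{\mufix}{\rho}$ lies in $\relint \simplex$.

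There is no real obstacle here. The only subtlety is the $n=1$ edge case, which should be dispatched at the outset, and verifying that the union-of-subspaces argument applies cleanly over $\C$ rather than $\reals$; the complex structure of $\ker \mufixy$ actually helps, since the real codimension is doubled.
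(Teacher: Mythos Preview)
Your proof is correct and follows essentially the same approach as the paper: take $\rho = xx^*$ for a unit vector $x$ avoiding a finite union of proper (hence measure-zero) subspaces of $\X$. The only cosmetic difference is that you avoid the kernels $\ker\mufixy$ directly, whereas the paper avoids the hyperplanes $x_y^\bot$ orthogonal to a top eigenvector $x_y$ of each $\mufixy$ and then bounds $\inprod{\mufixy}{xx^*}\geq \lambda_1(\mufixy)\,|\inprod{x_y}{x}|^2$; your version is slightly more direct.
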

\begin{proof}
  For each $y\in\Y$, let $x_y$ be an eigenvector of $\mufixy$ of maximum eigenvalue $\lambda(\mufixy)_1 > 0$.
  Let $x_y^\bot$ be the set of vectors in $\X$ orthogonal to $x_y$.
  As $x_y^\bot$ has measure zero in $\X$, so does $\cup_{y\in\Y} x_y^\bot$.
  Thus, there exists some $x\in\X$ such that $\inprod{x_y}{x} \neq 0$ for all $y\in\Y$.
  Let $\rho = xx^*$, which is on the relative boundary of $\Dens(\X)$.
  Now $\inprod{\mufix}{\rho}_y = \inprod{\mufixy}{xx^*} \geq \lambda(\mufixy)_1 \inprod{\vvstar{x}{y}}{xx^*} = \lambda(\mufixy)_1 |\inprod{x_y}{x}|^2 > 0$.
  Thus, $\inprod{\mufix}{\rho}$ is in the relative interior of $\simplex$.
\end{proof}

\section{Omitted Proofs}
\label{sec:omitted-proofs}

\begin{lemma}\label{lem:convex-affine-composition}
  Let $g:C\to\reals$ and $h:C'\to\reals$ where $C\subseteq\V$ and $C'\subseteq\V'$ for real vector spaces $\V,\V'$, and $\phi \in \lin(\V\to\V')$ such that $C' = \phi(C)$ and $g = h \circ \phi$.
  Then $g$ is convex if and only if $h$ is convex, and $g$ is strictly convex if and only if $\phi$ is injective on $C$ and $h$ is strictly convex.
\end{lemma}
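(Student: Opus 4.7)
The plan is to prove both directions of each equivalence using only the defining inequality for (strict) convexity and the linearity of $\phi$. First I would note that for the statement to be meaningful, $C$ must be convex (so that $g$'s convexity is well-defined), and the hypothesis $g = h\circ\phi$ together with $C'=\phi(C)$ forces $h$ to be well-defined on $C'$ and $C'$ to be convex (as the image of a convex set under a linear map). These observations let me move back and forth between inequalities on $C$ and inequalities on $C'$.

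For the convex part, if $h$ is convex, then for any $x_1,x_2\in C$ and $\alpha\in[0,1]$,
\begin{equation*}
g(\alpha x_1+(1-\alpha)x_2) = h(\phi(\alpha x_1+(1-\alpha)x_2)) = h(\alpha\phi(x_1)+(1-\alpha)\phi(x_2)) \leq \alpha g(x_1)+(1-\alpha)g(x_2),
\end{equation*}
using linearity of $\phi$ and convexity of $h$. Conversely, if $g$ is convex, pick any $y_1,y_2\in C'=\phi(C)$ with preimages $x_1,x_2\in C$; the same chain of equalities, together with convexity of $g$, gives convexity of $h$ at $y_1,y_2$.

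For the strict convex part, the forward direction ``$g$ strict $\Rightarrow$ $\phi$ injective on $C$ and $h$ strict'' is the interesting one. For injectivity, I would argue by contradiction: if $x_1\neq x_2$ in $C$ but $\phi(x_1)=\phi(x_2)$, then $g(x_1)=g(x_2)$ and $\phi\bigl(\tfrac12 x_1+\tfrac12 x_2\bigr)=\phi(x_1)$, so $g\bigl(\tfrac12 x_1+\tfrac12 x_2\bigr) = g(x_1) = \tfrac12 g(x_1)+\tfrac12 g(x_2)$, contradicting strict convexity. Given injectivity, distinct $y_1,y_2\in C'$ have distinct preimages $x_1,x_2\in C$, and strict convexity of $g$ transfers to strict convexity of $h$ by the same equality chain as before. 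The reverse direction ``$\phi$ injective on $C$ and $h$ strict $\Rightarrow$ $g$ strict'' follows immediately: for $x_1\neq x_2$ in $C$, injectivity gives $\phi(x_1)\neq\phi(x_2)$, and strict convexity of $h$ at these two points yields strict inequality for $g$.

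I do not expect any real obstacle here; this is essentially a bookkeeping lemma about composition with a linear (hence affine) map. The only place a careful reader should pause is the injectivity argument, where one must verify that $\tfrac12 x_1+\tfrac12 x_2$ actually lies in $C$ (which it does, since $C$ is convex) and that $g$ is indeed constant on the fiber $\{x_1,x_2\}$ (which follows tautologically from $g = h\circ\phi$). Everything else is the standard convexity-under-affine-composition argument.
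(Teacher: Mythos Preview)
Your argument is correct and complete. The paper takes a slightly different route for the converse direction: rather than picking arbitrary preimages $x_1,x_2$ and verifying Jensen's inequality for $h$ directly, it introduces the Moore--Penrose pseudoinverse $\phi^+$, observes that $h = g\circ\phi^+$ on $C'$, and then reapplies the forward ``convexity under linear composition'' step (using that $\phi^+$ is automatically injective on $\phi(C)$). Your preimage approach is more elementary, avoids having to check that $\phi^+(C')\subseteq C$, and also spells out explicitly why strict convexity of $g$ forces $\phi$ to be injective on $C$---a step the paper's proof leaves tacit. Conversely, the paper's pseudoinverse trick is appealingly symmetric, reducing both implications to the same cited composition result.
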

\begin{proof}
  Convexity of $h$ implies convexity of $g$~\cite[Proposition 2.1.4]{hiriart2012fundamentals}.
  The proof shows that strict convexity of $h$ implies strict convexity of $g$ when $\phi$ is injective.

  For the converse, let $\phi^+$ be the Moore--Penrose pseudoinverse of $\phi$.
  For any $x'\in C'$ we have $x' = \phi x$ for some $x\in C$, and thus $g(\phi^+ x') = h(\phi \phi^+ \phi x) = h(\phi x) = h(x')$.
  Hence $h = g \circ \phi^+$, so $g$ being convex implies the same for $h$.
  Moreover, as $\phi^+$ is always injective on $C'$, strict convexity of $g$ implies the same for $h$.  
\end{proof}

\fixedmeaschar*
\begin{proof}
  Let $S$ be truthful, and let $F(\rho) = S(\rho;\rho)$.
  By Lemma~\ref{lem:fixed-meas-expected-score}, we may write $F(\rho) = f(\inprod{\mufix}{\rho})$ for some $f:\simplex\to\reals$.
  By Lemma~\ref{lem:convex-affine-composition}, $F$ is convex if and only if $f$ is convex.
  By the same lemma, $F$ is strictly convex if and only if $f$ is strictly convex and $\phi:\rho\mapsto\inprod{\mufix}{\rho}$ is injective; the latter condition is equivalent to $\mufix$ being tomographically complete.
  That the given form is truthful follows from Theorem~\ref{thm:main-char} and the fact that $F(\rho) := f(\inprod{\mufix,\rho})$ is convex with the given form of the subgradient~\cite[Theorem 4.2.1]{hiriart2012fundamentals}.
\end{proof}

\lemspecsubgrad*
\begin{proof}
  First suppose $U \Diag(d) U^* \in \extsub F(U \Diag(\lambda) U^*)$.
  Let $\lambda'\in\simplex$ be arbitrary.
  Applying the subgradient inequality, we have
  \begin{align*}
    f(\lambda') = F(U \Diag(\lambda') U^*)
    &\geq F(U \Diag(\lambda) U^*) + \inprod{U \Diag(d) U^*}{U \Diag(\lambda') U^* - U \Diag(\lambda) U^*}
    \\
    &= f(\lambda) + \inprod{\Diag(d)}{\Diag(\lambda') - \Diag(\lambda)}
    \\
    &= f(\lambda) + \inprod{d}{\lambda' - \lambda}~.
  \end{align*}
  As $\lambda'$ was arbitrary, we have $d \in \extsub f(\lambda)$.
  
  For the converse, we first observe that if the entries of $\lambda$ are in non-increasing order, so are the entries of $d \in \extsub f(\lambda)$, where we think of $d$ as an element of $(\reals\cup\{-\infty\})^n$; see \S~\ref{sec:extend-herm}.
  To see this, observe first that if $d_y = -\infty$ we must have $\lambda_y = 0$; otherwise, the subgradient inequality $f(\lambda') \geq f(\lambda) + \inprod{d}{\lambda'-\lambda}$ could not hold for any $\lambda' \in \dom f$.
  Thus, there exists some integer $k$ such that $d_y \in \reals$ for all $y\in\{1,\ldots,k\}$, and $d_y = -\infty$ and $\lambda_y=0$ for all $y \in \{k+1,\ldots,n\}$.
  Restricting $f$, $\lambda$, and $d$ to the first $k$ coordinates, we can apply \cite[Theorem 3.1]{lewis1996convex} to see that the first $k$ coordinates of $d$ are non-decreasing.
  Since the smallest elements are $-\infty$, we see that $d$ has non-decreasing entries.

  Now let $\rho,\rho'\in\Dens(\X)$,
  $d\in\extsub f(\lambda(\rho))$, and $U\in\lin(\X\to\X)$ unitary such that $\rho = U \Diag(\lambda(\rho)) U^*$.
  Let $k$ be the integer above for $d$ and $\lambda(\rho)$, and let $\hat d \in \reals^k$ be the first $k$ entries of $d$.
  Let $dF_\rho = U \Diag(d) U^*$.
  By the above, we have $\inprod{dF_\rho}{\rho} = \inprod{\hat d}{\lambda(\rho)_{1..k}} = \inprod{d}{\lambda(\rho)} \in \reals$.

  Now consider $\inprod{dF_\rho}{\rho'}$.
  By the proof of Lemma~\ref{lem:extherm}, we may write $dF_\rho = A - \infty B$ where $AB=0$ and $B\in\Pos(\X)$.
  If $\inprod{B}{\rho'} > 0$, then $\inprod{dF_\rho}{\rho'} = -\infty$.
  Otherwise, $\inprod{B}{\rho'} = 0$, and we must have $B \rho' = 0$ by Lemma~\ref{lem:pos-zero-inner-prod}.
  In particular, $(U^* \rho' U)_{yy'} = 0$ if $d_y = -\infty$ or $d_{y'} = -\infty$.
  Let $C$ be the first $k\times k$ block of $(U^* \rho' U)$.
  Then $\inprod{dF_\rho}{\rho'} = \inprod{\Diag(\hat d)}{C} \geq \inprod{\hat d}{\lambda(\rho')_{1..k}} = \inprod{d}{\lambda(\rho')}$, where the inequality follows from Lemma~\ref{lem:pos-zero-inner-prod}.

  Combining the above, we have
  \begin{align*}
    F(\rho')
    &= f(\lambda(\rho'))
    \\
    &\geq f(\lambda(\rho)) + \inprod{d}{\lambda(\rho')} - \inprod{d}{\lambda(\rho)}
    \\
    &\geq f(\lambda(\rho)) + \inprod{dF_\rho}{\rho'} - \inprod{d}{\lambda(\rho)}
    \\
    &= F(\rho) + \inprod{dF_\rho}{\rho'} - \inprod{dF_\rho}{\rho}
    \\
    &= F(\rho) + \inprod{dF_\rho}{\rho' - \rho}~.
  \end{align*}
  As $\rho$ was arbitrary, we conclude $dF_\rho \in \extsub F(\rho)$.
\end{proof}

The following is an analog of \citet[Corollary 3.3]{lewis1996convex} for the extended linear case.
\lemFfconvex*
\begin{proof}
  As in the proof of Lemma~\ref{lem:spectral-convex-subgradients}, $F$ being (strictly) convex immediately implies the same of $f$.
  For the converse, Lemma~\ref{lem:spectral-convex-subgradients} gives that $F$ is convex if $f$ is.
  To show strictness, suppose that $f$ is strictly convex.
  By \cite[Lemma 3.11]{waggoner2021linear},
  a convex function is strictly convex if and only if the (extended) subgradient inequality between $x,y$ is strict unless $x=y$.
  Accordingly, let $\rho,\rho'\in\Dens(\X)$, $\rho\neq \rho'$, and let $U \Diag(d) U^*\in\extsub{F}(\rho)$.
  By the subgradient inequality and Lemma~\ref{lem:inner-prod-eigenvalues}, we have
  \begin{align*}
    F(\rho') - F(\rho)
    &= f(\lambda(\rho')) - f(\lambda(\rho))
    \\
    &\geq \inprod{d}{\lambda(\rho')-\lambda(\rho)}
    \\
    &\geq \inprod{U \Diag(d) U^*}{\rho'-\rho}~.
  \end{align*}
  If $\lambda(\rho) \neq \lambda(\rho')$, the first inequality is strict by strict convexity of $f$.
  Otherwise $\lambda(\rho)=\lambda(\rho')$, and we must have $\inprod{U \Diag(d) U^*}{\rho'} < \inprod{d}{\lambda(\rho')}$ from Lemma~\ref{lem:inner-prod-eigenvalues}, so the second inequality is strict.
\end{proof}

\propeliciden*
\begin{proof}
  If some classical scoring rule $\hat s:\R\times\Y\to\reals$ elicits $\Gammafix$, we can simply take the fixed-measurement quantum score $\Sfix = (\hat s,\mu:\R\mapsto\mufix)$.
  Then $\Sfix(r;\rho) = \hat s(r;\phi \rho)$.
  Thus, for all $\rho\in\Dens(\X)$,
  \begin{equation*}
    \argmin_{r\in\R} S(r;\rho)
    = 
    \argmin_{r\in\R} \hat s(r;\phi \rho)
    = 
    \Gammafix(\phi \rho)
    =
    \Gamma(\phi^+ \phi \rho)
    =
    \Gamma(\rho)~,
  \end{equation*}
  so $\Sfix$ elicits $\Gamma$.

  Conversely, let $S$ be a quantum score eliciting $\Gamma$.
  Define $\hat s(r,y) = (\phi^{+*} Z_S(r))_y \in \reals$.
  Then we have
  \begin{align*}
    \hat s(r;\phi \rho)
    &= \sum_{y\in\Y} (\phi \rho)_y \hat s(r,y)
    = \sum_{y\in\Y} (\phi \rho)_y (\phi^{+*} Z_S(r))_y
    \\
    &= \inprod{\phi^{+*} Z_S(r)}{\phi \rho}
    = \inprod{Z_S(r)}{\phi^+ \phi \rho}
    = \inprod{Z_S(r)}{\rho}
    = S(r;\rho)~.
  \end{align*}
  Just as above, for all $p\in\Pfix$,
  \begin{equation*}
    \argmin_{r\in\R} \hat s(r;p)
    =
    \argmin_{r\in\R} \hat s(r;\phi \phi^+ p)
    = 
    \argmin_{r\in\R} S(r;\phi^+ p)
    = 
    \Gamma(\phi^+ p)
    = 
    \Gammafix(p)~,
  \end{equation*}
  so $\hat s$ elicits $\Gammafix$.

  The argument for identifiability follows similarly.
  Suppose $v:\R\to\reals^{k\times\Y}$ is an identification function for $\Gammafix$.
  Define $V:\R\to\Herm(\X)^k$ by $V(r)_i = \phi^* v(r)_i$, where $v(r)_i$ is the $i$th row of $v(r)$.
  Then for all $r\in\R$, $\rho\in\Dens(\X)$, and $i \in \{1,\ldots,k\}$,
  \begin{equation*}
    \inprod{V(r)_i}{\rho} = 0
    \iff
    \inprod{\phi^* v(r)_i}{\rho} = 0
    \iff
    \inprod{v(r)_i}{\phi \rho} = 0~.
  \end{equation*}
  Thus, $\inprod{V(r)}{\rho} = 0 \iff
    \Gammafix(\phi \rho) = r
    \iff
    \Gamma(\rho) = r$.

  Conversely, if $V:\R\to\Herm(\X)$ is an identification function for $\Gamma$, let $v(r)_i = \phi^{+*} V(r)_i$.
  Then for all $r\in\R$, $p\in\Pfix$, and $i \in \{1,\ldots,k\}$,
  \begin{equation*}
    \inprod{v(r)_i}{p} = 0
    \iff
    \inprod{\phi^{+*}V(r)_i}{p} = 0
    \iff
    \inprod{V(r)_i}{\phi^+ p} = 0~.
  \end{equation*}
  Again, we have $\inprod{v(r)}{p} = 0 \iff \Gamma(\phi^+ p) = r \iff \Gammafix(p) = r$.
\end{proof}

\end{document}